\titleformat{\subsubsection}[runin]{\normalfont\bfseries}{\thesubsubsection}{1em}{}
\newtheorem{theorem}{Theorem}[section]
\newtheorem{definition}{Definition}
\newtheorem{lemma}{Lemma}
\newtheorem{remark}{Remark}
\newtheorem{example}{Example}
\newtheorem{proposition}[theorem]{Proposition}  
\renewcommand{\t}{\mathfrak{t}}
\renewcommand{\a}{\mathfrak{a}}
\newcommand{\pp}{\mathfrak{p}}
\newcounter{para}[section]
\newcommand{\ad}{\operatorname{ad}}
\newcommand{\Ad}{\operatorname{Ad}}
\newcommand{\h}{\mathfrak{h}}
\newcommand{\oprocendsymbol}{\hbox{$\bullet$}}
\newcommand{\oprocend}{\relax\ifmmode\else\unskip\hfill\fi\oprocendsymbol}
\newcommand{\g}{\mathfrak{g}}
\newcommand{\rp}{{\rm p}}
\newcommand{\R}{\mathbb{R}}
\newcommand{\N}{\mathbb{N}}
\renewcommand{\cal}{\mathcal}
\newcommand{\dep}{\operatorname{dep}}
\newcommand{\SP}{\operatorname{Sp}}
\newcommand{\SL}{\operatorname{SL}}
\newcommand{\SO}{\operatorname{SO}}
\newcommand{\SU}{\operatorname{SU}}
\renewcommand{\sl}{\mathfrak{sl}}
\newcommand{\so}{\mathfrak{so}}
\newcommand{\tr}{\text{Tr}}
\renewcommand{\tr}{\operatorname{tr}}
\newcommand{\C}{\mathbb{C}}
\renewcommand{\span}{\operatorname{Span}}
\definecolor{BBlue}{cmyk}{.98,0.10,0,.25}
\begin{document}

\title{Structure Theory for Ensemble Controllability, Observability, and Duality}
%\rule{\textwidth}{0.4mm}}
\date{}
\maketitle

\vspace{-2cm}
\begin{flushright}
{\small {\bf Xudong Chen\footnote[1]{X. Chen is with the ECEE Dept., CU Boulder. {\em Email: xudong.chen@colorado.edu}.} 
}}
\end{flushright}

\begin{abstract}
Ensemble control deals with the problem of using a finite number of control inputs to simultaneously steer a large population (in the limit, a continuum) of control systems. Dual to the ensemble control problem, ensemble estimation deals with the problem of using a finite number of measurement outputs to estimate the initial state of every individual system in the ensemble.   
We introduce in the paper a novel class of ensemble systems, termed {\em distinguished ensemble systems}, and establish sufficient conditions for  controllability and observability of such systems. 

Every distinguished ensemble system has two key components, namely a set of {\em distinguished control vector fields} and a set of {\em codistinguished observation functions}. Roughly speaking, a set of vector fields is distinguished if it is closed (up to scaling) under Lie bracket, and moreover, every vector field in the set can be obtained by a Lie bracket of two vector fields in the same set. Similarly, a set of functions is codistinguished  to a set of vector fields if the Lie derivatives of the functions along the given vector fields yield (up to scaling)  the same set of functions.  We demonstrate in the paper that the structure of a distinguished ensemble system can significantly simplify the analysis of ensemble controllability and observability. Moreover, such a structure can be used as a guiding principle for ensemble system design.  

We further address in the paper the problem about existence of a distinguished ensemble system for a given manifold. We provide an affirmative answer for the case where the manifold is a connected semi-simple Lie group. Specifically, we show that every such Lie group admits a set of distinguished vector fields, together with a set of codistinguished functions. The proof is constructive, leveraging the structure theory of semi-simple real Lie algebras and representation theory. Examples will be provided along the presentation of the paper illustrating key definitions and main results.     
\end{abstract}

\noindent{\bf Key words:} Ensemble systems; Controllability; Observability;  Structure theory of Lie algebra; Representation theory

\section{Introduction}\label{sec:introduction}

We address in the paper controllability  and observability of a continuum ensemble of control systems. Roughly speaking, ensemble control deals with the problem of using a {\em finite} number of control inputs to simultaneously steer a large population (in the limit, a {\em continuum}) of control systems. These individual control systems may be structurally identical, but show variations in their tuning parameters. Dual to ensemble control, ensemble estimation deals with the problem of estimating the state of {\em every} individual control system in the ensemble using only a {\em finite} number of measurement outputs. We refer the reader to Fig.~\ref{fig:opening} for an illustration of a continuum ensemble of control systems indexed by a parameter of a two-dimensional surface. 
Note that any finite ensemble of control systems can be viewed as a proper subsystem of the continuum ensemble. Controllability (or observability) of the continuum ensemble will guarantee the controllability (or observability) of any such finite subsystem of it. 

\begin{wrapfigure}[21]{r}{7.5cm}\vspace{-1.2cm}
\begin{center}
\includegraphics[width = 0.45\textwidth]{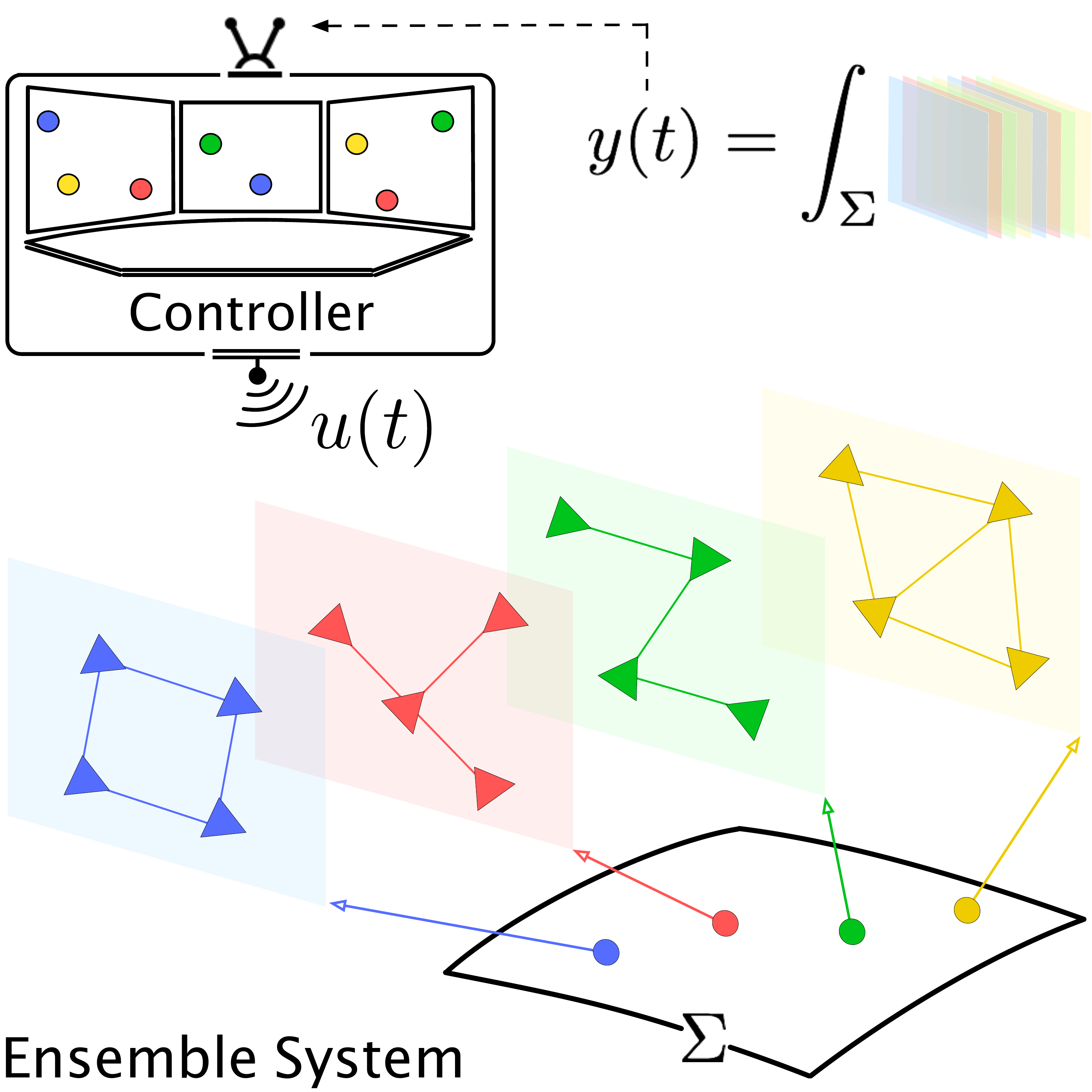}
\caption{\small  A continuum ensemble of systems indexed by a parameter of a surface~$\Sigma$. %i.e., each point of $\Sigma$ corresponds to a single networked system. 
A controller broadcasts a signal $u(t)$ as a common control input to  steer every individual system in the ensemble. Meanwhile, it receives a measurement output $y(t)$ integrating the information of individual states of all the systems.}\label{fig:opening}
\end{center}
\end{wrapfigure}

The framework of ensemble control and estimation naturally has many applications across various disciplines in engineering and science. The individual control systems in the ensemble can be used to model, for example, spin dynamics that are controlled by a magnetic field~\cite{glaser1998unitary,appelbaum2016spin}, molecules that respond to external stimuli such as light~\cite{yu2003photomechanics} and heat~\cite{taniguchi2018walking}, or micro-robotics that are steered by a broadcast control signal~\cite{becker2012approximate}.   
We further note that an individual control system does not necessarily have only one single physical entity, but rather it can comprise multiple interacting components (or agents). In this case, every individual control system is itself a networked control system (or a multi-agent system). For example, a mathematical model for a continuum ensemble of multi-agent formation systems has recently been proposed and investigated in~\cite{chen2018controllability}.

Many existing ensemble control and estimation theories deal only with linear ensembles (i.e., ensembles of linear control systems). For nonlinear ensembles, the literature is relatively sparse on controllability, and much less on observability. There is also a lack of methodologies for designing nonlinear dynamics of individual control systems so that an  ensemble of such systems is controllable and observable. 
To address the above issues, we introduce in the paper a novel class of nonholonomic ensemble systems, termed {\em distinguished ensembles}. Every such ensemble system has two key components: a set of finely structured control vector fields, termed {\em distinguished vector fields}, and a set of co-structured observations functions, termed {\em codistinguished functions}. Details about the structure of a distinguished ensemble will be provided below. 
We will demonstrate that controllability and/or observability of a distinguished ensemble system can be easily fulfilled under some mild assumption. The first half of the paper is devoted to establishing the fact.     
For the second half, we will investigate the problem about existence of a distinguished ensemble. We focus on the case where the state space of every individual system is a Lie group or its homogeneous space. We leverage tools from structure theory of semi-simple real Lie algebras and representation theory to construct explicitly distinguished vector fields and codistinguished functions on those spaces.  

We introduce below the model of a distinguished ensemble system in details. We also provide literature review and outline the contributions of the paper.      

\subsection{Mathematical models for ensemble control and estimation} 
The model of an ensemble system considered in the paper comprises two parts, namely ensemble control and ensemble estimation. We introduce these two parts subsequently.  
  
{\em Model for ensemble control.} 
We consider a continuum ensemble of control systems indexed by a parameter~$\sigma\in \Sigma$, where~$\Sigma$ is the {\bf parametrization space}. We assume in the paper that~$\Sigma$ is compact, analytic, and path-connected. If an individual control system in the ensemble is associated with index~$\sigma$, then we call it {\bf system-$\sigma$}. The state space of each individual system is the same, which we denote by~$M$. We assume that~$M$ is analytic. 
Further, let $x_\sigma(t)\in M$ be the state of system-$\sigma$ at time~$t$. Then, 
in general, the control model of an ensemble system can be described by the following differential equation:  
\begin{equation}\label{eq:generalensemblemodel}
\dot x_\sigma(t) := \frac{\partial x_\sigma(t)}{\partial t}=  f(x_\sigma(t), \sigma, u(t)), \quad x_\sigma\in M \mbox{ for all } \sigma \in \Sigma,  
\end{equation}
where $u(t)$ is a finite dimensional control input common to {\em all} of the individual control systems, and~$f$ is an analytic vector field.

Let $x_\Sigma(t):= \{x_\sigma(t)\mid \sigma\in \Sigma\}$ be the collection of system states. We call $x_\Sigma(t)$ a {\bf profile}. Let ${\rm C}^\omega(\Sigma, M)$ be the space of analytic functions from~$\Sigma$ to~$M$. We assume that for any given~$t$, the profile $x_\Sigma(t)$ belongs to ${\rm C}^\omega(\Sigma, M)$. We call ${\rm C}^\omega(\Sigma, M)$ the {\bf profile space}.  
Ensemble controllability is then about the ability of using the common control input $u(t)$ to steer ensemble system~\eqref{eq:generalensemblemodel} from an arbitrary initial profile~$x_\Sigma(0)$ to any target profile~$\hat x_\Sigma$ at any given time~$T > 0$. A precise definition will be provided in Def.~\ref{def:ensemblecontrollability}, Section~\S\ref{sec:controlobserverdual}. %.  
%Although in the paper we mainly deal with analytic ensemble systems, the results can be extended to the smooth case with a few more technical details of arguments. 
%++++ Cauchy-Kowalevski theorem +++++

%The controllability problem for the general ensemble system~\eqref{eq:generalensemblemodel} is hard. 
We focus in the paper on a special class of ensemble systems, namely systems such that the vector fields $f$ are separable in state~$x_\sigma(t)$, the parameter~$\sigma$, and the control input~$u(t)$. Specifically,  we consider the following type of ensemble system:    
\begin{equation}\label{eq:controlonly}
\dot x_\sigma(t) = f_0(x_\sigma(t), \sigma) + \sum^m_{i = 1} \sum^{r}_{s = 1} u_{i,s}(t)\rho_{s}(\sigma) f_i (x_\sigma(t)), \quad  x_\sigma\in M \mbox{ and } \sigma\in \Sigma, 
\end{equation}
where $f_0$ is a drifting term, the $f_i$'s are {\bf control vector fields} depending only on~$x_\sigma(t)$, the $\rho_i$'s are {\bf parametrization functions} defined on~$\Sigma$, and the $u_{i,s}$'s are scalar control inputs. We assume in the paper that all the vector fields and parametrization functions are analytic in their variables. All the control inputs are integrable functions over any finite time interval.  For convenience, we let $u(t)$ be the collection of all the $u_{i,s}(t)$'s.

{\em Model for ensemble estimation.} %We next describe the dual ensemble estimation model. 
%For the dual ensemble estimation problem, 
We assume that there are~$l$ (scalar) measurement outputs $y^j(t)$, for $j = 1, \ldots, l$, at our disposal. % the problem of ensemble observability is about whether or not one can estimate the entire initial  profile $x_\Sigma(0)$.     
Each~$y^j(t)$ is a certain average of an observation function~$\phi^j(x_\sigma(t))$ over the parametrization space~$\Sigma$. Specifically, we let $\Sigma$ be equipped with a positive Borel measure, and each $\phi^j$, for $j = 1,\ldots, l$, be an analytic function defined on~$M$. Then, the measurement outputs $\{y^j(t)\}^l_{j = 1}$ are described by% the following integrals:    
$$
y^j(t) =\displaystyle\int_\Sigma \phi^j(x_\sigma(t)) d\sigma,     \quad j = 1,\ldots, l.
$$
For convenience, let $y(t)$ be the collection of the $y^j(t)$'s. 
%Such type of measurement output arises naturally in physics. Consider, for example, the temperature or pressure in classical thermodynamics, the NMR spectrum of an ensemble of spin systems, the  far-field intensity in Laser beam combination. 
The question related to ensemble observability is to ask whether one can use a certain control input $u(t)$ to excite system~\eqref{eq:controlonly} and then, estimate  $x_\Sigma(0)$ from $y(t)$. A precise definition will be provided in Section~\S\ref{sec:controlobserverdual}.

{\em Model for an ensemble system.} Combining the above two parts,  we arrive at the following mathematical model of an ensemble system:  
\begin{equation}\label{eq:ensemblesystem0}
\left\{
\begin{array}{lr}
\dot x_\sigma(t) = f_0(x_\sigma(t), \sigma) + \sum^m_{i = 1}\sum^r_{s = 1}u_{i,s}(t)\rho_{s}(\sigma) f_i (x_\sigma(t)),  & \forall \sigma\in \Sigma, \vspace{3pt}\\
  y^j(t)  =\displaystyle\int_\Sigma \phi^j(x_\sigma(t)) d\sigma,    & \forall j = 1,\ldots, l. \\
 \end{array}
 \right. 
\end{equation}
Examples of the above system will be given along the  presentation.

\subsection{Distinguished structure and examples}

A major contribution of the paper is to introduce a novel class of 
 nonholonomic ensemble systems~\eqref{eq:ensemblesystem0}, termed {\em distinguished ensembles}. Every such ensemble system has two key components: a set of distinguished control vector fields $\{f_i\}^m_{i = 1}$ and a set of codistinguished observation functions $\{\phi^j\}^l_{j = 1}$.  
%structure on the control vector fields and the observation functions of  system~\eqref{eq:ensemblesystem0}, %termed as {\em distinguished ensemble system}, which is comprised of two key components,  
Roughly speaking,  
a set of  vector fields $\{f_i\}^m_{i = 1}$ is said to be {\em distinguished} if the Lie bracket of any two vector fields $f_i$ and $f_j$ is, up to scaling, another vector field~$f_k$, i.e., $[f_i,f_j]= \lambda f_k$ for $\lambda$ a constant,  and conversely, any vector field $f_k$ in the set can be obtained in this way. Similarly, a set of functions $\{\phi^j\}^l_{j = 1}$ is said to be {\em codistinguished} to the vector fields $\{f_i\}^m_{i = 1}$ if the Lie derivative of any $\phi^j$ along any $f_i$ is, up to scaling, another function~$\phi^k$, i.e., $f_i\phi^j = \lambda \phi^k$ for $\lambda$ a constant, and conversely, any function $\phi^k$ in the set can be obtained in this way    
(see Def.~\ref{def:distvecf} and Def.~\ref{def:codisth},  Section~\S\ref{ssec:keydefinition} for details). %We call system~\eqref{eq:ensemblesystem0} a {\em distinguished ensemble system} if 
%We demonstrate in the paper that distinguished vector fields and codistinguished functions are key components in establishing ensemble controllability and observability of~\eqref{eq:controlonly}. %Furthermore, the structure of a distinguished ensemble system can serve as a principle for designing a controllable and observable ensemble system.   

%{\em Examples of distinguished sets of Lie algebras.} 
We note here that although the notion of a ``distinguished set'' of a Lie algebra appears to be new, such set arises naturally in many places. Here are a few examples: 

{\em 1)} When dealing with the rigid motions of a three dimensional object with a fixed center, we have that the infinitesimal motions of rotations around three axes of an orthonormal frame~$\Theta \in \SO(3)$ are given by 
$$
f_1(\Theta) = \Theta\Omega_{23}, 
%\begin{bmatrix}
% 0 & 0 & 0 \\
% 0 & 0 & 1\\
% 0 & -1 & 0 
%\end{bmatrix}
\qquad
f_2(\Theta) := \Theta \Omega_{31},
%\begin{bmatrix}
% 0 & 0 & -1 \\
% 0 & 0 & 0\\
% 1 & 0 & 0 
%\end{bmatrix} 
\qquad
 f_3(\Theta) := 
\Theta \Omega_{12},
%\begin{bmatrix}
% 0 & 1 & 0 \\
% -1 & 0 & 0\\
% 0 & 0 & 0 
%\end{bmatrix}, 
$$
where each $\Omega_{ij}$ is a skew-symmetric matrix with $1$ on the~$ij$th entry, $-1$ on the $ji$th entry, and~$0$ elsewhere. 
By computation, $[f_i, f_j] = f_k$ where $(i,j,k)$ is any cyclic rotation of $(1,2,3)$. Thus, the above vector fields form a distinguished set.

{\em 2)} In quantum mechanics, the Pauli spin matrices are used to represent angular momentum operators. We recall that they are given by 
$$
\sigma_1 := 
\begin{bmatrix}
0 & 1 \\
1 & 0
\end{bmatrix} \qquad
\sigma_2 := \begin{bmatrix}
0 & -\mathrm{i} \\
\mathrm{i} & 0 
\end{bmatrix}\qquad
\sigma_3 := \begin{bmatrix}
1 & 0\\
0 & -1
\end{bmatrix},
$$ 
where $\mathrm{i}$ is the imaginary unit. Similarly, if $(i,j,k)$ is a cyclic rotation of $(1,2,3)$, then $[\sigma_i, \sigma_j ]_{\rm M} = 2\mathrm{i}\sigma_k$ where $[\cdot, \cdot]_{\rm M}$ denotes the matrix commutator. Although the constant~$2\mathrm{i}$ is not real, one can multiple all the three matrices by~$\mathrm{i}$ so that the new set $\{\mathrm{i}\sigma_i\}^3_{i = 1}$ now satisfies 
$[\mathrm{i}\sigma_i,  \mathrm{i}\sigma_j ]_{\rm M} = -2\mathrm{i}\sigma_k$. Note that the set of matrices $\{\mathrm{i}\sigma_i\}^3_{i = 1}$ belongs to $\mathfrak{su}(2)$ i.e., the Lie algebra associated with the special unitary group ${\rm SU}(2)$. However, we shall note that $\mathfrak{su}(2)$ is isomorphic to $\so(3)$.  

{\em 3)} We also note that the ladder operators represented by the following matrices in the special linear Lie algebra $\sl(2,\R)$: 
$$
H := 
\begin{bmatrix}
1 & 0 \\
0 & -1
\end{bmatrix} \qquad
X :=
\begin{bmatrix}
0 & 1 \\
0 & 0
\end{bmatrix} \qquad
Y:= 
\begin{bmatrix}
0 & 0 \\
1 & 0
\end{bmatrix}
$$ 
satisfy the desired property: $[H, X]_{\rm M} = 2X$, $[H, Y]_{\rm M} = -2Y$, and $[X, Y]_{\rm M} = H$.  
 
%{\em 4)} We further recall that the standard unicycle model (Dubins car) is given by
%$$
%\begin{bmatrix}
%\dot x_1 \\
%\dot x_2 \\
%\dot \theta
%\end{bmatrix} = 
%u_1 
%\begin{bmatrix}
%\cos\theta \\
%\sin\theta \\
%0
%\end{bmatrix} + u_2 
% \begin{bmatrix}
%0\\
%0 \\
%1
%\end{bmatrix}. 
%$$
%Although the set of the two control vector fields is not closed under Lie bracket:
%$$  
%\left [f_1 := 
%\begin{bmatrix}
%\cos\theta \\
%\sin\theta \\
%0
%\end{bmatrix}, 
%f_2: = 
% \begin{bmatrix}
%0\\
%0 \\
%1
%\end{bmatrix}\right ] = f_3:= 
%\begin{bmatrix}
%\sin\theta \\
%-\cos\theta \\
%0
%\end{bmatrix},
%$$
%the above three of them are; indeed, $[f_1, f_3] = 0$ and $[f_2, f_3] = f_1$. However, $\{f_i\}^3_{i = 1}$ is {\em not} qualified as a distinguished set because~$f_2$ cannot be generated by a Lie bracket of any two vector fields in the set. %Nevertheless, a (continuum) ensemble of unicycles is  
%

The examples given above demonstrate the existence of distinguished sets in Lie algebras $\mathfrak{so}(3) \approx \mathfrak{su}(2)$ and $\sl(2, \R)$. In fact,  we have shown in~\cite{chen2017distinguished} that every semi-simple {\em real} Lie algebra has a distinguished set. We review such a fact in Subsection~\S\ref{ssec:distinguishedsetofliealgebra}. %We will 

%Note that for codistinguished functions, the definitio naturally involves Lie algebra representation. We give examples along the presentation.    

\subsection{Literature review}
Amongst related works about controllability of nonlinear ensembles, we first mention~\cite{li2006control,JSL-NK:09} by Li and Khaneja  in which the authors establish the controllability of  an ensemble of Bloch equations parametrized by a pair of scalar parameters $(\sigma_1,\sigma_2)$ over $[a_1,b_1]\times [a_2, b_2]$ in $\R^2$:   
$$
\dot x(t) = (\sigma_1 \Omega_{12} +  u_1(t) \sigma_2 \Omega_{13} +  u_2(t)\sigma_2 \Omega_{23} )x(t). 
$$
%Note that the above model is a typical example of~\eqref{eq:controlonly} where $\omega \Omega_{12} x(t)$ is the drifting term, $\Omega_{13}x(t)$ and $\Omega_{23}x(t)$ are the control vector fields. %and there is a single parametrization function given by $\rho(\sigma) = \sigma$.   
Ensemble control of Bloch equations has also been addressed in~\cite{beauchard2010controllability} using tools from functional analysis.  
We further  note that the controllability of a general ensemble of control-affine systems has been recently addressed in~\cite{agrachev2016ensemble}, in which the authors established an ensemble version of Rachevsky-Chow theorem via a Lie algebraic method.  
We do not to intend to reproduce in the paper the results established there, but rather our contribution related to ensemble controllability is to demonstrate that if the set of control vector fields $\{f_i\}^m_{i = 1}$  is distinguished, then the ensemble version of Rachevsky-Chow criterion can be easily verified in analysis and fulfilled in system design. 
For ensemble control of linear systems, we refer the reader to~\cite{li2011ensemble,helmke2014uniform},~\cite[Ch.~12]{fuhrmann2015mathematics} and references therein.  We further refer the reader to~\cite{brockett2007optimal,chen2016optimal1,chen2016optimal2} for optimal control of probability distributions evolving along  linear systems.   

%{\em Ensemble observability.} %To the best of author's knowledge, 
Observability of a continuum ensemble system has been mostly addressed within the class of linear systems. We first refer the reader to~\cite[Ch.~12]{fuhrmann2015mathematics} 
where the following ensemble of linear systems is investigated: $$\dot x_\sigma(t)  = A(\sigma) x_\sigma(t) \in \R^n, \quad y(t) =  \int_\Sigma C(\sigma) x_\sigma(t) d\sigma\in \R^l.$$ 
The authors addressed the observability of the above ensemble system using the duality between controllability and observability of infinite-dimensional linear systems~\cite{curtain2012introduction}.  
We also refer the reader to~\cite{zeng2016ensemble} for a related observability problem about estimating the probability distribution of the initial state. Specifically, the authors there considered a {\em single} time-invariant linear system:
$\dot x(t) = A x(t) + Bu(t)$ and  $y(t) = C x(t)$.  
An initial probability distribution~$p_0$ of~$x\in \R^n$ induces a distribution $\bar p_t$ of $y(t)$ for a given control input~$u(t)$. The observability problem addressed there is whether one is able to estimate $p_0$ given that the entire distributions $\bar p_t$, for all $t \ge 0$, are known. 
%For observability of nonlinear ensemble systems, we refer the reader to~\cite{} where we focussed on a specific ensemble of Bloch equations and investigated observability and system identifiability.  
We further refer the reader to~\cite{hermann1977nonlinear,gauthier1981observability,van1982observability,gauthier1994observability,de2000observability} for the study of observability of a {\em single} nonlinear system using the so-called observability codistribution.

\subsection{Outline of contribution and organization of the paper}

The technical contribution of the paper is two-fold: {\em 1)} We establish a structure theory for controllability and observability of a distinguished ensemble system. {\em 2)} We prove the existence of distinguished ensemble systems over semi-simple Lie groups. 

%The first item has been addressed in the previous subsection.  
%We provide below more details for the last two items.  

%{\em (i). Introduction of distinguished ensemble systems.} We have address this item in the previous subsection. 

{\em Structure theory.} We establish in Section~\S\ref{sec:controlobserverdual} a sufficient condition for controllability and observability of a distinguished (and pre-distinguished) ensemble system. In particular, 
we demonstrate how distinguished vector fields and codistinguished functions can simplify the analysis and lead to ensemble controllability and observability. %Specifically, we show that having such a structure simplifies the analyses by transposing the controllability (or observability) problem to a problem of finding a separating set of ${\rm L}^2(\Sigma)$, i.e., the Hilbert space of square-integrable functions on~$\Sigma$. 
 The structure theory established in the paper also provides a solution to the problem of ensemble system design---i.e., the problem of co-designing the control vector fields $f_i$'s, the observations functions $\phi^j$'s, and the parametrization functions $\rho_s$'s so that system~\eqref{eq:ensemblesystem0} is controllable and/or observable. In particular, it divides the  problem into two independent subproblems---one is about finding a set of distinguished vector fields $\{f_i\}^m_{i = 1}$ and a set of codistinguished function $\{\phi^j\}^l_{j = 1}$ over the given manifold~$M$ while the other is about finding a set of parametrization functions $\{\rho_s\}^r_{s =1}$ that separates points of the parametrization space~$\Sigma$. %${\rm L}^2(\Sigma)$, i.e., the space of all square-integrable functions. %Thus, the structure theory can be viewed as a guiding principle for designing controllable and observable ensemble systems. 

{\em  Existence of distinguished ensembles.}  We prove in Section~\S\ref{sec:onliegroup} that every semi-simple Lie group~$G$ admits a set of  distinguished vector fields, together with a set of codistinguished functions. %over a given manifold~$M$. We provide an affirmative answer for the case where~$M$ is an arbitrary semi-simple Lie group~$G$. %A  few commonly seen examples include special unitary group $\SU(n)$, special orthogonal group $\SO(n)$, special linear group $\SL(n)$, symplectic group $\SP(2n, \R)$, etc.  
The proof of the existence result is constructive: {\em 1)}~For distinguished vector fields, we leverage the result established in~\cite{chen2017distinguished} where we have shown how to construct a distinguished set on the Lie algebra level. %A sketch of the construction is provided in Subsection~\S\ref{ssec:distinguishedsetofliealgebra}. One then 
We then identify the distinguished set with the corresponding set of left- (or right-) invariant  vector fields over the group~$G$.  
{\em 2)}~For codistinguished functions, we show how to generate these functions using representation theory. In particular, we show in Section~\S\ref{ssec:matrixcoeffi} that a selected set of matrix coefficients associated with a finite dimensional Lie group representation could be used as a set of codistinguished functions (with respect to a set of left-invariant vector fields). Then, in Section~\S\ref{ssec:adjointrepresentationcodist}, we focus on a special representation, namely the adjoint representation. We show,  in this case, that there indeed exists a set of matrix coefficients as codistinguished functions. In particular, if~$G$ is a matrix Lie group, then these matrix coefficients are simply given by $\phi^{ij}(g) = \tr(g X_j g^{-1} X_i^\top)$ where $X_i$ and $X_j$ are selected matrices out of the Lie algebra~$\g$ of~$G$. We further address, in Section~\S\ref{ssec:homogeneousspace}, the existence problem for homogeneous spaces.    

We provide key definitions and notations in Section~\S\ref{sec:definitionsandnotations} and conclusions  at the end. 

%{\em Organization of the paper.} In Section~\S\ref{sec:definitionsandnotations}, we introduce key definitions and notations used in the paper. Next, in Section~\S\ref{sec:controlobserverdual}, we introduce formally distinguished vector fields and codistinguished functions. We also establish in the section a sufficient condition for controllability and observability of a distinguished ensemble system. Then, in Section~\S\ref{sec:onliegroup}, we address the existence of distinguished vector fields and codistinguished functions on Lie groups and their homogeneous spaces. We provide conclusions at the end.     

\section{Definitions and Notations}\label{sec:definitionsandnotations}

% We introduce in the section key definitions and notations used in the paper. We also recall a few known facts in differential geometry and Lie groups/algebras.  
 
 \subsection{Geometry, topology, and algebra}
{\em 1. Manifolds.} Let $M$ be a real analytic manifold. For a point $x\in M$, let $T_xM$ be the tangent space and $T^*_x M$ be the cotangent space of~$M$ at~$x$. Let $TM:= \cup_{x\in M} T_xM$  be the tangent bundle and $T^*M := \cup_{x\in M} T^*_xM$ be the cotangent bundle. 
     
Let ${\rm C}^\omega(M)$ be the set of real analytic functions on~$M$. Denote by ${\bf 1}_M\in {\rm C}^\omega(M)$ the constant function whose value is $1$ everywhere.  Let $\mathfrak{X}(M)$ be the set of analytic vector fields over $M$. Let $\phi\in {\rm C}^\omega(M)$ and $f\in \mathfrak{X}(M)$. Denote by $f\phi\in {\rm C}^\omega(M)$ the Lie derivative of~$\phi$ along~$f$. 
If we embed $M$ into a Euclidean space, then $f\phi$ is simply given by
$$(f\phi)(x) := \lim_{\epsilon \to 0} \frac{\phi(x + \epsilon f(x)) - \phi(x)}{\epsilon}, \quad \forall x\in M.$$

For any $\phi\in {\rm C}^\omega(M)$, we let $d\phi\in T^*M$ be a one-form defined as follows:  Let $d\phi_x\in T^*_x M$ be the evaluation of $d\phi$ at~$x$. Then, for any $f\in \mathfrak{X}(M)$, we have that $d\phi_x(f(x)) = (f\phi)(x)$. For two vector fields $f_i, f_j\in \mathfrak{X}(M)$, we let $[f_i,f_j]$ be the Lie bracket, which is defined such that 
$[f_i, f_j] \phi = f_i f_j \phi  - f_j f_i \phi$  for all $\phi\in {\rm C}^\omega(M)$.  

Let $\{f_i\}^m_{i = 1}$ be a subset of $\mathfrak{X}(M)$. Let ${\rm w} = w_1\cdots w_k$ be a {\em word} over the alphabet $\{1,\ldots, m\}$ of length~$k$. For a function $\phi\in {\rm C}^\omega(M)$, we define $f_{\rm w}\phi := f_{w_1}\cdots f_{w_k}\phi$. If ${\rm w} = \varnothing$, i.e., an empty word (of zero length), then we set $f_{\rm w}\phi := \phi$.

Let $\eta: M\to N$ be a diffeomorphism. Denote by $\eta_*: TM\to TN$ the derivative of~$\eta$. For a vector field $f\in \mathfrak{X}(M)$, let $\eta_*f\in \mathfrak{X}(N)$ be the {\em pushforward} defined as 
$(\eta_*f)(y) := \eta_*(f(\eta^{-1}y))$ for all $y\in N$.  
For a function $\phi\in {\rm C}^\omega(N)$, let $\eta^*\phi \in {\rm C}^\omega(M)$ be the {\em pullback} defined as $(\eta^*\phi)(x) := \phi(\eta(x))$ for all $x\in M$.

{\em 2. Whitney topologies.}  Let $M$ be equipped with a Riemannian metric. Denote by $\operatorname{d}_M(x_1, x_2)$ the distance between two points $x_1$ and $x_2$ in~$M$. 
Let $\Sigma$ be an analytic, compact manifold, and ${\rm C}^\omega(\Sigma, M)$ be the space of analytic functions from $\Sigma$ to~$M$. The {\em Whitney ${\rm C}^0$-topology} on ${\rm C}^\omega(\Sigma, M)$ can be defined by a basis of open sets: First, recall that a profile $x_\Sigma= \{x_\sigma \mid \sigma\in \Sigma \}$ can be viewed as a function from $\Sigma$ to~$M$. Given a profile $x_\Sigma$ and a positive number $\epsilon$, we define an open set as follows:   
$$
\{\bar x_\Sigma \in {\rm C}^\omega(\Sigma, M) \mid  \operatorname{d}_M( x_\sigma, \bar x_\sigma) <  \epsilon, \quad \forall \sigma\in \Sigma \}.
$$
Then, a basis of open sets can be obtained by letting $x_\Sigma$  vary over ${\rm C}^\omega(\Sigma, M)$ and letting~$\epsilon$ vary over the set of all positive real numbers. Generally, one can also define the Whitney ${\rm C}^k$-topology for $0\le k\le \infty$; for that, one needs to introduce the notion of jet space. We omit here the details and refer the reader to~\cite[Ch.~2-Sec.~2]{golubitsky2012stable}.

{\em 3. Algebra of functions.} Let $\Sigma$ be an analytic, compact manifold and $\{\rho_s\}^r_{s = 1}$  
be a set of real-valued functions  on~$\Sigma$. %We call $\{\rho_s\}^r_{s = 1} $ a {\em separating set} if for any two distinct points $\sigma, \sigma' \in \Sigma$, there exists a function $\rho_s$ out of the set such that $\rho_s(\sigma) \neq \rho_s(\sigma')$. %A function $\rho_s$ is everywhere nonzero if $\rho_s(\sigma) \neq 0$ for all $\sigma\in \Sigma$. 
For any $k \ge 0$, let $\rho^k_s(\sigma) := \rho_s(\sigma)^k$. Note, in particular, that $\rho^0_s ={\bf 1}_\Sigma$. If $\rho_s$ is everywhere nonzero, then $\rho^k_s$ is defined for all $k\in \mathbb{Z}$. 
We call  $\prod^r_{s = 1}\rho^{k_s}_s$, for $k_s \ge 0$, a {\em monomial}. Its degree is define by $k:= \sum^r_{s = 1}k_s$. Let ${\cal P}$ be the collection of all monomials. We decompose ${\cal P} = \sqcup_{k \ge 0} {\cal P}(k)$, where ${\cal P}(k)$ is comprised of monomials of degree~$k$.   
Denote by ${\cal S}$ the subalgebra generated by the set of functions $\{\rho_s\}^r_{s = 1}$. It is defined such that if $\rp\in {\cal S}$, then~$\rp$ can be expressed as a linear combination of a finite number of monomials with real coefficients.

 \subsection{Lie groups, Lie algebras, and representations}
 
{\em 1. Lie groups and Lie algebras.} Let $G$ be a Lie group with $e$ the identity element. Let~$\g$ be the associated Lie algebra, and $[\cdot, \cdot]$ be the Lie bracket. We identify each element $X\in \g$ with a {\em left-invariant vector field} $L_X$ over~$G$, i.e., $L_X(g) = gX$ for any $g\in G$. Thus, $L_{[X, Y]} = [L_X, L_Y]$.  
Note that to each $X\in \g$, there also corresponds a right-invariant vector field $R_X$. For any $X, Y\in \g$, we have $R_{[X, Y]} = - [R_X, R_Y]$.

 A {\em subalgebra} $\h$ of $\g$ is a vector subspace closed under Lie bracket, i.e., $[\h, \h] \subseteq \h$. An {\em ideal} $\mathfrak{i}$ of $\g$ is a subalgebra such that $[\mathfrak{i}, \g] \subseteq \mathfrak{i}$. We say that $\g$ is {\em simple}  if it is non-abelian, and moreover, the only ideals of~$\g$ are~$0$ and itself. A {\em semi-simple} Lie algebra is a direct sum of simple Lie algebras. A {\em Cartan subalgebra} $\h$ of $\g$ is maximal among the abelian subalgebras $\h'$ of~$\g$ such that the adjoint representation $\ad(X):= [\cdot, X]$ are simultaneously diagonalizable (over $\C$) for all $X\in \h'$. 
 
Simple real Lie algebras have been completely classified (up to isomorphism) by \'Elie Cartan. One can assign to each simple real Lie algebra a  Vogan diagram~\cite[Ch.~VI]{knapp2013lie} or a Satake diagram~\cite{satake1960representations,araki1962root}, depending on whether a  maximally compact or a maximally non-compact Cartan subalgebra is used.  %for classification. %We refer the reader to for details. 
A few commonly seen examples include special unitary Lie algebra $\mathfrak{su}(n)$, special linear Lie algebra $\sl(n, \R)$,  
special orthogonal Lie algebra $\so(n)$, symplectic Lie algebra $\mathfrak{sp}(2n, \R)$, indefinite special orthogonal Lie algebra $\so(p,q)$ (e.g. $\so(1,3)$ is the Lie algebra of the Lorentz group $\operatorname{O}(1,3)$). A complete list of (non-complex) simple real Lie algebras can be found in~\cite[Thm.~6.105]{knapp2013lie}. More details about the structure theory of semi-simple real Lie algebras will be provided along the presentation of the paper whenever needed.

{\em 2. Lie group and Lie algebra representation.} 
%Let $G$ be a Lie group and $\g$ be the Lie algebra. We denote by $e$ the identity element of $G$. 
Let $V$ be a finite dimensional vector space over $\R$. Let $\operatorname{Aut}(V)$ and $\operatorname{End}(V)$ be the sets of automorphisms and endomorphisms of~$V$, respectively. 
%If a basis of $V$ is chosen, then each automorphism/endomorphism of $V$ can be realized as a square (real) matrix.  
A {\em representation} $\pi$ of $G$ on $V$, is a group homomorphism $\pi:G \to \operatorname{Aut}(V)$, i.e., $\pi(e)$ is the identity map and $\pi(gh) = \pi(g)\pi(h)$.    

Let $\langle \cdot, \cdot\rangle$ be an inner-product on $V$. 
 We say that the representation~$\pi$ is ${\rm C}^k$ (i.e., $k$th continuously differentiable) if the map $\pi: (g, v) \in G\times V \mapsto \pi(g)v\in V$ is~${\rm C}^k$. A {\em matrix coefficient}  is any ${\rm C}^k$-function on $G$ defined as
 $ \langle v_i, \pi(g)v_j \rangle$ where $v_i, v_j$ belong to~$V$.  In particular, if the $v_i$'s form an orthonormal basis of $V$, then 
 $ \langle v_i, \pi(g)v_j  \rangle$ is exactly the $ij$-th entry of the matrix $\pi(g)$ with respect to the given basis.

 A group representation $\pi$ induces a Lie algebra homomorphism $\pi_*: \g \to \operatorname{End}(V)$, where $\pi_*$ is the derivative of $\pi$ at the identity~$e\in G$. It satisfies the following condition:  $$\pi_*([X, Y]) = \pi_*(Y)\pi_*(X) - \pi_*(X)\pi_*(Y),\quad \forall X, Y\in \g.$$ 
 %defined by $\pi_{*}(X) := \lim_{t\to 0 } \nicefrac{\pi(\exp(Xt)) - \pi(e)}{t}$.  
 We call $\pi_*$ a {\em representation} of $\g$ on $V$, or simply a Lie algebra representation.   
  
Let $\Ad: G\to \operatorname{Aut}(\g)$ be the {\em adjoint representation}, i.e.,  
 %$$
 %\Ad(g)(X):= \lim_{t\to 0} \frac{ g \exp(Xt)g^{-1} - e}{t},
 %$$
 for each $g\in G$,  $\Ad(g): T_e G\to T_e G$ is the derivative of the conjugation $h\in G \mapsto ghg^{-1}\in G$ at the identity~$e$. 
Denote by $\ad:\g\to \operatorname{End}(\g)$ the induced Lie algebra representation of~$\Ad$, which is given by $\ad(X)(\cdot) = [\cdot, X]$ for all $X\in \g$.  

%If $G$ happens to be a matrix Lie group embedded in $\R^{n\times n}$, then the {\em standard representation} of~$G$ on $\R^n$ is simply given by $(g, v)\in G\times \R^n \mapsto gv\in \R^n$. 

 {\em 3. Lie products.} Let $A:= \{X_1,\ldots, X_k\}$ be a set of free generators, and $\mathfrak{L}(A)$ be the associated free Lie algebra. For a Lie product $\xi\in \mathfrak{L}(A)$, let  $\dep(\xi)$ be the {\em depth} of $\xi$ defined as the number Lie brackets in~$\xi$. For example, the depth of $[X_{i_1}, [X_{i_2},X_{i_3}]]$ is~$2$. We denote by ${\cal L}_A$ the collection of Lie products of the~$X_i$'s in $A$.   
% The {\em P. Hall basis}~\cite{serre2009lie} of $\mathfrak{L}(A)$ is a sequence of Lie products ${\cal L}_A := \{\xi_i\}^\infty_{i = 1}$ which satisfies the following three conditions: {\em (i).} The first $k$ Lie products are the $X_i$'s, i.e., ${\xi_i} = X_i$ for all $i = 1,\ldots, k$; {\em (ii).} If $\dep(\xi_i) < \dep(\xi_j)$, then $i < j$; {\em (iii).} Each $[\xi_i, \xi_j]$ belongs to ${\cal L}_A$ if and only if $\xi_i, \xi_j\in {\cal L}_A$ with $i < j$, and either $\xi_j = X_{j'}$ for some~$j'$ or $\xi_j = [\xi_l,\xi_r]$ with $\xi_l, \xi_r\in {\cal L}_A$ and $l \le i$. For example~\cite{murray1993steering}, if $k = 3$, then the Lie products $\xi$ in the P. Hall basis with $\dep(\xi) \le 2$ are given by
% $$
% \begin{array}{l}
% \,X_1 \quad X_2 \quad X_3 \quad [X_1, X_2] \quad [X_1, X_3] \quad [X_2, X_3] \\
% \,[X_1, [X_1, X_2]] \quad [X_1, [X_1, X_3]] \quad [X_2, [X_1, X_2]] \quad [X_2, [X_1, X_3]] \\
% \, [X_2, [X_2, X_3]] \quad  [X_3, [X_1, X_2]] \quad [X_3, [X_1, X_3]] \quad [X_3, [X_2, X_3]]. 
% \end{array}
% $$
We further decompose ${\cal L}_A =\sqcup_{k \ge 0} {\cal L}_A(k)$ where
 ${\cal L}_A(k)$ is comprised of Lie products of depth~$k$. %An application of P.~Hall basis  is in nonholonomic motion planning~\cite{murray1993steering,sussmann1993lie} via the technical of Lie extension. %of nonholonomic system via the approach of Lie extension (see, for example,). We review such an approach in Subsection~\S\ref{ssec:paec}.

 \subsection{Miscellaneous}
Let $\{e_1,\ldots, e_n\}$ be the standard basis of~$\R^n$. We denote by $\det(e_{i_1},\ldots, e_{i_n})$ the determinant of a matrix whose $j$-th column is $e_{i_j}$ for $i_j \in \{1,\ldots, n\}$. 

 For a vector $v = (v_1,\ldots, v_n)\in \R^n$, we let $\|v\|_1:= \sum^n_{i = 1}|v_i|$ be the one-norm of~$v$.

 Let $V$ be a vector space over $\R$. We denote by $V^*$ the dual space, i.e., it is the collection of all linear functions from~$V$ to~$\R$.
 
 Let  $V'$ and~$V''$ be two subsets (not necessarily subspaces) of the vector space~$V$. The two subsets $V'$ and $V'$ are said to be {\bf projectively identical} if for any $v'\in V'$, there exists a $v''\in V''$ and a constant $c\in \R$ such that $v' = cv''$, and vice versa. We write $V' \equiv V''$ to indicate such equivalence relation. 

Let $S$ be an arbitrary set with an operation ``$*$'' defined so that $s_1*s_2$ belongs to~$S$ for all $s_1, s_2\in S$. For any two subsets $S'$ and $S''$ of $S$, we let $S'*S''$ be the subset of $S$ comprised of the elements $s'*s''$ for all $s'\in S'$ and $s''\in S''$. 
Here are two examples in which such a notation will be used:  {\em (i)} If $S$ is a vector space and ``$*$'' is the addition ``$+$'', then we write $S' + S''$.  %{\em (ii)} If $W = \g$ is a Lie algebra and ``$*$'' is the Lie bracket ``$[\cdot, \cdot]$'', then we write $[W', W'']$; 
{\em (ii)} If $S$ is the commutative algebra of analytic functions ${\rm C}^\omega(\Sigma)$ and ``$*$'' is the pointwise multiplication, then we simply write $S' S''$.  

However, we note that the above notation does not apply to $[\g_1, \g_2]$ for $\g_1$ and $\g_2$ two subsets of a Lie algebra $\g$. By convention, $[\g_1,\g_2]$ is the {\em linear span} of all $[X_1, X_2]$ with $X_1\in\g_1$ and $X_2\in \g_2$. We adopt such a convention in the paper as well.  

For $\g$ a Lie algebra, we let $[\cdot, \cdot]$ be the Lie bracket associated with~$\g$. If $\g$ is comprised of matrices, then, to avoid confusion,  we denote by $[\cdot, \cdot]_{\rm M}$ the matrix commutator, which differs from $[\cdot, \cdot]$ by a negative sign, i.e., $[X, Y] = - [X, Y]_{\rm M}$.

For a general control system $\dot x(t) = f(x(t), u(t))$, we denote by $u[0,T]$ the control input $u(t)$ for the time interval $[0,T]$ for $T > 0$, and correspondingly, $x[0,T]$ the trajectory generated by the system over  $[0,T]$.

\section{Distinguished Ensemble Systems}\label{sec:controlobserverdual}

\subsection{Distinguished vector fields and codistinguished functions}\label{ssec:keydefinition}

We introduce in the section the class of (pre-)distinguished ensemble systems and establish
controllability and observability of any such ensemble system. We start by introducing two key components of the system, namely distinguished vector fields and codistinguished functions. We first have the following definition:  

\begin{definition}[Distinguished vector fields]\label{def:distvecf}
A set of vector fields  $\{f_i\}^m_{i = 1}$ over an analytic manifold~$M$ is {\bf distinguished}  if the following hold: 
\begin{enumerate}
\item[\em 1)] For any $x\in M$,  the set $\{f_i(x)\}^m_{i = 1}$ spans $T_x M$.  
\item[\em 2)] For any two $f_i$ and $f_j$, there exist an $f_k$ and a real number $\lambda$ such that
\begin{equation}\label{eq:distvecf}
[f_i, f_j] = \lambda f_k;
\end{equation}
conversely, for any $f_k$, there exist $f_i$ and $f_j$ and a {\em nonzero} $\lambda$ such that~\eqref{eq:distvecf} holds.
\end{enumerate} \,
\end{definition}

Recall that $\mathfrak{X}(M)$ is the Lie algebra of analytic vector fields over~$M$, which is infinite dimensional.  
However, if $F:= \{f_i\}^m_{i = 1}$ is distinguished, then by item~2 of Def.~\ref{def:distvecf}, the $\R$-span of the~$f_i$'s, which we denote by~$\mathbb{L}_F$,   
is a {\em finite dimensional} subalgebra of $\mathfrak{X}(M)$. We note here that $\mathbb{L}_F$ is {\em perfect}, i.e., $[\mathbb{L}_F, \mathbb{L}_F] =  \mathbb{L}_F$.  
%We denote it by 
%$$
%F := \left \{\sum^m_{i = 1} c_i f_i  \mid c_i \in \R\right \}.
%$$  

Let $N$ be any manifold diffeomorphic to~$M$, and $\eta: M\to N$ be the diffeomorphism. Recall that for a vector field~$f$ over $M$, we denote by $\eta_* f$ the pushforward of~$f$ as a vector field over $N$. We have the following fact: 

\begin{lemma}\label{lem:topologicalinv}  
If $\{f_i\}^m_{i = 1}$ is distinguished over~$M$, then $\{\eta_* f_i\}^m_{i = 1}$ is distinguished over~$N$.  
\end{lemma}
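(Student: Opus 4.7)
The plan is to verify both defining conditions of Definition 1 for the pushforward family $\{\eta_* f_i\}^m_{i=1}$, relying on the elementary fact that a diffeomorphism induces, on each fibre, a linear isomorphism of tangent spaces and, globally, a Lie algebra isomorphism between $\mathfrak{X}(M)$ and $\mathfrak{X}(N)$ under pushforward.

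First I would verify the spanning condition. Fix $y \in N$ and let $x := \eta^{-1}(y)$. The derivative $\eta_{*,x}: T_x M \to T_y N$ is a linear isomorphism since $\eta$ is a diffeomorphism. By definition of pushforward, $(\eta_* f_i)(y) = \eta_{*,x}(f_i(x))$. Because $\{f_i(x)\}^m_{i=1}$ spans $T_x M$ by item~1 of Def.~\ref{def:distvecf}, applying the isomorphism $\eta_{*,x}$ shows that $\{(\eta_* f_i)(y)\}^m_{i=1}$ spans $T_y N$.

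Next I would verify the closure-under-Lie-bracket condition. The essential ingredient is the naturality identity
\begin{equation*}
\eta_*[f_i, f_j] = [\eta_* f_i, \eta_* f_j],
\end{equation*}
valid for any diffeomorphism $\eta$ and any pair of vector fields on $M$. Granting this, if $[f_i, f_j] = \lambda f_k$ for some $\lambda \in \R$, then $[\eta_* f_i, \eta_* f_j] = \eta_*(\lambda f_k) = \lambda\, \eta_* f_k$, which establishes the forward half of item~2. For the converse half, given $\eta_* f_k$, choose by hypothesis $f_i, f_j$ and nonzero $\lambda$ with $[f_i, f_j] = \lambda f_k$; pushing forward yields $[\eta_* f_i, \eta_* f_j] = \lambda\, \eta_* f_k$, as required.

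The only step that is not purely formal is the naturality identity above, but this is a standard fact which can itself be proved in one line by testing against an arbitrary $\phi \in {\rm C}^\omega(N)$: using the defining property of pushforward that $(\eta_* f)\phi = ((f(\eta^* \phi)) \circ \eta^{-1}$, one checks that both sides of the identity act on $\phi$ as $((f_i f_j - f_j f_i)(\eta^*\phi)) \circ \eta^{-1}$. I do not expect any genuine obstacle; the lemma simply records that distinguishedness is a diffeomorphism invariant, reflecting the fact that it is an intrinsic, coordinate-free property of the finite-dimensional subalgebra $\mathbb{L}_F \subseteq \mathfrak{X}(M)$.
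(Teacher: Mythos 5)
Your proposal is correct and uses the same key fact as the paper, namely the naturality identity $\eta_*[f_i,f_j]=[\eta_*f_i,\eta_*f_j]$; the paper's proof consists of exactly this one line. You are somewhat more thorough in that you also explicitly verify the pointwise spanning condition (item 1 of the definition), which the paper leaves implicit.
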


\begin{proof}If $[f_i,f_j] = \lambda f_k$, then 
%It simply follows from the fact that the pushforward  commutes with Lie bracket, i.e., 
$[\eta_* f_i, \eta_* f_j] = \eta_* [f_i,f_j] = \lambda  \eta_* f_k$.  
\end{proof}

We next introduce the definition of codistinguished functions:   

\begin{definition}[Codistinguished functions]\label{def:codisth} 
 A set of functions $\{\phi^j\}^l_{j = 1}$ on $M$ is {\bf codistinguished} to a set of vector fields $\{f_i\}^m_{i = 1}$ if the following hold:   
\begin{enumerate}
\item[\em 1)] For any $x\in M$, the set of (exact) one-forms  $\{d\phi^j_x\}$ spans $T^*_x M$.
\item[\em 2)] For any $f_i$ and any $\phi^j$,  there exist a $\phi^k$ and a real number $\lambda$ such that
\begin{equation}\label{eq:codistfunc}
f_i  \phi^j  = \lambda \phi^k;
\end{equation}
conversely, for any $\phi^k$, there exist $f_i$, $\phi^j$, and a {\em nonzero} $\lambda$ such that~\eqref{eq:codistfunc} holds.
\item[\em 3)] For $x, x'\in M$, if $\phi^j(x) = \phi^j(x')$ for all $j = 1,\ldots, l$, then $x = x'$.
\end{enumerate}
If $\{\phi^j\}^l_{j = 1}$ satisfies only {\em 1)} and {\em 2)}, then it is {\bf weakly codistinguished} to~$\{f_i\}^m_{i = 1}$. 
\end{definition}

%Note that in the above definition, we do {\em not} require the set of vector fields $\{f_i\}^m_{i = 1}$ to be distinguished. We also have the following remark:
%
%\begin{remark}
%By the Cartan's magic formula, $f d\phi = d (f\phi)$ for any~$f\in \mathfrak{X}(M)$ and $\phi\in {\rm C}^\omega(M)$.
%In particular, if $f_i\phi^j = \lambda\phi^k$, then 
%$
%f_i d\phi^j= d (f_i \phi^j) = \lambda d \phi^k 
%$.     
%Thus, we can say that the set of one-forms $\{d\phi^j\}^l_{j = 1}$ is {\em (weakly) codistinguished} to $\{f_i\}^m_{i= 1}$.  
%\end{remark}

Let $\tilde\eta: N\to M$ be a diffeomorphism. Recall that for a function $\phi$ on~$M$, we denote by~$\tilde \eta^* \phi$ the pullback of~$\phi$ as a function on~$N$. We have the following fact:

\begin{lemma}\label{lem:diffeocodist}
If $\{\phi^j\}^l_{j = 1}$ on~$M$ is codistinguished to $\{f_i\}^m_{i = 1}$, then $\{\tilde\eta^*\phi^j\}^l_{j = 1}$ on~$N$ is codistinguished to $\{\tilde\eta_*^{-1}f_i\}^m_{i = 1}$.   
\end{lemma}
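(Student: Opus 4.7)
The plan is to verify the three defining conditions of Def.~\ref{def:codisth} one by one for the pair $(\{\tilde\eta^*\phi^j\}, \{\tilde\eta_*^{-1} f_i\})$ on $N$, mirroring the structure of Lemma~\ref{lem:topologicalinv}. The whole proof rests on the naturality of exterior derivative, Lie derivative, and point evaluation under diffeomorphisms; no new analytic content is required.

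For condition~1), I would use the identity $d(\tilde\eta^*\phi^j) = \tilde\eta^*(d\phi^j)$, valid for any smooth $\phi^j$. At a point $y\in N$, this gives $d(\tilde\eta^*\phi^j)_y = d\phi^j_{\tilde\eta(y)}\circ \tilde\eta_*|_y$. Since $\tilde\eta_*|_y\colon T_yN\to T_{\tilde\eta(y)}M$ is a linear isomorphism, its dual induces an isomorphism $T^*_{\tilde\eta(y)}M\to T^*_yN$, and therefore sends the spanning set $\{d\phi^j_{\tilde\eta(y)}\}$ into a spanning set $\{d(\tilde\eta^*\phi^j)_y\}$ of $T^*_yN$.

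For condition~2), the key observation is the naturality relation
\[
(\tilde\eta_*^{-1} f_i)(\tilde\eta^*\phi^j) = \tilde\eta^*(f_i \phi^j),
\]
which follows directly by unwinding both sides at a point $y\in N$ using the definitions of pushforward, pullback, and Lie derivative. Granted this identity, if $f_i\phi^j = \lambda\phi^k$ on $M$, then $(\tilde\eta_*^{-1}f_i)(\tilde\eta^*\phi^j) = \lambda\, \tilde\eta^*\phi^k$ on $N$, and the converse implication transports in exactly the same way because pullback by a diffeomorphism is a bijection between the two sets of functions.

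Condition~3), the point-separating property, is the most transparent step: $\tilde\eta^*\phi^j(y) = \phi^j(\tilde\eta(y))$, so the equalities $\tilde\eta^*\phi^j(y) = \tilde\eta^*\phi^j(y')$ for all $j$ force $\phi^j(\tilde\eta(y)) = \phi^j(\tilde\eta(y'))$ for all $j$, and then injectivity of $\{\phi^j\}$ on $M$ together with injectivity of $\tilde\eta$ yields $y = y'$. There is no serious obstacle anywhere in the argument; the only point one needs to be slightly careful about is keeping the direction of $\tilde\eta$ straight when writing down the naturality identity in step~2, since the pushforward of $f_i$ to $N$ must be taken via $\tilde\eta^{-1}$ in order to match the pullback of $\phi^j$ via $\tilde\eta$.
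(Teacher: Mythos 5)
Your proof is correct and takes essentially the same approach as the paper: the core step is the naturality identity $(\tilde\eta_*^{-1}f_i)(\tilde\eta^*\phi^j) = \tilde\eta^*(f_i\phi^j)$, which is exactly the one-line computation the paper gives. You additionally spell out the routine checks of conditions~1) and~3) of Def.~\ref{def:codisth} (spanning of one-forms via $d(\tilde\eta^*\phi) = \tilde\eta^*(d\phi)$, and point-separation via injectivity of $\tilde\eta$), which the paper leaves implicit.
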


\begin{proof}
	If $f_i \phi^j = \lambda\phi^k$, then
$
(\tilde \eta_*^{-1} f_i)(\tilde\eta^*\phi^j) = \tilde\eta^* (f_i \phi^j) = \lambda  \tilde\eta^*\phi^k.
$
\end{proof}

We say that a set of vector fields $F:=\{f_i\}^m_{i = 1}$  and a set of functions $\Phi:=\{\phi^j\}^l_{j = 1}$ are {\bf (weakly) jointly distinguished} if $F$ is distinguished and $\Phi$ is (weakly) codistinguished to~$F$. 
Note that Lemmas~\ref{lem:topologicalinv} and~\ref{lem:diffeocodist} imply that the property of having a set of (weakly) jointly distinguished pair $(F, \Phi)$ is topologically invariant. Let $F$ and $\Phi$ be (weakly) disjoined. Recall that $\mathbb{L}_F$ is a finite-dimensional Lie algebra spanned by~$F$ (since $F$ is distinguished). Let $\mathbb{L}_\Phi$ be the $\R$-span of $\Phi$. Then, by the second item of Def.~\ref{def:codisth}, the following map: 
$$(f, \phi)\in \mathbb{L}_F\times \mathbb{L}_\Phi\mapsto f\phi\in \mathbb{L}_\Phi$$ 
is a finite dimensional Lie algebra representation of~$\mathbb{L}_F$ on~$\mathbb{L}_\Phi$. 

%\begin{definition}
%%Let $M$ be a smooth manifold,  $\{f_i\}^m_{i = 1}$ be vector fields over $M$, and $\{\phi^j\}^l_{j = 1}$ be functions on $M$. 
%\end{definition}

%Now, let $F:= \{f_i\}^m_{i = 1}$ and $\Phi:=\{\phi^j\}^l_{j = 1}$ be (weakly) jointly distinguished. Recall that $\mathbb{L}_F$ is the $\R$-span of~$F$. Similarly, we let $\mathbb{L}_\Phi$ be the $\R$-span of $\Phi$, which is a finite dimensional subspace of ${\rm C}^\omega(M)$.   
%%$$
%%\mathbb{L}_\Phi:= \left \{\sum^l_{j = 1} c_j \phi^j  \mid c_j \in \R\right \}.
%%$$
%We now have the following fact:  
%%$$
%%\Phi:= \left \{\sum^l_{j = 1} c_j \phi^j  \mid c_j \in \R\right \}.
%%$$
%
%
%
%\begin{lemma}\label{lem:representation} 
%Let $F:=\{f_i\}^m_{i =1}$ and $\Phi:=\{\phi^j\}^l_{j = 1}$ be (weakly) jointly distinguished. 
%Then, the following action:  $$(f, \phi)\in \mathbb{L}_F\times \mathbb{L}_\Phi \mapsto f \phi\in \mathbb{L}_\Phi$$   
%is a Lie algebra representation of~$\mathbb{L}_F$ on~$\mathbb{L}_\Phi$. 
%\end{lemma}
%
%\begin{proof}
%It follows from the fact that for any two $f_i, f_{i'}$ out of $\{f_i\}^m_{i = 1}$ and any~$\phi^j$ out of $\{\phi^j\}^l_{j = 1}$, we have $[f_i, f_{i'}] \phi = f_i  f_{i'} \phi^j - f_{i'}f_i \phi^j \in \mathbb{L}_\Phi$.  \end{proof}

%We will use this fact in Section~\S\ref{sec:onliegroup} to generate co-distinguished functions. 

For the remainder of the subsection, we provide an example about jointly distinguished vector fields and functions on $\SO(3)$. These vector fields and functions will be further generalized in Section~\S\ref{sec:onliegroup} so that they exist on {\em any}  semi-simple Lie group.

\begin{example}\label{emp:example1}
{\em 
Let $\SO(3)$ be the matrix Lie group of $3\times 3$ special orthogonal matrices, and $\so(3)$ be the associated Lie algebra. We define a basis $\{X_i\}^3_{i = 1}$ of $\so(3)$ as follows: 
%\begin{equation}\label{eq:f123SO3}
%L_{X_1}(g) := gX_{1}, \quad L_{X_2}(g) := g X_{2}, \quad L_{X_3}(g) := gX_{3}.   
%\end{equation}
$$ X_i:= e_je^\top_k - e_k e_j^\top  \,\,\mbox{ where } \det(e_i,e_j,e_k) = 1, \quad \forall i = 1,2,3.$$ 
Let $\{L_{X_i}\}^3_{i = 1}$ be the corresponding left-invariant vector fields.  By computation, 
\begin{equation}\label{eq:exampleijk}
[L_{X_i}, L_{X_j}] = \det(e_i,e_j,e_k)  L_{X_k}, \quad \forall i\neq j. 
\end{equation}  
It follows that $\{L_{X_i}\}^3_{i = 1}$ is distinguished.

Denote by $\tr(\cdot)$ the trace of a square matrix. We next define functions $\{\phi^{ij}\}^3_{i,j= 1}$ on $\SO(3)$ as follows:  
$$
\phi^{ij}(g) := \operatorname{tr}(g X_j g^\top X_i^\top), \quad 1\le i, j \le 3.$$  
 We show below that $\{\phi^{ij}\}^3_{i, j = 1}$ is codistinguished to $\{L_{X_i}\}^3_{i = 1}$. 
First, for any left-invariant vector field $L_X$ with $X\in \so(3)$, we obtain by computation that   
\begin{equation}\label{eq:sssssss}
d\phi^{ij}_g(L_X(g)) = (L_X \phi^{ij})(g)  = \tr(g[X_j, X]g^\top X_i^\top ).
\end{equation}
%Note that the Lie bracket $[\cdot, \cdot]$ differs from the matrix commutator $[\cdot, \cdot]_{\rm M}$ by a negative sign.
We now prove that the three items of Def.~\ref{def:codisth} are satisfied for $\{\phi^{ij}\}^3_{i, j = 1}$ and $\{L_{X_i}\}^3_{i = 1}$:    
\begin{enumerate}
\item[\em 1)] 
We fix an arbitrary group element $g\in \SO(3)$ and show that $\{d\phi_g^{ij}\}^3_{i, j = 1}$ spans $T^*_g\SO(3)$. For convenience, let $\hat X_{ij}:=[g^\top X_i^\top g, X_j]$.  Then, by~\eqref{eq:sssssss}, we obtain that
$$
d\phi^{ij}_g(L_X(g)) = \tr(X [g^\top X_i^\top g, X_j]) =  \tr(X \hat X_{ij}).
$$
Note that $\tr(\cdot, \cdot)$ is negative definite on $\so(3)$. Thus, $\{d\phi_g^{ij}\}^3_{i, j = 1}$ spans $T^*_g\SO(3)$ if and only if $\{\hat X_{ij}\}^3_{i,j = 1}$ spans $\so(3)$. It now suffices to show that $\{\hat X_{ij}\}^3_{i,j = 1}$ spans $\so(3)$. But, this holds because  both $\{X_j\}^3_{j = 1}$ and $\{g^\top X_i^\top g\}^3_{i = 1}$ span $\so(3)$ and, moreover, $\so(3)$ is simple so that $[\so(3), \so(3)] =\so(3)$.

%Since $\tr(\cdot, \cdot)$ is negative definite on $\so(3)$, $\{d\phi^{ij}_g\}^3_{i,j = 1}$ spans $T^*_g\SO(3)$ for all $g\in \SO(3)$. %Thus, the first item of Def.~\ref{def:codisth} is satisfied.

\item[\em 2)] For the second item, we combine~\eqref{eq:exampleijk} and~\eqref{eq:sssssss} to obtain the following:   
%$
%(L_{X_i}\phi^{i'j})(g)  =  \tr(g[X_{j}, X_i]g^\top X_{i'}^\top).
%$
%It then follows from that 
$$
L_{X_i}\phi^{i'j}=
\left\{
\begin{array}{ll}
 -\det(e_i,e_j,e_k)\phi^{i'k}, & \mbox{ if } i \neq j, \\
 0, & \mbox{ otherwise.}
 \end{array}
 \right. 
$$
%and $L_{X_i}\phi^{i'j} = 0$ if $i = j$. 
\item[\em 3)] Finally, let $g$ and $g'$ be such that $\phi^{ij}(g) = \phi^{ij}(g')$ for all $1 \le i,j \le 3$:  
$$
\tr(g X_j g^\top X_i^\top) = \tr(g' X_j g'^\top X_i^\top), \quad \forall i = 1,2,3.
$$
Because $\{X_i\}^3_{i =1}$ spans $\so(3)$ and $\tr(\cdot, \cdot)$ is negative definite on $\so(3)$, we have that
$
g X_j g^\top = g' X_j g'^\top  
$. Since this holds for all $X_j$, it follows that $g^{\top}g'$ belongs to the center of $\SO(3)$. But the center is trivial. We thus conclude that $g= g'$.   
\end{enumerate}
}
\end{example}

\subsection{Controllability and observability of distinguished ensemble system}\label{ssec:eco}

We establish in the subsection a sufficient condition for controllability and observability of the ensemble system~\eqref{eq:ensemblesystem0}. For convenience, we reproduce below the mathematical model of the ensemble system introduced in Section~\S\ref{sec:introduction}:
\begin{equation}\label{eq:ensemblesystem}
\left\{
\begin{array}{lr}
\dot x_\sigma(t) = f_0(x_\sigma(t), \sigma) + \sum^m_{i = 1} \sum^{r}_{s = 1} u_{i, s}(t)\rho_{s}(\sigma) f_i (x_\sigma(t)),  & \forall \sigma\in \Sigma, 
\vspace{3pt}\\
 y^j(t) =  \displaystyle\int_\Sigma \phi^j(x_\sigma(t)) d\sigma,  &  \quad \forall j = 1,\ldots, l.\\
 \end{array}
 \right. 
\end{equation}
We recall that the common state space~$M$ is an analytic manifold, equipped with a Riemannian metric~$\operatorname{d}_M(\cdot,\cdot)$. The parametrization space~$\Sigma$ is analytic, compact, and path-connected. It is also equipped with a positive measure. All vector fields and parametrization functions are analytic. %The drifting vector field~$f_0$ belongs to ${\rm C}^\omega(M\times \Sigma)$. The parametrization functions $\{\rho_{i,s}\}^r_{s =1}$ and the control vector fields $\{f_i\}^m_{i = 1}$'s belong to~${\rm C}^\omega(\Sigma)$ and $\mathfrak{X}(M)$, respectively.  The functions $\{\phi^j\}^l_{j = 1}$ belong to~${\rm C}^\omega(M)$. 
The control inputs $u_{i,s}(t)$ are integrable functions over any time interval $[0,T]$ for $T > 0$. We denote by~$u(t)$ the collection of the $u_{i,s}(t)$ and $y(t)$ the collection of the $y^j(t)$.

\begin{wrapfigure}[13]{r}{7.5cm}\vspace{-0.40cm}
\centering
\includegraphics[width=.42\textwidth]{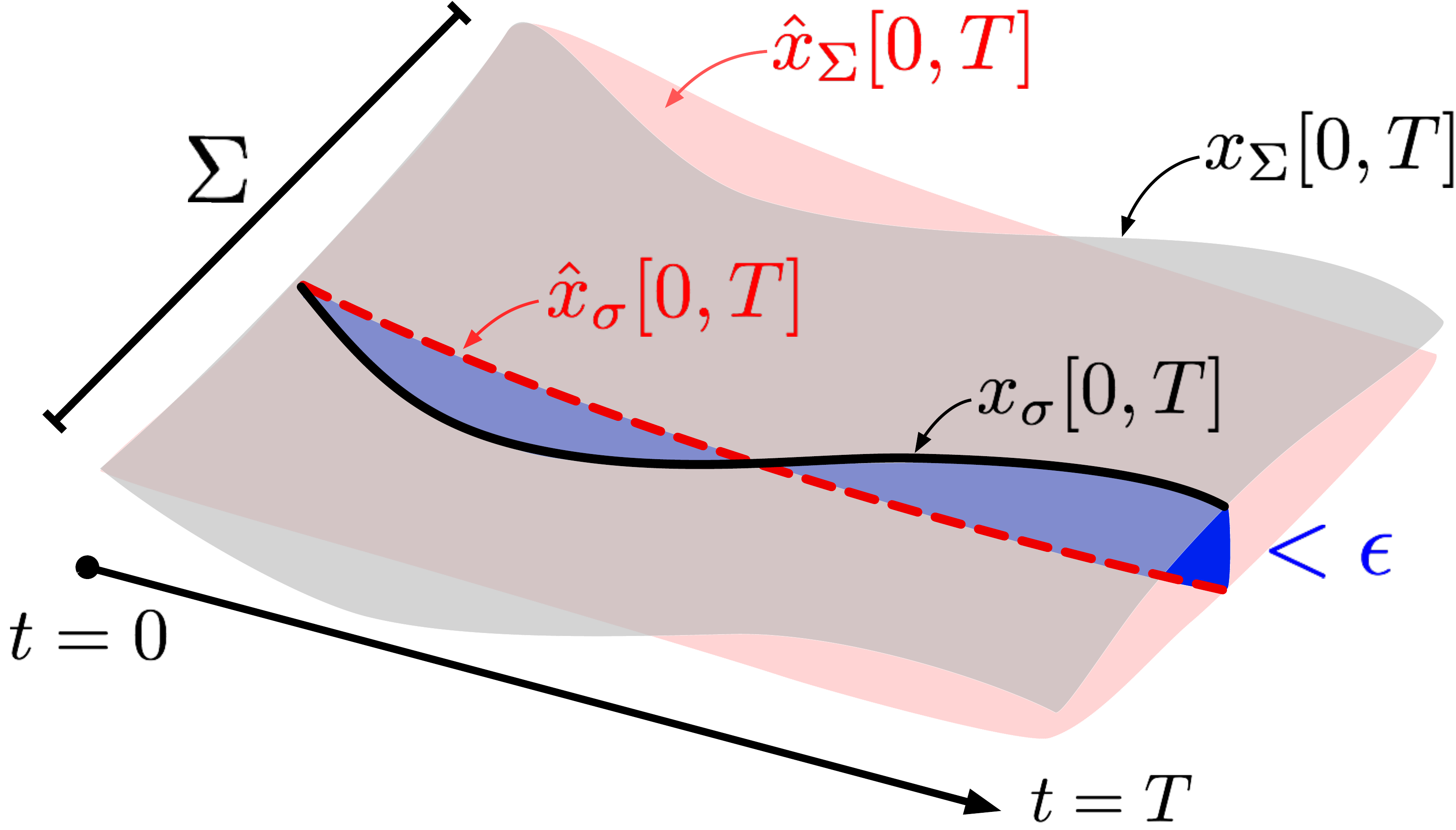}
\caption{The red surface $\hat x_\Sigma[0,T]$ is a trajectory of profiles we want the system to follow. The grey surface  $x_\Sigma[0,T]$, within an $\epsilon$-tubular neighborhood of $\hat x_\Sigma[0,T]$, is generated by a control input. %The red dashed (resp. black solid) curve is $\hat x_\sigma[0,T]$ (resp. $x_\sigma[0,T]$) associated with the individual system-$\sigma$.
}\label{fig:controllable}
\end{wrapfigure}

We also recall that $x_\Sigma(t)$ is the profile of system~\eqref{eq:ensemblesystem} at time~$t$, 
and ${\rm C}^\omega(\Sigma, M)$ is the profile space equipped with the Whitney ${\rm C}^0$-topology. Further, we let $$x_\Sigma[0,T]:=\{x_\sigma[0,T] \mid \sigma\in \Sigma\}.$$ 
We call $x_\Sigma[0,T]$ a {\em trajectory of profiles}.  
We introduce below precise definitions of  ensemble controllability and observability of system~\eqref{eq:ensemblesystem}.  

We first have the definition of ensemble controllability. An illustration of the definition is provided in Fig.~\ref{fig:controllable}.

\begin{definition}[Ensemble controllability]\label{def:ensemblecontrollability}
%System~\eqref{eq:ensemblesystem} is {\bf approximately ensemble controllable} 
%if for {\em 1)} any initial profile $x_\Sigma(0)\in {\rm C}^\omega(\Sigma, M)$ and any target profile $\hat x_\Sigma\in {\rm C}^\omega(\Sigma, M)$, {\em 2)} any time $T > 0$, and {\em 3)} any error tolerance $\epsilon > 0$, there exists an integrable function $u(t)$ as a control input such that the solution $x_\sigma(t)$ of~\eqref{eq:ensemblesystem} satisfies 
%\begin{equation*}\label{eq:defcontrollability}
%{\rm d}_M(x_\sigma(T), \hat x_\sigma) < \epsilon, \quad \forall \sigma\in \Sigma.
%\end{equation*}
System~\eqref{eq:ensemblesystem} is {\bf approximately ensemble path-controllable} if for any initial profile $x_\Sigma(0)$, any target trajectory of profiles $\hat x_\Sigma[0,T]$ with $\hat x_\Sigma(0) = x_\Sigma(0)$, and any error tolerance $\epsilon > 0$, there is a control input $u(t)$ such that the trajectory $x_\sigma[0,T]$ generated by the control input satisfies  
$$
{\rm d}_M(x_\sigma(t), \hat x_\sigma(t)) < \epsilon, \quad \forall (t, \sigma)\in [0,T]\times \Sigma.
$$ 
\end{definition}

%Note that since $\Sigma$  is compact and ${\rm d}_M(x_\sigma(T), \hat x_\sigma)$ is continuous in $\sigma$, condition~\eqref{eq:defcontrollability} is equivalent with the ${\rm L}^p$-norm:  
%$$
%\left ( \int_\Sigma \left ( {\rm d}_M(x_\sigma(T), \hat x_\sigma) \right )^p d\sigma \right )^{1/p}< \epsilon, \quad \forall 1\le p\le \infty
%$$  
%We now introduce the definition of ensemble observability:  

We next introduce the definition of ensemble observability. To proceed, we first have the following one: 

\begin{definition}[Output equivalence] Two initial profiles $x_\Sigma(0)$ and $\bar x_\Sigma(0)$ of system~\eqref{eq:ensemblesystem} are {\bf output equivalent}, which we denote by $x_\Sigma(0)\sim \bar x_\Sigma(0)$,  if for any $T> 0$ and any integrable function $u:[0,T]\to \R^m$ as a control input, the following hold:  
$$
\int_\Sigma \phi^{j}(x_\sigma(t)) d\sigma = \int_\Sigma \phi^{j}(\bar x_\sigma(t)) d\sigma,
$$
for all $t\in [0,T]$ and for all $j = 1,\ldots, l$.
\end{definition}

\begin{wrapfigure}[13]{r}{7.5cm}\vspace{-0.0cm}
\centering
\includegraphics[width=.41\textwidth]{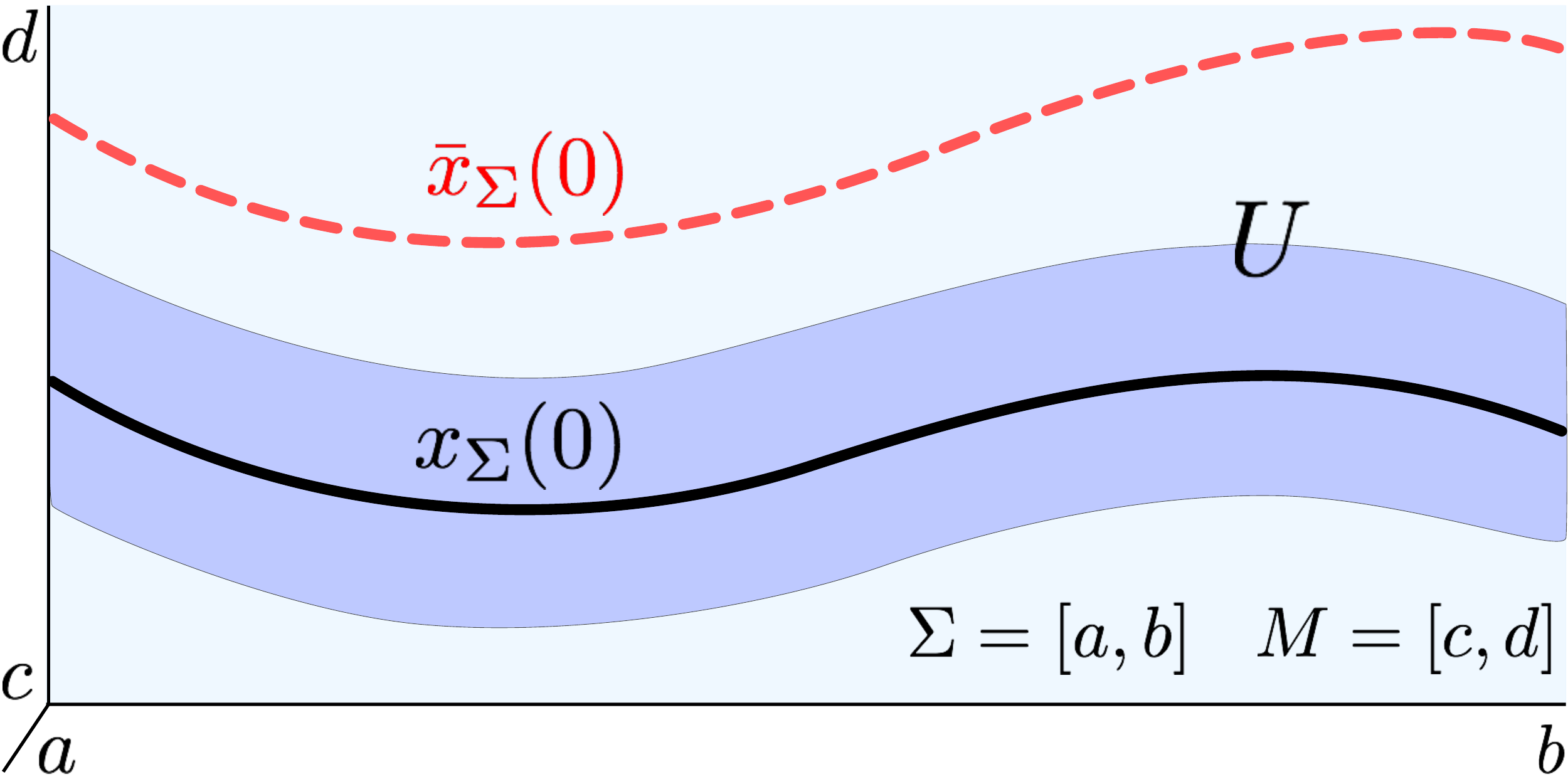}
\caption{Let $\Sigma := [a, b]$ and $M := [c, d]$ be closed intervals. Weak ensemble observability requires that there is a neighborhood~$U$ of~$x_\Sigma(0)$ such that if  $\bar x_\Sigma(0)\sim x_\Sigma(0)$ and $\bar x_\Sigma(0) \neq x_\Sigma(0)$, then $\bar x_\Sigma(0)$ cannot intersect~$U$. %Ensemble observability requires that $U = [a, b]\times [c, d]$.
}\label{fig:observable}
\end{wrapfigure}

For a given $x_\Sigma(0)$, we let $O(x_\Sigma(0))$ be the collection of all initial profiles in ${\rm C}^{\omega}(\Sigma, M)$ that are output equivalent to $x_\Sigma(0)$, i.e., 
\begin{equation}\label{eq:outputequivalence}
O(x_\Sigma(0)):=  \left \{\bar x_\Sigma(0) \mid  \bar x_\Sigma(0)\sim x_\Sigma(0) \right \}.
\end{equation} 
The set $O(x_\Sigma(0))$ can be viewed as a ``measure of ambiguity'' for the ensemble estimation problem. Note that $O(x_\Sigma(0))$ always contains $x_\Sigma(0)$ itself. The set $O(x_\Sigma(0))$ could be finite, infinite but discrete, or even a continuum.  
With the above definition of output equivalence, we now introduce the definition of ensemble observability. An illustration of the definition is provided in Fig.~\ref{fig:observable}.

\begin{definition}[Ensemble observability]\label{def:ensembleobservability}
System~\eqref{eq:ensemblesystem} is {\bf weakly ensemble observable} if for any profile $x_\Sigma(0)\in {\rm C}^\omega(\Sigma, M)$, there exists an open neighborhood~$U$ of $x_\Sigma(0)$ in ${\rm C}^{\omega}(\Sigma,M)$ such that any profile $\bar x_\Sigma(0)$ that is output equivalent to $x_\Sigma(0)$ cannot intersect~$U$. 
Further, system~\eqref{eq:ensemblesystem} is {\bf ensemble observable} if for any profile $x_\Sigma(0)$, $O(x_\Sigma(0)) = \{x_\Sigma(0)\}$. 
\end{definition}

We establish below a sufficient condition for ensemble controllability and observability of system~\eqref{eq:ensemblesystem}. To state the condition, we need a few more preliminaries.  

First, we say that the set of parametrization functions $\{\rho_s\}^r_{s = 1}$ defined on $\Sigma$ is a {\bf separating set} if for any two distinct points $\sigma, \sigma' \in \Sigma$, there exists a function $\rho_s$, for some $s \in \{1,\ldots, r\}$, such that $\rho_s(\sigma) \neq \rho_s(\sigma')$. Note that by Stone-Weierstrass theorem~\cite[Chp.~7]{rudin1976principles}, if $\{\rho_s\}^r_{s = 1}$ separates point and contains an everywhere nonzero function, then the subalgebra generated by $\{\rho_s\}^r_{s = 1}$ is dense in the space ${\rm C}^0(\Sigma)$ of continuous functions on $\Sigma$. 

Next, for convenience, we let $\phi:= (\phi^1,\ldots, \phi^l)$ be a vector-valued function on~$M$. For a given~$x\in M$, we let $[x]_\phi$ be the pre-image of $\phi(x)$, i.e., $[x]_\phi$ is the collection of all points $x'$ in $M$ such that $\phi(x') = \phi(x)$. Note that if the set of one-forms $\{d\phi^j_x\}^l_{j = 1}$ spans $T^*_xM$ for all $x\in M$, then $[x]_\phi$ is a discrete set. Let $\chi_\phi$ be defined as follows:
$$
\chi_\phi:= \sup_{x\in M} \left |[x]_\phi \right |. 
$$  
If $\chi_\phi$ is unbounded, then we set $\chi_\phi :=\infty$. We have the following fact:

\begin{lemma}\label{lem:finiteambiguity}
If $M$ is compact and the one-forms $\{d\phi^j_x\}^l_{j = 1}$ span $T^*_xM$ for all $x\in M$, then $\chi_\phi$ is a finite number. 
\end{lemma}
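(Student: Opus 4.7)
The plan is to exploit the spanning condition to turn $\phi$ into a local embedding and then cash in compactness via a finite-subcover argument.

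First I would observe that the hypothesis ``$\{d\phi^j_x\}^l_{j=1}$ spans $T^*_xM$'' is exactly the statement that the codifferential is surjective at every $x$, equivalently that the differential $d\phi_x \colon T_xM \to \R^l$ is injective. Hence $\phi \colon M \to \R^l$ is an analytic immersion. In particular, $l \ge \dim M$, and by the constant-rank form of the inverse function theorem, every $x \in M$ admits an open neighborhood $U_x \subseteq M$ on which $\phi|_{U_x}$ is injective. Thus for each such neighborhood and each value $v \in \R^l$, the fiber $\phi^{-1}(v) \cap U_x$ contains at most one point.

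Next I would invoke compactness. The collection $\{U_x\}_{x \in M}$ is an open cover of $M$, so it admits a finite subcover $U_{x_1}, \ldots, U_{x_N}$. For any $y \in M$, the set $[y]_\phi = \phi^{-1}(\phi(y))$ intersects each $U_{x_k}$ in at most one point, giving
\begin{equation*}
|[y]_\phi| \;\le\; \sum_{k=1}^N \bigl| [y]_\phi \cap U_{x_k}\bigr| \;\le\; N.
\end{equation*}
Taking the supremum over $y$ yields $\chi_\phi \le N < \infty$, as desired.

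The only step that requires a second thought is the passage from ``$d\phi_x$ injective'' to ``$\phi$ is injective on some neighborhood of $x$.'' This is standard: when $l = \dim M$ it is the inverse function theorem applied to $\phi$; when $l > \dim M$ one appeals to the rank theorem, which produces local coordinates in which $\phi$ has the form $(z_1,\ldots,z_n) \mapsto (z_1,\ldots,z_n,0,\ldots,0)$ on a neighborhood of $x$, so local injectivity is manifest. No further subtleties arise, and no additional structure beyond compactness of $M$ is needed for the covering argument.
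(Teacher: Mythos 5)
Your proof is correct. Both you and the paper exploit compactness through a finite subcover, but the \emph{local} ingredient is genuinely different. You take the most elementary route: the spanning hypothesis makes $d\phi_x$ injective, hence $\phi$ an immersion, hence locally injective by the rank theorem; covering $M$ by finitely many neighborhoods $U_{x_1},\ldots,U_{x_N}$ on which $\phi$ is injective, every fiber $[y]_\phi$ meets each $U_{x_k}$ at most once, so $\chi_\phi \le N$. The paper instead covers $M$ by balls on which (it asserts) the fiber cardinality $|[x']_\phi|$ is \emph{locally constant}, and then takes $\chi_\phi = \max_i |[x_i]_\phi|$. Your local step is the cleaner and more defensible primitive: local constancy of $|[\cdot]_\phi|$ is in fact an overstatement for a general immersion (a figure-eight immersion of $S^1$ into $\R^2$ has $|[x]_\phi| = 2$ at the double point but $=1$ on a punctured neighborhood of either preimage; one really only gets upper semicontinuity, which happens to be enough for the paper's final max). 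Your pigeonhole bound sidesteps this subtlety entirely and yields the same conclusion with nothing beyond the rank theorem and compactness.
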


\begin{proof}
First, note that for any $x\in M$, $|[x]_\phi|$ is a finite number. This holds because otherwise $[x]_\phi$ contains an accumulation point $x_*$ and, moreover, the one-forms $\{d\phi^j_{x_*}\}^l_{j = 1}$ {\em cannot} span $T^*_{x_*}M$, which is a contradiction. In fact, since the one-forms  $\{d\phi^j_x\}^l_{j = 1}$ span $T^*_xM$ for all $x\in M$, there is an open ball $B_{\epsilon(x)}(x)$ centered at $x$ with radius $\epsilon(x)$ such that $|[x']_\phi| = |[x]_\phi|$ for all $x'\in B_{\epsilon(x)}(x)$. Note that $\{B_{\epsilon(x)}(x)\}_{x\in M}$ is an open cover of $M$. Since $M$ is compact, there is a finite cover $\{B_{\epsilon(x_i)}(x_i)\}^N_{i = 1}$. It then follows that $\chi_\phi:= \max^N_{i = 1} |[x_i]_\phi |$.  	
\end{proof}

We are now in a position to state the first main result of the paper. The result establishes connections between the ``distinguished'' structure introduced in the previous subsection and ensemble controllability/observability of system~\eqref{eq:ensemblesystem}:   

\begin{theorem}\label{thm:mthm1}
Consider the ensemble system~\eqref{eq:ensemblesystem}. 
Suppose that $\{\rho_s\}^r_{s = 1}$ is a separating set and contains an everywhere nonzero function; then, the following hold: 
\begin{enumerate}
 \item[1)] If the set of control vector fields $\{f_i\}^m_{i = 1}$ is distinguished, then system~\eqref{eq:ensemblesystem} is approximately ensemble path-controllable.
 \item[2)] If the set of observation functions $\{\phi^j\}^l_{j = 1}$ is (weakly) codistinguished to $\{f_i\}^m_{i = 1}$, then system~\eqref{eq:ensemblesystem} is (weakly) ensemble observable. If, further, $M$ is compact, then for any $x_\Sigma(0)$, the set $O(x_\Sigma(0))$ defined in~\eqref{eq:outputequivalence} is finite and $|O(x_\Sigma(0))|\le \chi_\phi$.
 \end{enumerate}
\end{theorem}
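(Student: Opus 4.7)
For part~1, my plan is to verify the Lie-algebra rank condition for ensembles and then invoke the ensemble Rashevsky--Chow theorem of Agrachev--Caponigro--Sarychev cited in~\cite{agrachev2016ensemble}. The control vector fields of the ensemble are $\rho_s(\sigma)f_i$. A direct induction on bracket depth, based on item~(2) of Def.~\ref{def:distvecf} and the scalar identity $[\rho_s f_i,\rho_{s'}f_j]=\rho_s\rho_{s'}[f_i,f_j]$, shows that every iterated Lie bracket of the control fields has the form $c\,\rho_{s_1}^{k_1}\cdots\rho_{s_r}^{k_r}(\sigma)\,f_\ell$ for some scalar $c$ and some $\ell\in\{1,\ldots,m\}$. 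Consequently the Lie algebra generated by the control fields contains, pointwise in $\sigma$, every vector field of the form $\rp(\sigma)\,f_\ell(x)$ with $\rp\in\mathcal{S}$, the subalgebra generated by $\{\rho_s\}_{s=1}^r$. By hypothesis $\{\rho_s\}$ separates points and contains a nowhere-vanishing function, so Stone--Weierstrass gives that $\mathcal{S}$ is uniformly dense in ${\rm C}^0(\Sigma)$. Combined with item~(1) of Def.~\ref{def:distvecf}, this implies that the evaluations of the generated Lie algebra are uniformly dense, at every profile $x_\Sigma$, in the space of continuous sections $\sigma\mapsto\xi(\sigma)\in T_{x_\sigma}M$. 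Applying the ensemble Chow theorem then produces the desired control inputs within any prescribed Whitney ${\rm C}^0$-tube around the target trajectory of profiles.

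For part~2, the strategy is dual: I show that the outputs $y^j(t)$, under suitable control inputs, recover $\phi^\ell(x_\sigma(0))$ as a function of $\sigma$ for every~$\ell$. Iterated Lie differentiation along the control vector fields, together with item~(2) of Def.~\ref{def:codisth}, yields
\begin{equation*}
(\rho_{s_1}f_{i_1})\cdots(\rho_{s_k}f_{i_k})\,\phi^j \;=\; c\,\rho_{s_1}\cdots\rho_{s_k}\,\phi^\ell
\end{equation*}
for some scalar $c$ and some index~$\ell$, and by the converse clause every $\phi^\ell$ arises this way. Computing $(d^k/dt^k)y^j(t)\big|_{t=0}$ along piecewise-constant input profiles and then varying the input amplitudes (a Vandermonde-type linear inversion) extracts each integral $\int_\Sigma \rp(\sigma)\,\phi^\ell(x_\sigma(0))\,d\sigma$ for $\rp\in\mathcal{S}$ and $\ell\in\{1,\ldots,l\}$. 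Density of $\mathcal{S}$ in ${\rm C}^0(\Sigma)$ then determines $\phi^\ell(x_\sigma(0))$ pointwise in~$\sigma$. Under the full codistinguished hypothesis, item~(3) of Def.~\ref{def:codisth} says $\phi=(\phi^1,\ldots,\phi^l)$ separates points of $M$, so $x_\sigma(0)$ is uniquely determined and $O(x_\Sigma(0))=\{x_\Sigma(0)\}$. Under only weak codistinguishedness, item~(1) makes $\phi$ a local diffeomorphism onto its image, so each $\bar x_\sigma(0)\in O(x_\Sigma(0))$ must lie in the discrete fiber $[x_\sigma(0)]_\phi$; analyticity of $\bar x_\Sigma(0)$ and path-connectedness of $\Sigma$ then confine $\bar x_\Sigma(0)$ to a single sheet, and a sufficiently small Whitney ${\rm C}^0$-neighborhood of $x_\Sigma(0)$ contains only the sheet through $x_\Sigma(0)$, giving weak observability. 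When $M$ is compact, Lemma~\ref{lem:finiteambiguity} gives $\chi_\phi<\infty$, and the same sheet-selection argument shows $\bar x_\Sigma(0)$ is determined by its value at any base point $\sigma_0\in\Sigma$, yielding $|O(x_\Sigma(0))|\le|[x_{\sigma_0}(0)]_\phi|\le\chi_\phi$.

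The main technical obstacle I expect is in part~2: to isolate each $\int_\Sigma\rp(\sigma)\phi^\ell(x_\sigma(0))\,d\sigma$ from the output data, one must choose a family of piecewise-constant inputs whose time-derivatives of the outputs form an invertible (Vandermonde-like) linear system in these target integrals, and one must justify interchanging $t$-differentiation with $\sigma$-integration and flow-based Lie differentiation. Analyticity of the vector fields, parametrization functions, and observation functions should make this rigorous, but this bookkeeping, together with the sheet-continuity argument needed to upgrade pointwise information about $[x_\sigma(0)]_\phi$ into a statement about whole profiles in the Whitney topology, is where the proof is most delicate.
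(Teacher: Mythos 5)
Your proposal follows essentially the same route as the paper's proof: Lie extension (or, equivalently, verification of the ensemble Rashevsky--Chow criterion of~\cite{agrachev2016ensemble}) plus Stone--Weierstrass for part~1, and, for part~2, higher-order output derivatives under piecewise-constant inputs, closure of the $\phi^j$'s under the $f_i$'s, density of the parametrization algebra, and then an analyticity-plus-path-connectedness/finite-subcover argument to produce the Whitney neighborhood~$U$. The paper differentiates with respect to switching times and amplitudes in place of your Vandermonde inversion, but the information extracted is the same.

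There is, however, one genuine slip in part~1 as you have written it. You claim that the Lie algebra generated by the control fields contains $\rp(\sigma)f_\ell$ for \emph{every} $\rp\in\mathcal{S}$. That is not quite right: the control fields of system-$\sigma$ are $\rho_s(\sigma)f_i$, which carry degree one, and a depth-$k$ bracket produces a monomial of degree exactly $k+1$. Hence the coefficient functions that actually arise span only the ideal of $\mathcal{S}$ generated by $\rho_1,\dots,\rho_r$ (monomials of degree $\ge 1$), not $\mathcal{S}$ itself, and in particular the bare $f_\ell$ is not in the generated Lie algebra. This is precisely the point at which the everywhere-nonzero function $\rho_1$ earns its keep beyond furnishing the Stone--Weierstrass hypothesis: the ideal contains $\rho_1\mathcal{S}$, which is dense in ${\rm C}^0(\Sigma)$ because multiplication by a nowhere-vanishing continuous function on compact $\Sigma$ is a homeomorphism of ${\rm C}^0(\Sigma)$. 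The paper makes this explicit by approximating $\rho_1^{-1}(\sigma)c_i(t,\sigma)$ by elements of $\mathcal{S}$ and then multiplying back by $\rho_1$, so that only degree-$\ge 1$ monomials are used. Once you insert that one step, your argument and the paper's coincide; invoking the ensemble Chow theorem as a black box simply packages the Lie-extension + Sussmann--Liu construction that the paper spells out.
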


Following the above theorem, we introduce the following definition: 

\begin{definition}\label{def:distinguishedensemblesys}
An ensemble system~\eqref{eq:ensemblesystem} is {\bf distinguished} if {\em 1)} the set  of parametrization functions $\{\rho_s\}^r_{s = 1}$ separates points and contains an everywhere nonzero function, and {\em 2)} the set of control vector fields $\{f_i\}^m_{i = 1}$  and the set   of observation functions $\{\phi^j\}^l_{j = 1}$ are (weakly) jointly distinguished.  
\end{definition} 

By Theorem~\ref{thm:mthm1}, a distinguished ensemble  system is approximately ensemble path-controllable and (weakly) ensemble observable. We provide below an example of a distinguished ensemble system:  
%We address briefly here the issue of designing a separating sets of functions $\{\rho_s\}^r_{s = 1}$. First, 

\begin{example}{\em  
Recall that in Example~\ref{emp:example1}, we have introduced jointly distinguished left-invariant vector fields $\{L_{X_i}\}^3_{i = 1}$ and functions $\{\tr(g X_j g^\top X^\top_i)\}^3_{i,j = 1}$ on $\SO(3)$. Now, consider a continuum ensemble of control systems defined on $\SO(3)$, parametrized by a scalar parameter $\sigma$ over a closed interval $[a, b]$ with $0 < a < b$. Let   $\rho(\sigma) := \sigma$ be the parametrization function. The singleton $\{\rho\}$ is a separating set and $\rho$ is everywhere nonzero. Thus, the following ensemble system is distinguished:   
$$
\left\{
\begin{array}{lr}
\dot g_\sigma(t) = f_0(g_\sigma(t), \sigma) + \sum^3_{i = 1}u_i(t)\sigma L_{X_i}(g_\sigma(t)), & \sigma\in [a, b], \vspace{3pt}\\
 y^{ij}(t) =  \displaystyle\int_\Sigma \tr(g_\sigma(t) X_j g_\sigma^\top(t) X^\top_i) d\sigma,    & 1\le i, j \le 3.\\
 \end{array}
 \right. 
$$ 
Thus, it is approximately ensemble path-controllable and ensemble observable.
} \end{example}

We have the following remark on the set of parametrization functions:  

\begin{remark}\label{rmk:existenceofrho1}{\em 
For any analytic manifold~$\Sigma$, there exists a set of separating set. By the Nash  embedding theorem~\cite{nash1966analyticity,greene1971analytic}, the manifold~$\Sigma$ can be isometrically embedded into a Euclidean space~$\R^N$. We write $\sigma = (\sigma_1,\ldots,\sigma_N)$ as the coordinate of a point $\sigma\in \Sigma$.         
Now, let $\rho_s(\sigma) := \sigma_s$, for $s = 1,\ldots, N$, be the standard coordinate functions (more precisely,  the restrictions of the coordinate functions to~$\Sigma$).  Further, let $\rho_{N + 1} := {\bf 1}_\Sigma$ be the unit function. Then, $\{\rho_s\}^{N + 1}_{s = 1}$ satisfies the assumption of Theorem~\ref{thm:mthm1}. }
%Of course, depending on the type of the parametrization space $\Sigma$, one can choose different separating sets.  Consider, for example, the $N$-torus $\Sigma = T^N$ (i.e., the $N$-copy of a circle $S^1$). We can express a point $\sigma\in T^N$ by $(\sigma_1,\ldots, \sigma_N)$ with $ \sigma_k\in [0, 2\pi)$. In this case,  the trigonometric functions $\cos(\sigma_k), \sin(\sigma_k)$ for $ k = 1,\ldots, N$ plus~${\bf 1}_{T^N}$ can be used to form a desired separating set with ${\bf 1}_{T^N}$ the everywhere nonzero function.    
\end{remark} 
 
We establish below Theorem~\ref{thm:mthm1}.  
The proof will be divided into two parts: we deal with ensemble controllability and ensemble observability separately. The proofs will be given in Subsections~\ref{ssec:paec} and~\ref{ssec:peo}, respectively.

\subsection{Proof of approximate ensemble path-controllability}\label{ssec:paec} 
We establish here the first item of Theorem~\ref{thm:mthm1}. The proof relies on the use of the technique of Lie extension. We review such a technique below. To proceed, we first recall that for an arbitrary {\em single} control-affine system: 
\begin{equation}\label{eq:singlecontroaffine}
\dot x(t) = f_0(x(t)) + \sum^m_{i = 1}u_i(t)f_i(x(t)),
\end{equation} 
the first order Lie extension of the system is a new control-affine system given by %with additional control vector fields and corresponding control inputs: 
$$\dot x(t) = f_0(x(t)) + \sum^m_{i = 1}u_i(t) f_i(x(t)) + \sum^m_{i,  j = 1}u_{ij}(t)[f_i,f_j](x(t)).$$ 
%where the control inputs $u_{ij}$'s are independent of the existing $u_i$'s.
By repeatedly applying Lie extensions, we obtain a family of control-affine systems with an increasing number of control vector fields. All of these control vector fields can be expressed as Lie products involving the $f_i$'s in~\eqref{eq:singlecontroaffine}. We make the statement precise below.  

First, recall that for the given set of vector fields $F:= \{f_i\}^m_{i = 1}$, we use ${\cal L}_F$ to denote the collection of Lie products generated by~$F$ in which the $f_i$'s are treated as if they were ``free''  generators. For ease of notation, we will simply write ${\cal L}$ by omitting the subindex~$F$. Decompose ${\cal L}:= \sqcup_{k \ge 0}{\cal L}(k)$ where each ${\cal L}(k)$ is comprised of Lie products of depth~$k$. Then, the $k$th order Lie extension of~\eqref{eq:singlecontroaffine} is a control-affine system given by
$$
\dot x(t) = f_0(x(t)) + \sum^k_{l = 0}\sum_{\xi \in {\cal L}(l)}u_\xi(t)\xi(x(t)).
$$  
By increasing the order~$k$, we obtain an infinite family of Lie extended systems. 
It is known that the original control-affine system~\eqref{eq:singlecontroaffine} is approximately path-controllable if and only if any of its Lie extended systems is. 
%Besides its use in proving controllability, the technique of Lie extension has been leveraged for addressing nonholonomic motioning planning problems. 
In fact, Sussmann and Liu showed in~\cite{sussmann1993lie,liu1997approximation} how to construct control inputs $u_i(t)$ using a finite number of sinusoidal inputs of appropriate frequencies to approximate a desired trajectory generated by a given (but arbitrary) Lie extended system. The same technique has also been used in~\cite{agrachev2016ensemble} for proving approximate ensemble controllability. We further refer the reader to~\cite{murray1993steering,sussmann1993lie,leonard1995motion} for the use of Lie extension in nonholonomic motion planning.  %in which a variant of Rashevsky-Chow theorem is established for approximate ensemble controllability.  
 
We now apply the technique of Lie extension to the ensemble system~\eqref{eq:ensemblesystem}. For convenience, we reproduce below the control part of the system:  
\begin{equation}\label{eq:controlpart}
\dot x_\sigma(t) = f_0(x_\sigma(t), \sigma) + \sum^m_{i = 1} \sum^{r}_{s = 1} u_{i,s}(t)\rho_{s}(\sigma) f_i (x_\sigma(t)),  \quad \forall \sigma\in \Sigma.
\end{equation}
In this case, we have that for any individual system-$\sigma$, the control vector fields are $\rho_s(\sigma) f_i(x_\sigma)$, for $1\le s \le r$ and $1\le i \le m$. Note that the Lie bracket of any two of these control vector fields is given by
$[\rho_s(\sigma) f_i, \rho_{s'}(\sigma) f_j] =\rho_s(\sigma)\rho_{s'}(\sigma) [f_i, f_j]$.   
Thus, the first order Lie extension of~\eqref{eq:controlpart} is given by
\begin{multline*}
\dot x_\sigma(t)=  f_0(x_\sigma(t), \sigma) + \sum^m_{i = 1} \sum^{r}_{s = 1} u_{i,s}(t)\rho_{s}(\sigma) f_i (x_\sigma(t)) + \\  \sum^m_{i, j =1}\sum^r_{s,s' = 1} u_{ij, ss'}(t) \rho_s(\sigma)\rho_{s'}(\sigma)[f_{i}, f_{j}](x_\sigma(t)), \quad \forall\sigma\in \Sigma. 
\end{multline*}
%Note that for any fixed pair $(i,j)$,  one can combine all the ``coefficients'' associated with the Lie bracket $[f_i,f_j](x_\sigma(t))$, namely $u_{ij,ss'}(t)\rho_s(\sigma)\rho_{s'}(\sigma)$ and $u_{ij, s's}(t)\rho_{s'}(\sigma)\rho_s(\sigma)$,  so that 
The last term of the above expression can be simplified as follows: 
$$
\sum_{\xi \in {\cal L}(1)} \sum_{\rp\in {\cal P}(2)} u_{\xi, \rp}(t) \rp(\sigma) \xi(x_\sigma(t)),  
$$
where ${\cal P}(2)$ is the collection of monomials $\rho_s\rho_{s'}$ of degree~$2$. In general, we obtain the following $k$th order Lie extension of~\eqref{eq:controlpart}:         
\begin{equation}\label{eq:lieensmebleextend}
\dot x_\sigma(t) = f_0(x_\sigma(t), \sigma) + \sum^{k}_{l = 0}\sum_{\xi\in {\cal L}(l)} \sum_{\rp\in {\cal P}(l + 1)} u_{\xi, \rp}(t) \rp(\sigma) \xi(x_\sigma(t)), \quad \forall\sigma\in \Sigma.   
\end{equation}
%Note that all the scalar control inputs $u_{\xi,\rp}$ are independent of each other.   
%We note again that all the scalar control inputs $u_{\xi, \rp}(t)$ are independent with each other.  

%Further, by repeatedly applying Lie extensions, we obtain that
%\begin{multline}\label{eq:lieensmebleextend}
%\dot x_\sigma = f_0(x_\sigma, \sigma) + \sum^m_{i = 1} u_i  \sigma f_i(x_\sigma)  + \sum_{1\le i < j \le m} u_{ij} \sigma^2[f_i,f_j](x_\sigma)  \\ 
%+ \sum_{i,j,k} u_{ijk} \sigma^3 [f_i, [f_j, f_k]](x_\sigma) +\cdots
%\end{multline}
%where the $k$-th summation in~\eqref{eq:lieensmebleextend} is over all Lie products of depth~$k$ in the P.~Hall basis generated by the $f_i$'s (treated as free generators). The $u_i$'s, $u_{ij}$'s, and $u_{ijk}$'s, etc, are all independent control inputs. 

Recall that two arbitrary sets of vector fields $\{f_i\}^m_{i = 1}$ and $\{f'_{i'}\}^{m'}_{i'= 1}$ over~$M$ are said to be projectively identical, which we denote by $\{f_i\}^m_{i = 1} \equiv \{f'_{i'}\}^{m'}_{i' = 1}$, if for any $f_i$,  there exist an $f'_{i'}$ and a real number~$\lambda$ such that $f_i = \lambda f'_{i'}$, and vice versa.  
We will use such an equivalence relation in the following way: In the original ensemble control system~\eqref{eq:controlpart}, the set of control vector fields  $\{f_i\}^m_{i = 1}$ is, by assumption, distinguished. Thus, by the second item of Def.~\ref{def:distvecf}, if we evaluate the Lie products in each ${\cal L}(k)$, then 
\begin{equation}\label{eq:skequiv}
{\cal L}(k) \equiv \left \{f_i \right \}^m_{i = 1}, \quad \forall k\ge 0.
\end{equation}
Since every control vector field~$f$ in~\eqref{eq:lieensmebleextend} is  obtained by evaluating a Lie product involving the $f_i$'s, by using the above fact, we can simplify the Lie extended  system~\eqref{eq:lieensmebleextend} as follows:
\begin{equation}\label{eq:simplifiedlieensembleextend}
\dot x_\sigma(t) =f_0(x_\sigma(t), \sigma) + \sum^m_{i = 1}\sum^{k}_{l = 0}\sum_{\rp\in {\cal P}(l + 1)} \left (u_{i,\rp}(t) \rp(\sigma) \right ) f_i(x_\sigma(t)), \quad \forall \sigma\in \Sigma. 
\end{equation}
The control inputs $u_{i,\rp}(t)$ in the above expression are defined such that 
%as follows: For $k = 1$, $u^{[1]}_i:= u_i$ for all $i = 1,\ldots, m$. For $k> 1$, we have
$$
u_{i,\rp}(t):= \sum_{\xi} \lambda_{\xi} \, u_{\xi, \rp}(t),
$$
where the summation is over Lie products~$\xi$ of depth~$(\deg(\rp) - 1)$ such that 
$
\xi  = \lambda_{\xi}  f_{i}
$. 
% Note that all the control inputs $u^{[k]}_i$'s are independent of each other.    
 
 We will now establish approximate ensemble path-controllability of system~\eqref{eq:simplifiedlieensembleextend} for a certain order~$k$.   
 Let $\hat x_\Sigma[0,T]$ be the trajectory of profiles we want the system~\eqref{eq:simplifiedlieensembleextend} to approximate. By the first item of Def.~\ref{def:distvecf}, we have that the set $\{f_i(x)\}^m_{i = 1}$ spans $T_xM$ for all $x\in M$. Thus, there are smooth functions $c_i(t, \sigma)$ (smooth in both $t$ and $\sigma$), for $i = 1,\ldots, m$,  such that the following hold:
 $$
 \frac{\partial \hat x_\sigma(t)}{\partial t} - f_0(\hat x_\sigma(t), \sigma) = 
 \sum^m_{i = 1} c_i(t, \sigma)f_i(\hat x_\sigma(t)), \quad \forall (t, \sigma)\in [0,T]\times \Sigma.  
 $$
Note, in particular, that if there exist an order~$k\ge 0$ and a set of 
control inputs $u_{i,\rp}$, for $i = 1, \ldots, m$ and $\rp$ a monomial with $1\le \deg(\rp) \le k + 1$, such that 
\begin{equation}\label{eq:equalitiesneedtosatisfy}
c_i(t, \sigma) = \sum^{k}_{l = 0}\sum_{\rp\in {\cal P}(l + 1)} u_{i,\rp}(t) \rp(\sigma), \quad \forall (t, \sigma)\in [0,T]\times \Sigma \, \mbox{ and } \, \forall i = 1,\ldots, m,  
\end{equation}
then the trajectory of profiles $x_\Sigma[0,T]$ generated by system~\eqref{eq:simplifiedlieensembleextend},  
%to the following differential equation: 
%\begin{equation}\label{eq:desiredtraj}
%\dot x_\sigma(t) =  f_0(x_\sigma, \sigma) + \sum^m_{i = 1} c_i(t, \sigma)f_i(\hat x_\sigma(t)), 
%\end{equation}
with $x_\Sigma(0) = \hat x_\Sigma(0)$, will be exactly $\hat x_\Sigma[0,T]$. Said in another way, if~\eqref{eq:equalitiesneedtosatisfy} holds, then one can steer the $k$th order Lie extended system~\eqref{eq:simplifiedlieensembleextend} to follow  the given target trajectory~$\hat x_\Sigma[0, T]$.

  But, in general, the equality~\eqref{eq:equalitiesneedtosatisfy} cannot be satisfied by a finite sum. Nevertheless, we show below that the two sides of the expression can be made arbitrarily close to each other provided that~$k$ is sufficiently large, i.e.,  
% Specifically, we will show that for any $\delta > 0$, there exist a~$k\ge 0$, a set of $\rp$ with $1\le \deg(\rp) \le k + 1$  
\begin{equation}\label{eq:whatwereallywant}
\left |  \sum^{k}_{l = 0}\sum_{\rp\in {\cal P}(l + 1)} u_{i,\rp}(t) \rp(\sigma) -  c_i(t, \sigma)  \right | < \delta, 
\end{equation}
for all $(t, \sigma)\in [0,T]\times \Sigma$ and for all $i = 1,\ldots, m$. 
 This essentially follows from the Stone-Weierstrass theorem.  We provide details below. Note that if~\eqref{eq:whatwereallywant} holds for any given $\delta > 0$, then the trajectory of profiles $x_\Sigma[0,T]$ generated by~\eqref{eq:simplifiedlieensembleextend} can be made uniformly and arbitrarily close to~$\hat x_\Sigma[0,T]$. %so that~\eqref{eq:ensemblecontrollabilityqipashuo} in Def.~\ref{def:ensemblecontrollability} is satisfied. 

By the assumption of Theorem~\ref{thm:mthm1}, the set $\{\rho_s\}^r_{s = 1}$ is a separating set and contains an everywhere nonzero function. Without loss of generality, we let $\rho_1$ be such a function, i.e., $\rho_1(\sigma) \neq 0$ for all $\sigma\in \Sigma$. It follows %from the Stone-Weierstrass theorem 
that the subalgebra generated by the set $\{\rho_s\}^r_{s = 1}$ is dense in ${\rm C}^0(\Sigma)$. %i.e., the space of continuous functions on~$\Sigma$.
Specifically, for any given $\delta' > 0$, there exist an integer~$k \ge 0$ and a set of smooth functions $u'_{i, \rp'}: [0,T]\to \R$, for $i = 1,\ldots, m$ and~$\rp'$ a monomial with $0\le \deg(\rp') \le k$, such that the following holds: %partition of unity over [0,T].
\begin{equation}\label{eq:whatwereallywant11}
\left |\sum^{k}_{l = 0}\sum_{\rp'\in {\cal P}(l)} u'_{i,\rp'}(t) \rp'(\sigma) - \rho^{-1}_1(\sigma) c_i(t, \sigma) \right | < \delta', 
\end{equation}       
 for all $(t, \sigma)\in [0,T]\times \Sigma$ and for all $i = 1,\ldots, m$.

 Let $\gamma:=\max\{ |\rho^{-1}_1(\sigma)| \mid \sigma\in \Sigma\}$ > 0. Note that $\gamma$ exists because $\rho_1$ is everywhere nonzero and $\Sigma$ is compact.  
 Now, given an arbitrary $\delta > 0$, we define $\delta':= \delta / \gamma$ and let the inequality~\eqref{eq:whatwereallywant11} be satisfied. By the definition of~$\gamma$, we have that  
$$
 \left |\sum^{k}_{l = 0}\sum_{\rp'\in {\cal P}(l)} u'_{i,\rp'}(t) (\rho_1(\sigma)\rp'(\sigma)) -  c_i(t, \sigma) \right | < \gamma \delta' = \delta,  
$$
for all $(t, \sigma)\in [0,T]\times \Sigma$ and for all $i = 1,\ldots, m$. 
 Note that each $\rho_1\rp'$ in the above expression is a monomial and $1\le \deg(\rho_1\rp') \le k + 1$. Next, for any $i = 1,\ldots, m$ and any monomial $\rp$ with $1\le \deg(\rp) \le k + 1$, we let the corresponding control input $u_{i,\rp}(t)$ be defined such that for any $t\in [0,T]$,    
 $$u_{i,\rp}(t):= 
 \left\{
 \begin{array}{ll}
 u'_{i,\rp'}(t) & \mbox{ if } \rp = \rho_1\rp' \mbox{ with } 0\le \deg(\rp') \le k, \\
 0 & \mbox{ otherwise}. 
 \end{array}
 \right. 
 $$  
With the above-defined control inputs $u_{i,\rp}(t)$, we conclude that~\eqref{eq:whatwereallywant} is satisfied.   

\subsection{Proof of ensemble observability}\label{ssec:peo}
We will now establish the second item of Theorem~\ref{thm:mthm1}. % which is about ensemble observability of system~\eqref{eq:ensemblesystem}. 
Let a profile $\bar x_\Sigma(0)$ be chosen such that it is output equivalent to $x_\Sigma(0)$. 
The majority of effort will be devoted to proving the following fact: 
If $\{\phi^j\}^l_{j = 1}$ is weakly codistinguished to $\{f_i\}^m_{i = 1}$, then there is an open neighborhood~$U$ of $x_\Sigma(0)$ in ${\rm C}^\omega(\Sigma,M)$ such that if $\bar x_\Sigma(0)$ intersects~$U$, then $\bar x_\Sigma(0) = x_\Sigma(0)$. After proving the fact, we will show that if, further, the common state space $M$ is compact, then $O(x_\Sigma(0))$ is a finite set and $|O(x_\Sigma(0))|\le \chi_\phi$. Finally, we will show that if $\{\phi^j\}^l_{j = 1}$ is codistinguished to $\{f_i\}^m_{i = 1}$, then $O(x_\Sigma(0)) = \{x_\Sigma(0)\}$.

To proceed, we first introduce a few key notations that will be used in the proof. For an arbitrary 
 differential equation $\dot x(t) = f(x(t))$, we denote by $e^{tf} x(0)$ the solution of the equation at time~$t$ with~$x(0)$ the initial condition. We will use such a notation to denote a solution $x_\sigma(t)$, for any $\sigma\in \Sigma$, of system~\eqref{eq:ensemblesystem}. Next, we recall that $u(t)$ is the collection of the control inputs $u_{i,s}(t)$, for $1 \le i \le m$ and $1\le s \le r$, in system~\eqref{eq:ensemblesystem}. We introduce a  notation for a piecewise constant control input~$u(t)$ over $[0, T]$ as follows:  
%Recall that $\{e_1,\ldots, e_m\}$ is the standard basis of $\R^m$. We introduce the following notation for a piecewise constant control input:  
\begin{equation}\label{eq:piecewiseconstant}
u[0, T] := (i_{1}, s_1, \nu_1, t_1)\cdots  (i_{k},  s_k, \nu_k, t_k), 
\end{equation} 
where  $0 < t_1  < \cdots  < t_{k} = T$ is an increasing sequence of switching times,  $\nu_p$'s are real numbers, and $(i_p, s_p)$'s are pairs of indices chosen out of $\{1,\ldots, m\} \times \{1,\ldots, r\}$. 
The piecewise constant control input $u[0,T]$ is defined such that if $t \in [t_{p - 1}, t_{p})$, then 
$$u_{i,s}(t) = \left\{
\begin{array}{ll}
\nu_p & \mbox{ if } (i, s) = (i_p, s_p),\\
0 & \mbox{ otherwise}. 
\end{array}\right.
$$  
Note, in particular, that at any time $t\in [0,T]$,  there is at most one nonzero scalar control input~$u_{i,s}(t)$ in~$u(t)$. %It should be clear that any such $u(t)$ is right-continuous. 

We will now apply the piecewise constant control input~\eqref{eq:piecewiseconstant} to excite system~\eqref{eq:ensemblesystem}. For convenience, we define $\tau_p:= t_{p} - t_{p-1}$ for all $p = 1, \ldots, k$ (with $t_0 := 0$).     
We further define a set of vector fields $\{\tilde f_p\}^k_{p = 1}$ as follows:  
$$
\tilde f_{p} := \nu_p \rho_{s_p} f_{i_p} + f_0, \quad \forall p = 1,\ldots, k, 
$$ 
where we have omitted all the arguments in the expression.    
Since $x_\Sigma(0)\sim \bar  x_\Sigma(0)$, we have that for all $j = 1,\ldots, l$,
$$
\int_\Sigma \phi^j\left (e^{\tau_k \tilde f_{k}}\cdots e^{\tau_1\tilde f_{1}}x_\sigma(0) \right ) d\sigma =  \int_\Sigma \phi^j\left (e^{\tau_k \tilde f_{k}}\cdots e^{\tau_1\tilde f_{1}}\bar x_\sigma(0)\right ) d\sigma. 
$$
Moreover, the above equality holds for any~$\tau_p$ and~$\nu_p$, with $p = 1,\ldots, k$.

We next take the partial derivative $\nicefrac{\partial^{k}}{\partial \tau_1\cdots\partial \tau_{k} }$ on both sides of the above expression and evaluate the derivatives at~$\tau_1 = \cdots = \tau_k =0$. By computation, we obtain that
$$
\int_\Sigma \left (\tilde f_{1} \cdots \tilde f_{k} \phi^j \right )(x_\sigma(0))   d\sigma = \int_\Sigma \left (\tilde f_{1} \cdots \tilde f_{k} \phi^j \right )(\bar x_\sigma(0))  d\sigma.
$$
We further take the partial derivative $\nicefrac{\partial^k}{\partial \nu_1\cdots \partial \nu_k}$ and evaluate  
at $\nu_1 = \cdots  = \nu_k = 0$. By computation, we obtain that
\begin{equation}\label{eq:meirenyu}
\int_\Sigma (f_{\rm w} \phi^j)(x_\sigma(0)) \rp(\sigma)  d\sigma = \int_\Sigma ( f_{\rm w} \phi^j)(\bar x_\sigma(0))\rp(\sigma)  d\sigma,
\end{equation}
where ${\rm w} := i_1\cdots i_k$ is a word and $\rp := \rho_{s_1}\cdots \rho_{s_k}$ is a monomial.   

Note that $\{\phi^j\}^l_{j = 1}$ is (weakly) codistinguished to $\{f_i\}^m_{i = 1}$. By the second item of Def.~\ref{def:codisth}, we have that for any $j' = 1,\ldots, l$, there exist a word ${\rm w}$ over the alphabet $\{1,\ldots, m\}$ of length~$k$, a function $\phi^{j}$, and a {\em nonzero}~$\lambda$ such that $f_{\rm w}\phi^{j} = \lambda\phi^{j'}$. Since~\eqref{eq:meirenyu} holds for all words ${\rm w}$ of length~$k$ and all monomials $\rp$ of degree~$k$ for~$k$ arbitrary, we obtain that for all $j = 1,\ldots, l$, 
\begin{equation}\label{eq:foranyk}
\int_\Sigma  \phi^j(x_\sigma(0)) \rp(\sigma)  d\sigma =  \int_\Sigma  \phi^j(\bar x_\sigma(0))\rp(\sigma)  d\sigma, %\quad \forall  j = 1,\ldots, l, 
\end{equation}
which holds for all monomials $\rp$.  
%In general, we consider a control law $u= \delta_1 e_{i_1}[t_0,t_1) \cdots \delta_k e_{i_k}[t_{k-1}, t_{k}]$. Since $ x_\Sigma$ and $\bar x_\Sigma$ are indistinguishable, for any nonnegative $s_i$'s and any real numbers $\delta_i$'s, we have 
%$$
%\int_\Sigma \phi^j(e^{s_{k} \tilde f_{i_{k}}} \cdots e^{s_1\tilde f_{i_1}}x_\sigma) d\sigma =  \int_\Sigma \phi^j(e^{s_{k} \tilde f_{i_{k}}} \cdots e^{s_1\tilde f_{i_1}}\bar x_\sigma) d\sigma. 
%$$ 
%Taking the partial derivative $\nicefrac{\partial^{2k}}{\partial s_1\cdots\partial s_{k} \partial \delta_1\cdots\partial \delta_{k} }$ of the above expression and then evaluating at $s_i =  \delta_i = 0$ for all $i = 1,\ldots, k$, we obtain that
%$$
%\int_\Sigma ( f_{i_1}\cdots  f_{i_k}\phi^j)(x_\sigma) \sigma^k d\sigma =  \int_\Sigma ( f_{i_1}\cdots  f_{i_k} \phi^j)(\bar x_\sigma) \sigma^k d\sigma.
%$$
%Since $\{\phi^j\}^l_{j = 1}$ is codistinguished to $\{f_i\}^m_{i = 1}$, there exist a word ${\rm w}$  of length $k$, a function $\phi^{j'}$, and a {\em nonzero}~$\lambda$ such that $f_{\rm w} \phi^{j'} = \lambda \phi^j$,  and hence
%\begin{equation}\label{eq:foranyk}
%\int_\Sigma  \phi^j(x_\sigma) \sigma^{k}  d\sigma =  \int_\Sigma  \phi^j(\bar x_\sigma)\sigma^{k}  d\sigma, \quad \forall j = 1,\ldots, l. 
%\end{equation}
%and moreover, the expression above holds for all $k\in \mathbb{N}$.

We now let ${\rm L}^2(\Sigma)$ be the Hilbert space space of all square-integrable functions on~$\Sigma$, where  
%i.e., $${\rm L}^2(\Sigma):= \left \{ {\rm q} \mid \int_\Sigma {\rm q}^2(\sigma) d\sigma < \infty \right \}.$$   
the inner-product is defined as follows:  
$$ 
\langle {\rm q}_1, {\rm q}_2 \rangle_{{\rm L}^2} : = \int_\Sigma {\rm q}_1(\sigma){\rm q}_2(\sigma) d\sigma, \quad \forall {\rm q}_1, {\rm q}_2\in {\rm L}^2(\Sigma).  
$$
Note that $\Sigma$ is compact. 
By the assumption of Theorem~\ref{thm:mthm1}, the set of parametrization functions $\{\rho_s\}^r_{s = 1}$ is separates points and contains an everywhere-nonzero function, so the subalgebra generated by the set is dense in ${\rm L}^2(\Sigma)$.   
Thus, if there is a function ${\rm q}\in {\rm L}^2(\Sigma)$ such that $\langle {\rm q}, \rp\rangle_{{\rm L}^2} = 0$ for all monomials $\rp\in {\cal P}$, then ${\rm q}$ is zero almost everywhere (it differs from the identically-zero function over a set of measure zero).  
In the case here, we define for each $j = 1,\ldots, l$ the following function:
$${\rm q}^j(\sigma) := \phi^j(x_\sigma(0)) - \phi^j(\bar x_\sigma(0)).$$ Then, one can re-write~\eqref{eq:foranyk} as follows: 
$$
\langle {\rm q}^j, \rp  \rangle_{{\rm L}^2} = 0, \quad \forall \rp\in {\cal P} \,\, \mbox{ and }\,\, \forall j = 1,\ldots, l.
$$
Because $x_\sigma(0), \bar x_\sigma(0)$ are analytic in~$\sigma$ and each $\phi^j(x)$ is analytic in~$x$, we have that each ${\rm q}^j(\sigma)$ is analytic in~$\sigma$. 
Furthermore, since $\Sigma$ is equipped with a strictly positive measure, we have that each~${\rm q}^j$ is identically zero, i.e.,   
\begin{equation}\label{eq:concludingtheproof}
\phi^j(x_\sigma(0)) =  \phi^j(\bar x_\sigma(0)), \quad \forall \sigma \in \Sigma \,\,\mbox{ and }\,\, \forall j = 1,\ldots,l.
\end{equation}

Since $\{\phi^j\}^l_{j = 1}$ is (weakly) codistinguished to $\{f_i\}^m_{i = 1}$, by the first item of Def.~\ref{def:codisth}, the set of one-forms $\{d\phi^j_{x}\}^l_{j = 1}$ spans the cotangent space $T^*_{x}M$ for all $x\in M$. It follows that for any $x\in M$, there is an open ball $B_{\epsilon(x)}(x)$ centered at~$x$ with radius $\epsilon(x) > 0$ such that if $\bar x\in B_{\epsilon(x)}(x)$ and $\phi^j(x) = \phi^j(\bar x)$ for all $j = 1,\ldots, l$, then $\bar x = x$. Furthermore, since each $\phi^j$ is analytic, for any fixed $x\in M$, the radius~$\epsilon(x)$ of the open ball can be chosen such that it is locally continuous around~$x$.     
Since the initial profile $x_\Sigma(0)$ is analytic in~$\sigma$, the above arguments have the following implication: for each $\sigma\in \Sigma$, there is an open neighborhood $V_\sigma$ of $\sigma$ in $\Sigma$ and a positive number~$\epsilon_\sigma$ such that if $\sigma'\in V_\sigma$ and $\bar x_{\sigma'}(0)$ belongs to the open ball $B_{\epsilon_\sigma}(x_{\sigma'}(0))$ with
$\phi^j(x_{\sigma'}(0)) = \phi^j(\bar x_{\sigma'}(0))$ for all $j = 1,\ldots, l$, then $\bar x_{\sigma'}(0) = x_{\sigma'}(0)$. 

 The collection of the above open sets $\{V_\sigma\}_{\sigma\in \Sigma}$ is an open cover of~$\Sigma$. Since~$\Sigma$ is compact, there exists a finite subcover  $\{V_{\sigma_{i}}\}^N_{i =1} $ of~$\Sigma$. We then let $$\epsilon:= \min\left \{\epsilon_{\sigma_i} \mid i = 1,\ldots, N\right \}> 0.$$ 
 We further let~$U$ be an open neighborhood of $x_\Sigma(0)$ in ${\rm C}^\omega(\Sigma, M)$ defined as follows:
$$
U:= \left \{x'_\Sigma \in {\rm C}^\omega(\Sigma, M) \mid {\rm d}_M(x_\sigma(0), x'_\sigma) < \epsilon, \quad \forall \sigma\in \Sigma \right \}. 
$$
We show below that if $\bar x_\Sigma(0)$ intersects~$U$, then $\bar x_\Sigma(0) = x_\Sigma(0)$. Note that if this is the case, then weak ensemble observability of system~\eqref{eq:ensemblesystem} is established. 

To establish the fact, we first assume that there is a certain $\sigma \in \Sigma$ such that ${\rm d}_M(x_\sigma(0), \bar x_\sigma(0)) < \epsilon$ and, hence, $\bar x_\Sigma(0)$ intersects~$U$. Then, by the definition of~$\epsilon$, we have that $\bar x_\sigma(0) =  x_\sigma(0)$. Now, let $\sigma'$ be any other point of~$\Sigma$. We need to show that $\bar x_{\sigma'}(0) = x_{\sigma'}(0)$. Because~$\Sigma$ is path-connected, there is a continuous path $p:[0,1] \to \Sigma$ with $p(0) = \sigma$ and $p(1) = \sigma'$. Again, by the definition of~$\epsilon$, we have that for any $\lambda\in [0,1]$, there are only two cases: Either $\bar x_{p(\lambda)}(0) = x_{p(\lambda)}(0)$ or ${\rm d}_M(x_{p(\lambda)}, \bar x_{p(\lambda)}) \ge \epsilon$. On the other hand, the profile $\bar x_\Sigma(0)$ is analytic in~$\sigma$ and $p(\lambda)$ is continuous in~$\lambda$, so $\bar x_{p(\lambda)}(0)$ is continuous in~$\lambda$ as well. But then, since $\bar x_{p(0)}(0) = x_{p(0)}(0)$, it must hold that $\bar x_{p(\lambda)}(0) = x_{p(\lambda)}(0)$ for all $\lambda\in [0,1]$. In particular, $\bar x_{\sigma'}(0) = x_{\sigma'}(0)$. %We have thus shown that $\bar x_\Sigma(0) = x_\Sigma(0)$. 
We have thus shown that if $\{\phi^j\}^l_{j = 1}$ is weakly codistinguished to $\{f_i\}^m_{i = 1}$, then system~\eqref{eq:ensemblesystem} is weakly ensemble observable.

We now show that if, further, $M$ is  compact, then $|O(x_\Sigma(0))| \le \chi_\phi$. Recall that $\phi:= (\phi^1,\ldots,\phi^l)$ and $[x]_\phi$ is the pre-image of $\phi(x)$.   
Because any two different profiles in $O(x_\Sigma(0))$ are completely disjoint, it suffices to show that $|[x_\sigma(0)]_\phi|\le \chi_\phi$ for some (and, hence, any) $\sigma\in \Sigma$. But, this follows from  
the definition of $\chi_\phi$ and 
Lemma~\ref{lem:finiteambiguity}.

Finally, note that if $\{\phi^j\}^l_{j = 1}$ is codistinguished to $\{f_i\}^m_{i = 1}$ (and, hence, the third item of Def.~\ref{def:codisth} is satisfied), then by~\eqref{eq:concludingtheproof}, $x_\sigma(0) = \bar x_\sigma(0)$ for all $\sigma\in \Sigma$, i.e., $O(x_\Sigma(0)) = \{ x_\Sigma(0) \}$.  Thus, system~\eqref{eq:ensemblesystem} is ensemble observable. This completes the proof.

\subsection{Pre-distinguished ensemble system}\label{ssec:preensemble}
We consider in the subsection a more challenging scenario where the set of control vector fields $\{f_i(x)\}^m_{i = 1}$ 
(resp. the set of one-forms $\{d\phi^j(x)\}^l_{j = 1}$) in system~\eqref{eq:ensemblesystem} does not necessarily span the tangent space $T_xM$ (resp. the cotangent space $T^*_xM$). Nevertheless, the two sets $\{f_i\}^m_{i = 1}$ and $\{\phi^j\}^l_{j = 1}$  together can ``generate'' (weakly) jointly distinguished vector fields and functions. We make the statement precise below.

To proceed, we first introduce a few definitions and notations. 
Let $F:= \{f_i\}^m_{i = 1}$ and ${\cal L}_F$ be the collection of Lie products generated by~$F$ (the $f_i$'s are treated as ``free'' generators).  We say that ${\cal L}_F$ is {\em projectively finite} if there is a finite set of vector fields $\bar F:= \{\bar f_i\}^{\bar m}_{i = 1}$ over~$M$ such that if one evaluates the Lie products in ${\cal L}_F$, then ${\cal L}_F\equiv \bar F$.  

Next, let ${\cal W}$ be the collection of all words over the alphabet $\{1,\ldots, m\}$. Recall that for a given word ${\rm w} = i_1\cdots i_k$ and an analytic function $\phi$ on~$M$, we use $f_{\rm w} \phi$  to denote $f_{i_1}\cdots f_{i_k}\phi$. If ${\rm w} = \varnothing$, then $f_{\rm w}\phi = \phi$.  
Given a set function $\Phi:=\{\phi^j\}^l_{j = 1}$ on~$M$ and the set of vector fields $F$, we define
$$
F_{\cal W} \Phi := \{f_{\rm w} \phi^j \mid  {\rm w} \in {\cal W} \mbox{ and } j = 1,\ldots, l \}. 
$$
%be defined as the collection of all $f_{\rm w}\phi^j$ for $j = 1,\ldots, l$ and ${\rm w}$ a word over the alphabet $\{1,\ldots, m\}$.
Similarly, we say that $F_{\cal W}\Phi$ is {\em projectively finite} if there is a finite subset $\bar \Phi :=\{\bar \phi^j\}^{\bar l}_{j = 1}$ of ${\rm C}^\omega(M)$ such that $F_{\cal W}\Phi \equiv \bar \Phi$. %Further, ${\cal  D}$  is said to be {\em (weakly) codistinguished} to~$F$ if $\bar \Phi$ is (weakly) codistinguished to~$F$.   
Note, in particular, that $F$ and $\Phi$ are, up to scaling, subsets of $\bar F$ and $\bar \Phi$, respectively. 
%We also note that if the two given sets $F$ are $\Phi$ are (weakly) jointly distinguished, then ${\cal L}_F \equiv F$ and $F_{\cal W} \Phi\equiv \Phi$.  
We now have the following definition:

%${\rm Gen}_F(\Phi) \subset {\rm C}^\omega(M)$ be comprised of  $f_{\rm w}\phi^j$ for all $j = 1,\ldots, l$ and for all words~${\rm w}$ over the alphabet $\{1,\ldots, m\}$, and $\overline \Phi_F$ be a minimal subset of ${\rm C}^\omega(M)$ which is equivalent to ${\rm Gen}_F(\Phi)$. 

\begin{definition}\label{def:predistinguishedvecf}
A set of vector fields $F := \{f_i\}^m_{i = 1}$ over~$M$ is {\bf pre-distinguished} if there exists a distinguished set $\bar F$ of vector fields such that $ {\cal L}_F \equiv \bar F$. Similarly, a set of functions $\Phi := \{\phi^j\}^l_{j = 1}$ on~$M$ is {\bf (weakly) pre-codistinguished} to~$F$ 
if there exists a finite set $\bar \Phi$ of functions, (weakly) codistinguished to~$F$, such that $F_{\cal W} \Phi\equiv \bar \Phi$.    %The two subsets $F$ and $\Phi$ are {\bf (weakly) jointly pre-dist.} if $\overline F$ and $\overline \Phi_F$ are (weakly) jointly distinguished. 
\end{definition}

Note that given a pair of jointly distinguished sets $F$ and $\Phi$, one can look for (proper) subsets $F'\subseteq F$ and $\Phi' \subseteq \Phi$ so that ${\cal L}_{F'} \equiv F$ and ${F'}_{\cal W}\Phi' \equiv \Phi$, i.e., $F'$ and $\Phi'$ are {\em jointly pre-distinguished}. In particular, we say that $(F', \Phi')$ is {\em minimal} if removal of any element out of $F'$ or $\Phi'$ will violate the condition in the above definition. We do not intend to characterize here minimal pairs for a given jointly distinguished pair $(F,\Phi)$. But instead, we provide below an example for illustration.    
%Note that if $F$ is distinguished and $\Phi$ is codistinguished to~$F$, then $\overline F = F$ and $\overline \Phi_F = \Phi$. In  this case, it is interesting to know whether there exist a proper subset $F'$ of $F$ and a proper subset $\Phi'$ of $\Phi$, with {\em minimal} cardinalities, such that $\overline{F'} = F$ and $\overline{\Phi'}_{F'}  = \Phi$. 

%If the two sets $\bar F$ and $\bar \phi$ are jointly distinguished, then $F$  

\begin{example}\label{exmp:fiandphij}{\em 
We consider again the  vector fields $F := \{L_{X_i}\}^3_{i = 1}$ and the functions 
 $\Phi := \{ \phi_{ij} = \tr(gX_j g^\top X^\top_i)\}^3_{i, j = 1}$ introduced in Example~\ref{emp:example1}. We have shown that~$F$ and~$\Phi$ are jointly distinguished on $\SO(3)$.  Now, we define for each $i = 1,2, 3$, a subset $F_i:= F - \{L_{X_i}\}$ and for each $j = 1,2,3$, a subset $\Phi^j:= \{\phi^{ij}\}^3_{i = 1}$.  Recall that we have the following relationships:   $$[L_{X_i},L_{X_j}] = \det(e_i,e_j,e_k)L_{X_k} \quad \mbox{ and } \quad L_{X_i} \phi^{i'j} = - \det(e_i,e_j,e_k)  \phi^{i'k}.$$ 
It follows that ${\cal L}_{F_i} \equiv F$ for all $i = 1,2,3$, and $F_{{i}_{\cal W}} \Phi^j \equiv \Phi$ for all $1\le i, j \le 3$. Moreover, every such pair $(F_i, \Phi^j)$ is minimal.  %are jointly pre-distinguished for any $i, j = 1,2,3$.  Moreover, the two sets $F_i$ and $\Phi^j$ are minimal in cardinalities.  
}
\end{example}

With the above definition,  
we state the following fact which generalizes Theorem~\ref{thm:mthm1}:  

\begin{theorem}\label{thm:mthm2}
Consider the ensemble system~\eqref{eq:ensemblesystem}. 
Suppose that $\{\rho^2_s\}^r_{s = 1}$ is a separating set and contains an everywhere nonzero function; then, the following hold: 
\begin{enumerate}
 \item[1)] If the set of control vector fields $\{f_i\}^m_{i = 1}$ is pre-distinguished, then system~\eqref{eq:ensemblesystem} is approximately ensemble path-controllable.
 \item[2)] If the set of observation functions $\{\phi^j\}^l_{j = 1}$ is (weakly) pre-codistinguished to $\{f_i\}^m_{i = 1}$, then system~\eqref{eq:ensemblesystem} is (weakly) ensemble observable. If, further, $M$ is compact, then for any initial profile $x_\Sigma(0)$, the set $O(x_\Sigma(0))$ is finite and $|O(x_\Sigma(0))| \le \chi_\phi$. 
 \end{enumerate}
\end{theorem}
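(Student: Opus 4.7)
The plan is to follow the template of Theorem~\ref{thm:mthm1} in both parts, replacing direct computations with $F$ and $\Phi$ by ones that descend, after iterated Lie extensions (for controllability) or repeated differentiation in the switching times (for observability), to the distinguished set $\bar F$ and the codistinguished set $\bar\Phi$ produced by Definition~\ref{def:predistinguishedvecf}. The strengthened hypothesis on $\{\rho_s^2\}$ enters only at the Stone--Weierstrass step, where it is used to bypass the parity constraints that pre-distinguished structures impose on accessible monomial degrees.

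For part~(1), I would carry out the iterated Lie extension of~\eqref{eq:controlpart} exactly as in Subsection~\ref{ssec:paec}. Since $F$ is pre-distinguished, ${\cal L}(l)\equiv\bar F$ upon evaluating Lie products, and the $k$th order Lie extension collapses, as in~\eqref{eq:simplifiedlieensembleextend}, to
\[
\dot x_\sigma(t) \;=\; f_0(x_\sigma(t),\sigma) + \sum_{\bar f\in\bar F}\sum_{d\in D_k(\bar f)}\sum_{\rp\in{\cal P}(d)} v_{\bar f,\rp}(t)\,\rp(\sigma)\,\bar f(x_\sigma(t)),
\]
where $D_k(\bar f)$ collects the degrees of monomials accessible to $\bar f$ via Lie products of depth at most $k$. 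Since $\bar F$ is distinguished it spans $T_xM$ pointwise, so a target trajectory $\hat x_\Sigma[0,T]$ determines coefficient functions $\bar c_{\bar f}(t,\sigma)$ that I need to approximate by $\sum_{d,\rp} v_{\bar f,\rp}(t)\rp(\sigma)$. The key structural observation is that each $\bar f\in\bar F$ is, by item~2 of Definition~\ref{def:distvecf}, obtainable as $\lambda[\bar g,\bar h]$ with $\bar g,\bar h\in\bar F$, so if $\bar g$ and $\bar h$ are accessible at depths $d_g,d_h$ then $\bar f$ is accessible at depth $d_g+d_h+1$; iterating shows that $D_\infty(\bar f)$ contains an entire arithmetic progression of common difference~$2$. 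The subalgebra generated by $\{\rho_s^2\}$ is dense in ${\rm C}^0(\Sigma)$ by Stone--Weierstrass, and multiplying it by a fixed nowhere-zero monomial of the appropriate minimum degree (which exists because some $\rho_s$ is nowhere zero) gives ${\rm C}^0$-density of the span of ${\cal P}(d)$ as $d$ ranges over either parity class. The Sussmann--Liu approximation then closes part~(1) exactly as in Subsection~\ref{ssec:paec}.

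For part~(2), I would follow Subsection~\ref{ssec:peo} verbatim to obtain the moment identity
\[
\int_\Sigma (f_{\rm w}\phi^j)(x_\sigma(0))\,\rp(\sigma)\,d\sigma \;=\; \int_\Sigma (f_{\rm w}\phi^j)(\bar x_\sigma(0))\,\rp(\sigma)\,d\sigma
\]
for every word ${\rm w}$ of length $k$, every $j$, every $\rp\in{\cal P}(k)$, and every $k\ge 0$. Because $F_{\cal W}\Phi\equiv\bar\Phi$, every $\bar\phi\in\bar\Phi$ equals $\lambda f_{\rm w}\phi^j$ for some triple $(\lambda,{\rm w},j)$, and the codistinguished structure of $\bar\Phi$ relative to $F$, combined with the converse part of item~2 of Definition~\ref{def:codisth}, allows the same $\bar\phi$ to be produced at word lengths in an arithmetic progression of common difference~$2$. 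The identity above therefore holds with $\bar\phi$ in place of $f_{\rm w}\phi^j$ for $\rp$ ranging over monomials of degrees in this progression, and these are ${\rm L}^2(\Sigma)$-dense by the $\{\rho_s^2\}$ hypothesis and the same Stone--Weierstrass trick used in part~(1). The analyticity argument that produced~\eqref{eq:concludingtheproof} then forces $\bar\phi(x_\sigma(0)) = \bar\phi(\bar x_\sigma(0))$ for every $\sigma$ and every $\bar\phi\in\bar\Phi$, after which items~(1) and~(3) of Definition~\ref{def:codisth} applied to $\bar\Phi$ yield weak (respectively full) ensemble observability; the bound $|O(x_\Sigma(0))|\le\chi_\phi$ when $M$ is compact follows from Lemma~\ref{lem:finiteambiguity} verbatim. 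The main obstacle throughout is this density step: in the pre-distinguished regime the accessible monomials are constrained to a fixed parity for each $\bar f$ and each $\bar\phi$, which is precisely why the hypothesis is strengthened from $\{\rho_s\}$ to $\{\rho_s^2\}$, and the delicate point is verifying both that an entire arithmetic progression of depths (not just a finite set) is really reached via the perfectness of $\bar F$ and the codistinguished structure of $\bar\Phi$, and that single-parity density of monomials is enough to approximate arbitrary continuous target functions on $\Sigma$.
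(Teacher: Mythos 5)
Your overall strategy is the same as the paper's: iterated Lie extension for controllability, the moment identity via repeated differentiation in the switching times for observability, collapsing onto $\bar F$ and $\bar\Phi$, and a Stone--Weierstrass density step. You also correctly identify the one genuinely new ingredient: one must show that the set of depths (resp.\ word lengths) at which each element of $\bar F$ (resp.\ $\bar\Phi$) is reachable contains an infinite arithmetic progression. However, there are two substantive gaps in how you handle that ingredient.

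First, "iterating shows that $D_\infty(\bar f)$ contains an entire arithmetic progression" is not a proof. The substitution $\bar f = \lambda[\bar g,\bar h]$ lets you replace one accessible depth for $\bar f$ by $d_g+d_h+1$, but iterating that operation produces a tree of possible depths, not a progression with a fixed common difference. The paper's Proposition~\ref{lem:arithmeticseq} establishes the progression by a telescoping argument that you omit: write $\lambda_1\xi_1=\bar f_i$ with $\xi_1\in{\cal L}_F$ of positive depth, replace the \emph{first} generator $f_{i_1}\in F$ appearing in $\xi_1$ by a positive-depth Lie product $\xi_2$ evaluating to $f_{i_1}$, and so on, producing a chain $f_{i_1},f_{i_2},\dots\in F$. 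Because $F$ is finite, the chain has a repetition $f_{i_k}=f_{i_{k'}}$, and the composite segment $\xi_{k+1}\lhd\cdots\lhd\xi_{k'}$ can be plugged into itself indefinitely, yielding $\{\delta_0+N\delta\}_{N\ge 0}\subseteq\N_i$ with $\delta=\sum_{l=k+1}^{k'}\dep(\xi_l)$. Without the cycle detection, the existence of the progression is simply asserted.

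Second, and more seriously, your claim that the common difference is~$2$ is unjustified and false in general: $\delta$ is whatever integer the telescoping cycle happens to produce, and nothing forces it to equal $2$. Your density argument for "either parity class" therefore breaks whenever $\delta>2$. The correct use of the hypothesis is that "$\{\rho_s^2\}_{s=1}^r$ separates points" implies "$\{\rho_s^\delta\}_{s=1}^r$ separates points" for \emph{every} positive integer $\delta$ (since $\rho_s(\sigma)^2\neq\rho_s(\sigma')^2$ gives $|\rho_s(\sigma)|\neq|\rho_s(\sigma')|$ and hence $\rho_s(\sigma)^\delta\neq\rho_s(\sigma')^\delta$). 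The paper then works with $\bar\rho_s:=\rho_s^\delta$, generates a dense subalgebra $\bar{\cal S}$, and multiplies by the nowhere-vanishing monomial $\rho_1^{\delta_0+1}$ so that all resulting degrees land in $\{\delta_0+1+N\delta\}_{N\ge 0}$. If you replace "common difference $2$" by "common difference $\delta$," "$\{\rho_s^2\}$" by "$\{\rho_s^\delta\}$," and "parity class" by "residue class modulo $\delta$" (shifted by $\delta_0+1$), your argument aligns with the paper's. As written, however, the proof has a hole precisely at the step you yourself flag as delicate.
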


Similar to Def.~\ref{def:distinguishedensemblesys}, we have the following:

\begin{definition}\label{def:predistinguishedensemblesys}
An ensemble system~\eqref{eq:ensemblesystem} is a {\bf pre-distinguished} if {\em 1)} the set $\{\rho^2_s\}^r_{s = 1}$ separates points and contains an everywhere nonzero function, and {\em 2)} the set of control vector fields $\{f_i\}^m_{i = 1}$  and the set   of observation functions $\{\phi^j\}^l_{j = 1}$ are (weakly) jointly pre-distinguished.  
\end{definition}

  %Similarly, the proof is divided into two parts---we establish ensemble controllability and observability separately. 
It follows from Theorem~\ref{thm:mthm2} that if a system is pre-distinguished, then it is approximately ensemble path-controllable and (weakly) ensemble observable.  We next have the following remark on the existence of a desired set of parametrization functions that satisfies the assumption of Theorem~\ref{thm:mthm2} (compared to Remark~\ref{rmk:existenceofrho1}): 

\begin{remark}{\em 
We first note that if $\{\rho^2_s\}^r_{s = 1}$ is a separating set, then, for any positive integer~$k$, $\{\rho^k_s\}^r_{s = 1}$ is also a separating set. 
Conversely, if $\{\rho_s\}^r_{s = 1}$ is a separating set and each $\rho_s$ is nonnegative (i.e., $\rho_s(\sigma) \ge 0$ for all $\sigma\in \Sigma$), then $\{\rho^2_s\}^r_{s = 1}$ will be a separating set. Such a set $\{\rho_s\}^r_{s = 1}$ exists for any analytic, compact manifold~$\Sigma$. To see this, we again embed $\Sigma$ into a Euclidean space~$\R^N$. Since $\Sigma$ is compact, one can translate the coordinates, if necessary, such that $\Sigma$ is embedded in the positive orthant of~$\R^N$. Then, by restricting the coordinate functions to $\Sigma$, we obtain a separating set $\{\rho_i(\sigma):= \sigma_i\}^N_{i = 1}$ comprised of all positive functions.   }
\end{remark}

We establish Theorem~\ref{thm:mthm2} in the following subsection.  

\subsection{Analysis and proof of Theorem~\ref{thm:mthm2}}
Let $\bar F =\{\bar f_i\}^{\bar m}_{i =1}$ be such that $\bar F \equiv {\cal L}_F$. Decompose ${\cal L}_F:= \sqcup_{k\ge 0}{\cal L}_F(k)$ where ${\cal L}_F(k)$ is comprised of Lie products of depth~$k$.   
In contrast to~\eqref{eq:skequiv}, we do not necessarily have that ${\cal L}(k)\equiv \bar F$ for all $k \ge 0$. It is possible that each ${\cal L}_F(k)$  is, up to scaling, a proper subset of~$\bar F$. 
To tackle the issue, we first introduce a few definitions:

\begin{definition}
Let ${\cal L}_F$ be projectively finite and $\bar F =\{\bar f_i\}^{\bar m}_{i =1}$ be such that $\bar F \equiv {\cal L}_F$. For each~$i = 1,\ldots, \bar m$, define a set of natural numbers $\mathbb{N}_i$ as follows: if $k\in \N_i$, then there exist a Lie product $\xi\in {\cal L}_F(k)$ and a real number~$\lambda$ such that by evaluating~$\xi$, we have $$\bar f_i = \lambda \xi.$$ 
We call every such sequence $\N_i$ an {\bf indicator sequence for $\bar f_i$}.    
\end{definition}

Similarly, we have the following the counterpart of the above definition:  

\begin{definition}
Let $F_{\cal W}\Phi$ be projectively finite and $\bar \Phi = \{\bar \phi^j\}^{\bar l}_{j = 1}$ be such that $\bar \Phi \equiv F_{\cal W}\Phi$. For each $j = 1,\ldots, \bar l$, define a set of natural numbers $\N^j$ as follows: if $k \in \N^j$, then there exist a word ${\rm w}$ of length~$k$ over the alphabet $\{1,\ldots, m\}$, a function $\phi^{j'}\in \Phi$, and a real number $\lambda$ such that $$\bar \phi^j =  \lambda  f_{\rm w}\phi^{j'}.$$  
We call every such sequence $\N^j$ an {\bf indicator sequence for $\bar \phi^j$}.  
\end{definition}

%Decompose ${\cal  D} = \sqcup_{k\ge 0}{\cal  D}(k)$ where ${\cal  D}(k)$ is comprised of all $f_{\rm w}\phi^j$ for $j = 1,\ldots, l$ and ${\rm w}$ a word of length~$k$. Then,    

Note that if~$F$ and $\Phi$ are (weakly) jointly distinguished, then $\N_i = \N^j = \N$ for all $i = 1,\ldots, m (=\bar m)$ and for all $j = 1,\ldots, l (=\bar l)$. 

\begin{example}{\em 
	Consider the subsets $F_1 = F - \{L_{X_1}\}$ and $\Phi^1 =\{\phi^{i1}\}^3_{i = 1}$ introduced in Example~\ref{exmp:fiandphij}. We have that ${\cal L}_{F_1} \equiv F$ and $F_{1_{\cal W}}\Phi^1 \equiv \Phi$. By computation (with details omitted), the indicator sequences $\N_i$ for $L_{X_i}$ are given by $\N_1 = \{2k + 1\}_{k\ge 0}$ and $\N_2 = \N_3 = \{2k\}_{k \ge 0}$.  The indicator sequences $\N^{ij}$ for $\phi^{ij}$ are given by 
$\N^{i1} = \{2k\}_{k\ge 0}$ and $\N^{i2} = \N^{i3} = \{2k + 1\}_{k\ge 0}$ for all $i = 1,2,3$.  	   }
\end{example}

We call a monotonically increasing sequence~$\{n_k\}^\infty_{k = 0}$, with $n_k\in \N$, an {\bf arithmetic sequence} if there is a positive integer~$\delta$ such that $n_{k + 1} - n_k = \delta$ for all $k \ge 0$. In the above example, each indicator sequence $\N_i$ for $L_{X_i}$ (or $\N^{ij}$ for $\phi^{ij}$) is an arithmetic sequence with $\delta = 2$. We generalize this fact in the following proposition, which will be of great use in the proof of Theorem~\ref{thm:mthm2}: 

\begin{proposition}\label{lem:arithmeticseq}
	Every indicator sequence $\mathbb{N}_i$ for $\bar f_i$ (or $\mathbb{N}^j$ for $\bar \phi^j$) contains an infinite arithmetic sequence as a subsequence.  
\end{proposition}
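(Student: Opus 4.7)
The plan is to prove a $\delta$-closure property for each indicator sequence: there is a positive integer $\delta_i$ with $k\in\N_i\Rightarrow k+\delta_i\in\N_i$, which yields the infinite arithmetic progression $\{k_0+n\delta_i\}_{n\ge 0}\subseteq\N_i$ from any $k_0\in\N_i$. I will write out the argument for $\N_i$; the $\N^j$-case is structurally identical after replacing the Lie bracket on $\bar F$ by the codistinguished pairing $(f_a,\bar\phi^k)\mapsto f_a\bar\phi^k$ with $f_a\in F$, and Lie-product depth by word length.

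\textbf{Step 1 (pigeonhole return identity).} Fix $\bar f_{i_0}\in\bar F$ and apply the second clause of Def.~\ref{def:distvecf} repeatedly to build a ``first-slot'' chain: $\bar f_{i_0}=\mu_0[\bar f_{i_1},\bar f_{j_1}]$, $\bar f_{i_1}=\mu_1[\bar f_{i_2},\bar f_{j_2}]$, \ldots, with all $\mu_r\ne 0$ and all $\bar f_{i_r},\bar f_{j_r}\in\bar F$. Since $\bar F$ is finite, the sequence $(\bar f_{i_r})$ must repeat: $\bar f_{i_s}=\bar f_{i_t}$ for some $0\le s<t$. Telescoping the identities from $r=s$ to $r=t-1$ yields
\[
\bar f_{i_s}=M\,\bigl[\,[\cdots[\bar f_{i_s},\bar f_{j_t}],\bar f_{j_{t-1}}]\cdots,\bar f_{j_{s+1}}\,\bigr],\qquad M:=\mu_s\cdots\mu_{t-1}\ne 0,
\]
which rearranges to $\ad(\bar f_{j_{s+1}})\cdots\ad(\bar f_{j_t})(\bar f_{i_s})=c\,\bar f_{i_s}$ with $c\ne 0$.

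\textbf{Steps 2--3 (syntactic lift and propagation).} Because ${\cal L}_F\equiv\bar F$, each $\N_{j_r}$ is nonempty; pick any representations $\eta_{j_r}\in{\cal L}_F(d_{j_r})$ of $\bar f_{j_r}$. For $\xi\in{\cal L}_F(k)$ with $\xi=\lambda\bar f_{i_s}$, the syntactic Lie product $\xi':=[\eta_{j_{s+1}},[\eta_{j_{s+2}},\cdots[\eta_{j_t},\xi]\cdots]]$ has depth $k+\delta_{i_s}$ with $\delta_{i_s}:=\sum_{r=s+1}^{t}(d_{j_r}+1)$ and, by Step~1 together with the evident scalar factors from $\eta_{j_r}=\lambda_{j_r}\bar f_{j_r}$, evaluates to a nonzero scalar multiple of $\bar f_{i_s}$. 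Hence $k\in\N_{i_s}\Rightarrow k+\delta_{i_s}\in\N_{i_s}$, and iteration gives the AP in $\N_{i_s}$. To transfer it to $\N_{i_0}$, unroll the first $s$ substitutions to get $\bar f_{i_0}=M'[\cdots[\bar f_{i_s},\bar f_{j_s}]\cdots,\bar f_{j_1}]$ with $M'\ne 0$; inserting the depth-$(k+n\delta_{i_s})$ representations of $\bar f_{i_s}$ and fixed representations of each $\bar f_{j_r}$, $r\le s$, produces Lie products realizing $\bar f_{i_0}$ at depths $\Delta+k+n\delta_{i_s}$, an AP in $\N_{i_0}$ with common difference $\delta_{i_s}$.

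\textbf{Main obstacle and the $\bar\phi^j$-case.} The heart of the argument is Step~1, where two opposing features must cooperate: the nonzero-$\lambda$ clauses in Defs.~\ref{def:distvecf}(2) and~\ref{def:codisth}(2) keep the first-slot chain from collapsing to $0$, while finiteness of $\bar F$ (resp.\ $\bar\Phi$) forces repetition; together they produce the algebraic return identity. Without the first, the telescope would degenerate; without the second, no return would be forced. Everything else is bookkeeping. For $\N^j$ the same scheme runs on $\bar\Phi$: using that $\bar\Phi$ is codistinguished to $F$, chain $\bar\phi^{j_0}=\nu_0 f_{a_1}\bar\phi^{j_1}=\nu_0\nu_1 f_{a_1}f_{a_2}\bar\phi^{j_2}=\cdots$ in which each step uses a single alphabet letter $f_{a_r}\in F$, apply pigeonhole in $\bar\Phi$, and telescope to obtain a word $\mathrm{w}$ of length $t-s$ with $f_\mathrm{w}\bar\phi^{j_s}=c\,\bar\phi^{j_s}$, $c\ne 0$. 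Concatenating $\mathrm{w}$ with any realizing word for $\bar\phi^{j_s}$ gives an AP in $\N^{j_s}$, which propagates back to $\N^{j_0}$ by prepending the letters $a_1\cdots a_s$.
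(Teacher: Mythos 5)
Your proof is correct, and its engine is the same as the paper's: use the nonzero-$\lambda$ clause to build a chain of bracket decompositions, apply pigeonhole to a finite set to force a cycle, telescope to a nonzero ``return identity,'' and then iterate to obtain a $\delta$-closure property on the indicator sequence, with a final shift to transfer the AP back to the target index. The presentational difference is worth noting: you run the chain entirely at the level of $\bar F$ (respectively $\bar\Phi$), using the distinguished (resp.\ codistinguished) property of the \emph{target} set to get single-bracket steps $\bar f_{i_r}=\mu_r[\bar f_{i_{r+1}},\bar f_{j_{r+1}}]$, derive the return identity algebraically, and only afterward lift to ${\cal L}_F$ by choosing arbitrary representatives $\eta_{j_r}$. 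The paper instead builds the chain directly inside ${\cal L}_F$ with general-depth Lie products $\xi_k$, tracks the ``first element'' $f_{i_k}\in F$ of each, pigeonholes on $F$, and uses a syntactic substitution operator $\lhd$ on Lie products to realize the depth arithmetic; the algebraic and syntactic bookkeeping are interleaved rather than separated. Your decomposition is a cleaner modularization and makes the role of the distinguished structure of $\bar F$ explicit, at the small cost of an extra lifting step; the paper's $\lhd$-based version makes the depth bookkeeping automatic. One small wording issue: your opening ``Plan'' claims $\delta$-closure of $\N_i$ itself, but the argument actually establishes $\delta$-closure of $\N_{i_s}$ and then \emph{shifts} the resulting AP into $\N_{i_0}$; this is exactly what is needed for the proposition, but the plan statement overclaims slightly.
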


\begin{proof}
%The proof will be divided into two parts. 
We establish the proposition for $\N_i$ and $\N^j$ subsequently. 

{\em Proof for $\N^i$.} 
We fix an $i = 1,\ldots, \bar m$, and prove that~$\N_i$ contains an arithmetic sequence. Because~$F$ is pre-distinguished, there exists a Lie product $\xi_1\in {\cal L}_F$, with $\dep(\xi_1) \ge 1$, and a real number $\lambda_1$ such that $\lambda_1 \xi_1 = \bar f_i$. Denote by~$f_{i_1}\in F$ the first element that shows up in $\xi_1$ (e.g., $\xi_1 = [f_{i_1}, [f_{i'_1},f_{i''_1}]]$). Applying the same argument, but with $\bar f_{i}$ replaced by $f_{i_1}$, we obtain that $\lambda_2 \xi_2 = f_{i_1}$ for some $\xi_2\in {\cal L}_F$ with $\dep(\xi_2) \ge 1 $ and some $\lambda_2\in \R$. 
%Similarly, we let~$f_{i_2}$ be the first element that shows up in~$\xi_2$.  

Next, we let ${\xi_1} \lhd \xi_2$ be a Lie product in ${\cal L}_F$ defined by replacing the first element~$f_{i_1}$ in~$\xi_1$ with the Lie product~$\xi_2$. For example, if $\xi_1 = [f_{i_1}, [f_{i'_1},f_{i''_1}]]$, then  ${\xi_1} \lhd \xi_2 = [\xi_2, [f_{i'_1},f_{i''_1}]]$. It should be clear that $$\lambda_1\lambda_2 {\xi_1} \lhd \xi_2 = f_i, \quad \mbox{with } \dep({\xi_1} \lhd \xi_2) = \dep({\xi_1}) + \dep({\xi_2}).$$ 
By repeating the above procedure, we obtain {\em 1)} a sequence of Lie products $\{\xi_k\}_{k\ge 1}$, {\em 2)} a sequence of vector fields $\{f_{i_k}\}_{k\ge 1}$ with $f_{i_k}\in F$,  and {\em 3)} a sequence of real numbers $\{\lambda_k\}_{k\ge 1}$ such that the first element in $\xi_k$ is $f_{i_k}$ and $\lambda_k \xi_k = f_{i_{k-1}}$.  
It then follows that 
$$
\alpha_k {\xi_1} \lhd \cdots \lhd \xi_k = f_i\,\, \mbox{ where } \alpha_k:=\prod^k_{l = 1} \lambda_l, \quad \forall k \ge 1.
$$ 
Note that ${\xi_1} \lhd \cdots \lhd \xi_k$ is well defined because the operator~``$\lhd$'' is associative. 

Since each $f_{i_k}$ belongs to the finite set $F$, there is a repetition in the sequence. Without loss of generality, we assume that $f_{i_k} = f_{i_{k'}}$ for some $k' > k \ge 1$.  
We then define a Lie product $\xi$ as follows: 
$$\xi:= \xi_{k + 1}\lhd \cdots \lhd \xi_{k'} \quad \mbox{and} \quad \delta: = \dep(\xi) = \sum^{k'}_{l = k + 1}\dep(\xi_l).$$  
Note that the first element in $\xi$ is~$f_{i_k}$ and  
$\nicefrac{\alpha_{k'}}{\alpha_k} \xi  = f_{i_k}$. In fact, the statement can be strengthened:  
For any given $N \ge 0$, we define $$\xi^N:= \xi\lhd \cdots \lhd \xi$$ 
where the number of copies of $\xi$ in the expression is~$N$. 
If $N = 0$, then we let $\xi^0:=f_{i_k}$.  %which is the first element in $\xi$.  
It should be clear that for any $N \ge 0$, the first element in $\xi^N $ is~$f_{i_k}$ and, moreover, 
$
\nicefrac{\alpha_{k'}^N}{\alpha_k^N} \xi^N  = f_{i_k}
$. 
We further define a Lie product $\xi_0$ as follows: 
$$\xi_0:= \xi_{1}\lhd\cdots \lhd \xi_k \quad \mbox{and} \quad  \delta_0:= \dep(\xi_0) = \sum^k_{l = 1}\dep(\xi_l).$$ 
It then follows that for any $N\ge 0$.
$$
\left(\nicefrac{\alpha_{k'}^N}{\alpha_k^{N - 1}} \right) \xi_0 \lhd \xi^N = \bar f_{i},
$$  
which implies that $\N_i$ contains $\{\delta_0 + N\delta\}_{N\ge 0}$ as a subsequence.

{\em Proof for $\N^j$.} The arguments will be similar to the ones used above. We fix a $j = 1,\ldots, \bar l$, and prove that $\N^j$ contains an arithmetic sequence. %Similar arguments will be used to establish the fact.  
Since~$\Phi$ is pre-codistinguished to~$F$, there exist a word ${\rm w}_1$ of positive length, a function $\phi^{j_1}$ out of~$\Phi$, and a real number~$\mu_1$ such that $\mu_1 f_{{\rm w}_1}\phi^{j_1} = \bar \phi^j$. Applying the same argument, but with $\bar \phi^j$ replaced by $\phi^{j_1}$, we obtain that $\mu_2 f_{{\rm w}_2}\phi^{j_2} = \phi^{j_1}$ for some word ${\rm w}_2$ of positive length, some function $\phi^{j_2}$ out of $\Phi$, and some real number~$\mu_2$. Note, in particular, that $$\mu_1\mu_2f_{{\rm w}_1}f_{{\rm w}_2}\phi^{j_2} = \bar \phi^{j}.$$  

By repeating the procedure, we obtain {\em 1)} a sequence of functions $\{\phi^{j_k}\}_{k\ge 1}$ where each $\phi^{j_k}$ belongs to~$\Phi$, {\em 2)} a sequence of words $\{{\rm w}_k\}_{k \ge 1}$ of positive lengths, and {\em 3)} a sequence of real numbers $\{\mu_k\}_{k\ge 1}$ such that $\mu_k f_{{\rm w}_k} \phi^{j_k} = \phi^{j_{k-1}}$. It then follows that
$$
\beta_k f_{{\rm w}_1} \cdots f_{{\rm w}_k}\phi^{j_k} = \bar \phi^j \, \, \mbox{ where } \beta_k:=\prod^k_{l = 1} \mu_l, \quad \forall k \ge 1.   
$$
Since each $\phi^{j_k}$ belongs to the finite set $\Phi$, there is a repetition in the sequence, say $\phi^{j_{k}} = \phi^{j_{k'}}$ for some $k' > k \ge 1$. It then implies that
$
\nicefrac{\beta_{k'}}{\beta_k} f_{{\rm w}}\phi^{j_{k}} = \phi^{j_k}$ where ${\rm w} := {\rm w}_{k + 1} \cdots {\rm w}_{k'}$ is obtained by concatenation. Denote by~$\delta$ the length ${\rm w}$. For a nonnegative integer~$N$, we let ${\rm w}^N$ be a word obtained by concatenating~$N$ copies of~${\rm w}$. If $N = 0$, then ${\rm w}^N = \varnothing$. 
We further let ${\rm w}_0:= {\rm w}_{1} \cdots {\rm w}_{k}$ and $\delta_0$ be the length of~${\rm w}_0$. It then follows that for any $N \ge 0$, 
$$
\left (\nicefrac{\beta^{N}_{k'}}{\beta^{N - 1}_k} \right ) f_{{\rm w}_0} f_{{\rm w}^N}\phi^{j_k} = \bar \phi^j, 
$$
which implies that $\N^j$ contains $\{\delta_0 + N\delta\}_{N\ge 0}$ as a subsequence. 
\end{proof}

With Prop.~\ref{lem:arithmeticseq} at hand, we now prove Theorem~\ref{thm:mthm2}:    

\begin{proof}  
The proof of Theorem~\ref{thm:mthm2} will be similar to the proof of Theorem~\ref{thm:mthm1}. We emphasize below the difference between the two proofs.  

We first establish item~1 of Theorem~\ref{thm:mthm2}. By repeatedly applying Lie extensions of system~\eqref{eq:ensemblesystem}, we obtain the following:
$$
\dot x_\sigma(t) = f_0(x_\sigma(t),\sigma) + \sum_{l \ge 0}\sum_{\xi \in {\cal L}_F(l)}\sum_{\rp \in {\cal P(l + 1)}} u_{\xi, \rp}(t) \rp(\sigma) \xi(x_\sigma(t)), \quad \forall \sigma\in \Sigma. 
$$
One obtains a $k$th order Lie extended system by truncating the infinite summation over~$l$ and keeping only the terms with~$l \le k$. Because~$F = \{f_i\}^m_{i = 1}$ is pre-distinguished, we let $\bar F =\{\bar f_i\}^{\bar m}_{i =1}$ be such that $\bar F \equiv {\cal L}_F$. 
 Then, by the definition of indicator sequence~$\N_i$ for $\bar f_i$, the above equation can be simplified as follows:
$$
\dot x_\sigma(t) = f_0(x_\sigma(t),\sigma) + \sum^{\bar m}_{i = 1}\sum_{l\in \N_i}\sum_{\rp \in {\cal P(l + 1)}} u_{i, \rp}(t) \rp(\sigma) \bar f_i(x_\sigma(t)), \quad \forall \sigma\in \Sigma.
$$  
%The arguments used in Section~\ref{ssec:paec} imply that   
To establish  ensemble controllability of the  above system (or more precisely, a truncated version after a certain order), it suffices to show that for any $i = 1,\ldots, \bar m$, the %following set of monomials: 
$\R$-span of monomials in $\sqcup_{l\in \N_i}{\cal P(l+1)}$ is dense in ${\rm C}^0(\Sigma)$. We prove this fact below.

We fix an~$i = 1,\ldots, \bar m$. 
By Prop.~\ref{lem:arithmeticseq}, the indicator sequence $\N_i$ contains an infinite arithmetic sequence, which we denote by $\{n_k\}_{k \ge 0}$ with $\delta := n_{k + 1} - n_k > 0$ for all $k \ge 0$. We next define functions on $\Sigma$ as follows: $$\bar \rho_s:=\rho^\delta_s, \quad \forall s = 1,\ldots, r.$$ By the assumption of Theorem~\ref{thm:mthm2}, the set $\{\rho^2_s\}^r_{s = 1}$ is a separating set and contains an everywhere nonzero function, say~$\rho_1$. It follows that $\{\bar \rho_s\}^r_{s = 1}$ is also a separating set with $\bar \rho_1$ an everywhere nonzero function. Thus, the subalgebra generated by $\{\bar \rho_s\}^r_{s = 1}$ is dense in ${\rm C}^0(\Sigma)$. Denote the subalgebra by $\bar {\cal S}$. 
Since $\rho_1$ is everywhere nonzero, the following set: $$\rho^{n_0 + 1}_1 \bar{\cal S}:= \left \{\rho^{n_0 + 1}_1\rp \mid \rp\in \bar{\cal S}\right \}$$ is dense in ${\rm C}^0(\Sigma)$ as well. On the other hand, the $\R$-span of $\sqcup_{l\in \N_i}{\cal P}(l + 1)$ contains $\rho_1^{n_0+1}\bar {\cal S}$ as a subset; indeed, if~$\rp$ is a monomial that can be expressed as $$\rp = \rho^{n_0 + 1}_1 \prod^r_{s = 1}\bar \rho^{k_s}_s$$  with $k_s \ge 0$, then $\rp \in {\cal P}(n_k + 1)$  where $k := \sum^r_{s = 1} k_s$. 
We have thus shown that the $\R$-span of $\sqcup_{l\in \N_i}{\cal P(l+1)}$ is dense in ${\rm C}^0(\Sigma)$.

We now establish item~2 of Theorem~\ref{thm:mthm2}. Let $\bar x_\Sigma(0)$ and $x_\Sigma(0)$ two initial profiles that are output equivalent.  The same arguments in Section~\ref{ssec:peo} can be used here to obtain the following fact: Let~$k\ge 0$ be an arbitrary integer. Let ${\rm w}$ be any word of length~$k$ and~$\rp$ be any monomial of degree~$k$. Then, for any $j = 1,\ldots, l$, we have 
\begin{equation}\label{eq:wcsamericas}
\int_\Sigma (f_{\rm w} \phi^j)(x_\sigma(0)) \rp(\sigma)  d\sigma = \int_\Sigma ( f_{\rm w} \phi^j)(\bar x_\sigma(0))\rp(\sigma)  d\sigma.
\end{equation}
Because $\Phi=\{\phi^j\}^l_{j = 1}$ is (weakly) pre-codistinguished to~$F$, 
we let $\bar \Phi = \{\bar \phi^j\}^{\bar l}_{j = 1}$ be such that $F_{\mathcal{W}}\Phi = \bar \Phi$.  
For each $j = 1,\ldots, \bar l$, we define a function ${\rm q}^j$ on~$\Sigma$ as follows:  
$${\rm q}^j(\sigma):= \bar \phi^j(x_\sigma(0)) - \bar \phi^j(\bar x_\sigma(0)).$$  
By the definition of indicator sequence~$\N^j$ for $\bar \phi^j$, we can simplify~\eqref{eq:wcsamericas} as follows:  
$$
\langle {\rm q}^j, \rp \rangle_{{\rm L}^2} = 0, \quad  \forall \rp\in \sqcup_{l\in \N^j}{\cal P}(l).  
$$
Note that the above expression holds for all $j = 1,\ldots, \bar l$.  
It now suffices to show that the $\R$-span of $\sqcup_{l\in \N^j}{\cal P}(l)$ is dense in ${\rm L}^2(\Sigma)$. This, again, follows from Prop.~\ref{lem:arithmeticseq}; indeed, since $\mathbb{N}^j$ contains an infinite arithmetic sequence, it follows by the same arguments (for $\mathbb{N}_i$) that the $\R$-span of $\sqcup_{l\in \N^j}{\cal P}(l)$ is dense in ${\rm C}^0(\Sigma)$. Because $\Sigma$ is compact, ${\rm C}^0(\Sigma)$ is dense in ${\rm L}^2(\Sigma)$. This  completes the proof.     
\end{proof}

\section{Existence of Distinguished Ensemble Systems}\label{sec:onliegroup}

We have shown in the previous section 
%Theorems~\ref{thm:mthm1}-\ref{thm:mthm3} 
that (weakly) jointly  distinguished vector fields $\{f_i\}^m_{i = 1}$ and functions $\{\phi^j\}^l_{j = 1}$ are key ingredients for an ensemble system to be approximately ensemble path-controllable and (weakly) ensemble observable.  
We address in the section the issue about the existence of these finely structured vector fields and functions for a given manifold~$M$.     
%\begin{problem}
%Given a manifold~$M$, are there  (weakly) jointly distinguished vector fields  and functions? 
%\end{problem} 
Amongst other things, we provide an affirmative answer for the case where $M$ is a connected, semi-simple Lie group:  

\begin{theorem}\label{thm:mthm4}
For any connected semi-simple Lie group $G$, there exist weakly jointly distinguished vector fields $\{f_i\}^m_{i = 1}$ and functions $\{\phi^j\}^l_{j = 1}$ on~$G$. Moreover, if $G$ has a trivial center, then $\{f_i\}^m_{i = 1}$ and $\{\phi^j\}^l_{j = 1}$ are jointly distinguished. 
\end{theorem}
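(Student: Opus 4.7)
The plan is to combine two ingredients already available in the paper: the existence of a distinguished set in any semi-simple real Lie algebra (recalled from \cite{chen2017distinguished}) together with a matrix-coefficient construction associated with the adjoint representation, generalizing exactly the $\SO(3)$ computation carried out in Example~\ref{emp:example1}. Throughout, let $\g$ be the Lie algebra of $G$, fix a distinguished set $\{X_i\}_{i=1}^m\subset \g$ guaranteed by \cite{chen2017distinguished}, and let $B:\g\times\g\to\R$ be a non-degenerate symmetric bilinear form that is invariant under $\Ad$; for semi-simple $\g$ the Killing form is such a form. Set $f_i := L_{X_i}$ and define the candidate codistinguished functions
\[
\phi^{ij}(g) \;:=\; B\!\bigl(X_i,\,\Ad(g)X_j\bigr), \qquad 1\le i,j\le m.
\]

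First I would verify that $\{f_i\}_{i=1}^m$ is distinguished on $G$. The spanning condition at every point is immediate from left-invariance since $\{X_i\}$ spans $\g=T_eG$, and the Lie-bracket closure is inherited from the distinguished structure of $\{X_i\}$ in $\g$ via $L_{[X_i,X_j]}=[L_{X_i},L_{X_j}]$. Next I would verify items (i) and (ii) of Def.~\ref{def:codisth} for $\{\phi^{ij}\}$. Using $\Ad$-invariance of $B$ and a direct computation of the Lie derivative along a left-invariant field, one obtains
\[
(L_{X_k}\phi^{ij})(g) \;=\; B\!\bigl(X_i,\,\Ad(g)[X_j,X_k]\bigr).
\]
Because $\{X_i\}$ is distinguished, $[X_j,X_k]=\lambda X_\ell$ for some index $\ell$ and constant $\lambda$, so $L_{X_k}\phi^{ij} = \lambda\phi^{i\ell}$; conversely every $\phi^{i\ell}$ is produced this way by the second item of Def.~\ref{def:distvecf}. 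This settles item (ii). For item (i), writing $Y_{ij}(g):=[\Ad(g^{-1})X_i,X_j]$ one has $d\phi^{ij}_g(L_X(g)) = B(X, Y_{ij}(g))$ (up to a sign coming from the identification $L_X(g)=gX$); since $B$ is non-degenerate, spanning of $\{d\phi^{ij}_g\}$ in $T_g^*G$ reduces to spanning of $\{Y_{ij}(g)\}$ in $\g$. Both $\{X_j\}$ and $\{\Ad(g^{-1})X_i\}$ span $\g$, and semi-simplicity gives $[\g,\g]=\g$, so $\operatorname{span}\{Y_{ij}(g)\}=\g$ as required.

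The main obstacle, and the point where the statement splits into the two cases, is item (iii), the separation of points. If $\phi^{ij}(g)=\phi^{ij}(g')$ for all $i,j$, then by bilinearity and by the fact that $\{X_j\}$ spans $\g$, we deduce $B(X_i, \Ad(g)X_j) = B(X_i, \Ad(g')X_j)$ for all $X_j\in\g$; non-degeneracy of $B$ plus spanning of $\{X_i\}$ then forces $\Ad(g)=\Ad(g')$, i.e.\ $g^{-1}g'\in\ker\Ad = Z(G)$. When $G$ has trivial center this yields $g=g'$ and the set is codistinguished; otherwise only items (i) and (ii) hold, giving weak codistinguishedness. This is precisely the dichotomy asserted in the theorem.

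A few technicalities I would handle along the way: semi-simplicity does not guarantee sign-definiteness of the Killing form, but only non-degeneracy is used above, so the argument is unaffected; the vector fields and functions are clearly analytic since $\Ad$ and $B$ are; and when $G$ is a linear (matrix) group, the bilinear form can be taken to be $B(X,Y)=\tr(XY^\top)$ (as in Example~\ref{emp:example1}), recovering exactly $\phi^{ij}(g)=\tr(gX_jg^{-1}X_i^\top)$ as stated in the introduction. Combining these pieces produces the $(F,\Phi)$ pair required by Theorem~\ref{thm:mthm4}, with the joint distinguishedness upgrading to the strong form exactly when the center is trivial.
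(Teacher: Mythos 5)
Your proof takes the same overall route as the paper: combine the existence of a distinguished set $\{X_i\}_{i=1}^m\subset\g$ (Prop.~\ref{thm:basisliealg}, from \cite{chen2017distinguished}) with a matrix-coefficient construction for the adjoint representation, then verify the three items of Def.~\ref{def:codisth}. The meaningful difference is the choice of bilinear form. The paper defines $\phi^{ij}(g)=B_\theta(\Ad(g)X_j,X_i)$ with $B_\theta(X,Y)=-B(X,\theta Y)$, which is positive definite but \emph{not} $\Ad(G)$-invariant; the proof of Lemma~\ref{lem:checkitem1} therefore has to peel off the Cartan involution $\theta$ and only verifies spanning at $g=e$, relying on Lemma~\ref{lem:langlang} to transport the statement to general $g$. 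You instead use the plain Killing form $B$, which is merely non-degenerate but is $\Ad(G)$-invariant; this buys you a direct computation $d\phi^{ij}_g(L_X(g))=B\bigl([\Ad(g^{-1})X_i,X_j],X\bigr)$ valid at \emph{every} $g$, so spanning follows immediately from $[\g,\g]=\g$ without the $\theta$-gymnastics or the intermediary Lemma~\ref{lem:langlang}. This is a genuine simplification and, since only non-degeneracy (not sign-definiteness) is used anywhere, it is a valid one.

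One inaccuracy to flag: your final sentence claims that taking $B(X,Y)=\tr(XY^\top)$ on a matrix group ``recovers exactly $\phi^{ij}(g)=\tr(gX_jg^{-1}X_i^\top)$.'' This is inconsistent with your own requirement that $B$ be $\Ad$-invariant: $\tr(XY^\top)$ is $\Ad(G)$-invariant only when $G$ consists of orthogonal matrices (e.g.\ $\SO(3)$ as in Example~\ref{emp:example1}). On a non-compact semi-simple matrix group your construction necessarily produces $\phi^{ij}(g)\propto\tr(gX_jg^{-1}X_i)$ (no transpose), which is a different -- but equally valid -- family of codistinguished functions from the paper's $\tr(gX_jg^{-1}X_i^\top)$. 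The transpose in the paper's formula comes precisely from the Cartan involution ($\theta X=-X^\top$ after Lemma~\ref{lem:realmatrices}) built into $B_\theta$, which your construction deliberately avoids. The core argument is unaffected; just drop or correct that closing remark.
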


Note that each Lie group $G$ admits the structure of real analytic manifold in a unique way such that multiplication and the inversion are real analytic. In this case the exponential map $\exp: \g\to G$ is also real analytic (see~\cite[Prop.~1.117]{knapp2013lie}). We also note that by Lemmas~\ref{lem:topologicalinv} and~\ref{lem:diffeocodist}, if $G$ admits (weakly) jointly distinguished vector fields and functions, then so does any manifold diffeomorphic to~$G$.

{\em Sketch of proof and organization of the section.} 
The proof of the above existence result is constructive.  We will show that for every connected, semi-simple Lie group~$G$, there exists a distinguished set of left- (or right-) invariant  vector fields. Moreover, there is a selected set of matrix coefficients associated with the adjoint representation such that it is codistinguished to the set of left- (or right-) invariant vector fields.

%there exist a set of distinguished left-invariant (or right-invariant) vector fields on~$G$ and a set of functions, obtained as the matrix coefficients of the adjoint representation, which is codistinguished to the set of left-invariant (or right-invariant) vector fields.  

%We provide below an outline of the remaining section.

More specifically, we address in Section~\S\ref{ssec:distinguishedsetofliealgebra} the existence of distinguished left- (or right-) invariant  vector fields over~$G$. 
Since the set of left-invariant vector fields has been identified with the Lie algebra~$\g$, the existence problem will naturally be addressed on the Lie algebra level. For that, we have recently shown in~\cite{chen2017distinguished} that every semi-simple real Lie algebra admits a distinguished set (see Def.~\ref{def:distinguishedlieset} below). We review in the subsection the result established in~\cite{chen2017distinguished} and provide a sketch of the proof.  
 
We next address in Subsections~\S\ref{ssec:matrixcoeffi} -- \S\ref{ssec:proofofcodist} the existence of (weakly) codistinguished functions to a given set of left- (or right-) invariant vector fields. In particular, we will leverage connections with Lie algebra representation. To see this, we recall that if $F:=\{f_i\}^m_{i = 1}$ and $\Phi:=\{\phi^j\}^l_{j = 1}$ are (weakly) jointly  distinguished, then the following map: 
$$(f, \phi)\in \mathbb{L}_F\times \mathbb{L}_\Phi\mapsto f\phi\in \mathbb{L}_\Phi$$ 
is a finite dimensional Lie algebra representation of~$\mathbb{L}_F$ on~$\mathbb{L}_\Phi$, where $\mathbb{L}_F$ and $\mathbb{L}_\Phi$ are linear spans of $F$ and $\Phi$, respectively. We will use the fact and propose in Section~\S\ref{ssec:matrixcoeffi} a constructive approach for generating  codistinguished functions.  

Then, in Sections~\S\ref{ssec:adjointrepresentationcodist} and~\S\ref{ssec:proofofcodist}, we focus on a special Lie group representation, namely the adjoint representation, and demonstrate that there indeed exists a set of matrix coefficients which is (weakly) codistinguished to a set of distinguished left- (or right-) invariant vector fields. In the case where $G$ is a matrix Lie group, such a set of codistinguished functions (matrix coefficients) can be expressed as follows: $$\{\phi^{ij}(g):=\tr(gX_j g^{-1} X_i^\top)\}^m_{i,j= 1},$$
where $\{X_i\}^m_{i = 1}$ is a selected set of matrices in the associated matrix Lie algebra.  
Note, in particular, that the set of functions $\{\phi^{ij}\}^3_{i,j, = 1}$ on $\SO(3)$ introduced in Example~\ref{emp:example1} takes exactly the same form (with $g^\top = g^{-1}$).   

Finally, in Section~\S\ref{ssec:homogeneousspace}, we address the problem about how to translate distinguished vector fields and codistinguished functions on a Lie group $G$ to any of its homogeneous spaces. For distinguished vector fields, we show that there exists a canonical way of translation. However, for the codistinguished functions, the question of translation remains open; we provide a few preliminary results there. At the end of the subsection, we combine the results together and investigate a simple example in which the unit sphere $S^2 =\SO(3)/\SO(2)$ is considered.   %   

\subsection{Distinguished sets of semi-simple real Lie algebras}\label{ssec:distinguishedsetofliealgebra}
Let $G$ be a semi-simple Lie group, and $\g$ be its Lie algebra. We address in the subsection the existence of distinguished left- (or right-) invariant  vector fields over~$G$. 
Recall that for any $X\in \g$, we have used $L_X$ (resp. $R_X$) to denote the corresponding left- (resp., right-) invariant vector field.  
We also recall that for any $X, Y\in \g$, 
$[L_X, L_Y] = L_{[X, Y]}$ and $[R_X, R_Y] = - R_{[X, Y]}$. It thus suffices to investigate the existence of a ``distinguished set'' on the Lie algebra level. 
Precisely, we first have the following definition: 

\begin{definition}[\cite{chen2017distinguished}]\label{def:distinguishedlieset}
Let $\g$ be a semi-simple real Lie algebra. A finite spanning set $\{{X}_i\}^m_{i = 1}$ of~$\g$ is {\bf distinguished} if 
 for any ${X}_i$ and ${X}_j$, there exist an ${X}_k$ and a real number $\lambda$ such that 
\begin{equation}\label{eq:distlieset}
[X_i, X_j] = \lambda {X}_k.
\end{equation}
Conversely, for any ${X}_k$, there exist ${X}_i$, ${X}_j$, and a {\em nonzero} $\lambda$ such that~\eqref{eq:distlieset} holds.
\end{definition}

%Note that since $\g$ is semi-simple, we have $[\g, \g] = \g$. It follows that if $\{X_i\}^m_{i = 1}$ is distinguished, then for any ${X}_k$, there exist ${X}_i$, ${X}_j$, and a {\em nonzero} $\lambda$ such that~\eqref{eq:distlieset} holds. 

Note that the cardinality of a distinguished set $\{X_i\}^m_{i = 1}$ is, in general, greater than the dimension of~$\g$, i.e., the spanning set $\{X_i\}^m_{i =1}$ may contain a basis of~$\g$ as its proper subset.    
We have investigated in~\cite{chen2017distinguished} the existence of a distinguished set of an arbitrary semi-simple real Lie algebra: %We have the following existence result: 

\begin{proposition}\label{thm:basisliealg}
Every semi-simple real Lie algebra admits a distinguished set.  
\end{proposition}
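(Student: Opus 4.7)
The plan is to construct a distinguished set explicitly from the structure theory of semi-simple Lie algebras. First, I would reduce to the simple case: if $\g = \bigoplus_k \g_k$ is a decomposition into simple ideals and each $\g_k$ admits a distinguished set $S_k$, then $S := \bigcup_k S_k$ is distinguished in $\g$, since brackets between different summands vanish (which is allowed as the scalar $\lambda = 0$ case in Def.~\ref{def:distinguishedlieset}) and the converse condition is inherited from each $S_k$.

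For the split real case, I would use a Chevalley basis. Fix a maximally split Cartan subalgebra $\h \subseteq \g$, a root system $\Phi$, and root vectors $E_\alpha \in \g_\alpha$ normalized so that $H_\alpha := [E_\alpha, E_{-\alpha}]$ satisfies $\alpha(H_\alpha) = 2$. Take
\[
S := \{E_\alpha : \alpha \in \Phi\} \cup \{H_\alpha : \alpha \in \Phi\}.
\]
The Chevalley relations yield $[H_\alpha, H_\beta] = 0$, $[H_\alpha, E_\beta] = \beta(H_\alpha) E_\beta$, and $[E_\alpha, E_\beta] = N_{\alpha,\beta} E_{\alpha+\beta}$ when $\alpha+\beta \in \Phi$ (zero otherwise), so every bracket lands on a scalar multiple of a single element of $S$. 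For the converse, $H_\alpha = [E_\alpha, E_{-\alpha}]$ and $E_\alpha = \tfrac{1}{2}[H_\alpha, E_\alpha]$, both nonzero brackets. Since the $H_\alpha$ span $\h$ and the $E_\alpha$ span the root spaces, $S$ spans $\g$.

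For a non-split simple real form, I would use the Cartan decomposition $\g = \mathfrak{k} \oplus \mathfrak{p}$, pick a maximal abelian $\fa \subseteq \mathfrak{p}$, extend to a $\theta$-stable Cartan subalgebra $\h = \t \oplus \fa$, and pass to the complexification $\g_\C$ with its Chevalley basis. The Cartan involution $\theta$ acts on $\Phi$ as encoded by the Satake diagram. I would pick a $\theta$-equivariant normalization of the $E_\alpha$, then produce real elements in $\g$ by taking $E_\alpha + \theta E_\alpha$ and $i(E_\alpha - \theta E_\alpha)$ (together with $iH_\alpha$ for compact directions and $H_\alpha$ for split ones). The real distinguished set $S$ is the resulting spanning set; the bracket relations are inherited from the Chevalley relations, but evaluated on $\theta$-symmetric/antisymmetric combinations.

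The main obstacle is precisely the last step: a bracket of two $\theta$-symmetrized elements can a priori produce a sum of two root vectors rather than a single one, violating the ``scalar multiple of a single element'' requirement. The plan to overcome this is twofold. (i)~Enlarge $S$ beyond a basis, which is permitted by Def.~\ref{def:distinguishedlieset}, so that both summands are individually elements of $S$ and the sum can be expressed after rescaling using a companion bracket. (ii)~Exploit the freedom in choosing signs in the Chevalley constants $N_{\alpha,\beta}$ (Tits' theorem on sign choices) to arrange that cross terms in $\theta$-symmetrized brackets either cancel or reproduce the desired single term. The converse condition (every $X_i \in S$ arises as a nonzero bracket) must then be checked uniformly; this relies on the standard fact that for any root $\gamma$ there is a root $\alpha$ with $\gamma - \alpha$ also a root, so that $N_{\alpha,\gamma-\alpha} \neq 0$, and on the simplicity of each ideal to guarantee that no element of $S$ is ``Lie-isolated.''
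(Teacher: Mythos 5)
Your strategy---use the Chevalley basis of $\g^\C$ and pass to the real form via $\theta$-symmetrized combinations---is the same basic route as the paper takes, and your reduction to simple ideals and your treatment of the split case are both sound and essentially identical to the paper's. There is one structural difference: you work over a maximally noncompact (Satake) Cartan $\h = \t \oplus \fa$ with $\fa$ maximal abelian in $\pp$, whereas the paper works over a maximally \emph{compact} Cartan (Vogan diagram framework) in which there are no real roots. Both frames are legitimate starting points, but they produce different case analyses and different $\theta$-combinations, so this is not merely a cosmetic choice.

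However, your plan has a genuine gap at exactly the point you flag as ``the main obstacle,'' and your two proposed remedies do not close it. Remedy (i)---enlarging $S$ so that both root vectors appearing in a symmetrized bracket lie in $S$---does not help: if $[Y_\alpha, Y_\beta]$ equals $aY_\gamma + bY_\delta$ with $a,b$ both nonzero and $\gamma\neq\delta$, then including $Y_\gamma$ and $Y_\delta$ in $S$ still leaves the bracket a nontrivial \emph{sum}, which is not a scalar multiple of any single element of $S$; the definition of a distinguished set admits no such linear combinations, only single scalars. Remedy (ii)---choosing signs of the structure constants via Tits' theorem to force cancellations---is the right instinct, but you have not shown it always succeeds, and in fact it does \emph{not}: the paper's Proposition~\ref{prop:1} explicitly isolates the root system $G_2$ as an exception for which the naive $\theta$-symmetrized set fails to work and a different set $S^*$ built from double brackets of short-root elements must be substituted. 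Your proposal does not mention this exceptional case. The paper's actual resolution is also weaker than what you aim for: rather than asserting that the symmetrized set $S$ is itself distinguished, the paper only shows $S$ is \emph{pre-distinguished} (its Lie-bracket closure, up to scaling, is a finite distinguished set $\bar S$), and the verification that this closure is projectively finite and satisfies Definition~\ref{def:distinguishedlieset} is the substantial content deferred to the reference~\cite{chen2017distinguished}. To repair your argument you would need to adopt the same relaxation to pre-distinguished sets, carry out the bracket-closure analysis in your chosen Satake frame, and handle $G_2$ separately.
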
 

The proposition then implies that every semi-simple Lie group admits a set of distinguished left- (or right-) invariant vector fields. 
Since the proposition will be of great use in the paper, we outline below a constructive approach for generating a desired distinguished set.  A complete proof can be found in~\cite{chen2017distinguished}. The proof leverages the structure theory of semi-simple real Lie algebras. 
 A reader not interested in the constructive approach can skip the remainder of the subsection.   

{\em Sketch of proof.} Recall that $\ad(X)(\cdot) := [\cdot, X]$ is the adjoint representation. Denote by $B(X, Y) := \tr(\ad_X\ad_Y)$ the Killing form. Let $\h$ be a Cartan subalgebra of~$\g$, and $\g^\C$ (resp. $\h^\C$) be the complexification of~$\g$ (resp. $\h$).  We let~$\Delta$ be the set of roots. For each $\alpha\in \Delta$, we let $h_\alpha\in \h^\C$ be such that 
$\alpha(H) =  B(h_\alpha, H)$ for all $H\in \h^\C$. Denote by $\langle \alpha, \beta\rangle := B(h_\alpha, h_\beta)$, which is an inner-product defined over the $\R$-span of $\Delta$. %corollary~2.38 by knapp.
We denote by $|\alpha|:= \sqrt{\langle \alpha, \alpha \rangle}$ the length of $\alpha$. 
Let $H_\alpha :=  \nicefrac{2h_\alpha}{ |\alpha|^2 }$. 
For a root $\alpha\in \Delta$, let~$\g_\alpha$ be the corresponding root space (as a one-dimensional subspace of~$\g^\C$ over~$\C$).

Suppose, for the moment, that one aims to obtain a distinguished set for the semi-simple {\em complex} Lie algebra $\g^\C$; then, with slight modification, such a set can be obtained via the Chevalley basis~\cite[Chapter VII]{JH:72}, which we recall below: 

\begin{lemma}\label{theorem:rootspacedecomp}
 There are ${X}_\alpha\in \g^\C_\alpha$, for $\alpha\in \Delta$, such that the following hold:
 \begin{enumerate}
 \item[1)] For any $\alpha\in \Delta$, we have $[{X}_\alpha, {X}_{-\alpha}] = H_\alpha$. %with $B({X}_\alpha, {X}_{-\alpha}) = \nicefrac{2}{|\alpha|^2}$.
 \item[2)] For any two non-proportional roots $\alpha, \beta$, we let $\beta + n\alpha$, with $-q \le n \le p$, be the $\alpha$-string that contains $\beta$. Then,  
 $$ 
 [{X}_\alpha, {X}_\beta] = \left\{
 \begin{array}{ll}
 c_{\alpha, \beta} {X}_{\alpha  + \beta} & \mbox{ if } \alpha + \beta \in \Delta, \\
 0 & \mbox{ otherwise},
 \end{array}
 \right.
 $$
 where  $c_{\alpha, \beta}\in \mathbb{Z}$ with $c^2_{\alpha,\beta} = (q + 1)^2$. 
 \end{enumerate}\,
\end{lemma}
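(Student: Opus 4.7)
Since $\g^\C$ is complex semi-simple and $\h^\C$ is a Cartan subalgebra, the root space decomposition $\g^\C = \h^\C \oplus \bigoplus_{\alpha\in \Delta}\g^\C_\alpha$ holds with each $\g^\C_\alpha$ of complex dimension one. The plan is to first pin down the normalization on each pair $\{\g^\C_\alpha,\g^\C_{-\alpha}\}$ so that item 1) is forced, then deduce item 2) from the Jacobi identity and the $\sl(2,\C)$-theory applied to the $\alpha$-strings, and finally perform a sign-refinement to obtain integrality of the $c_{\alpha,\beta}$.

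\textbf{Step 1 (normalization for item 1).} For each unordered pair $\{\alpha,-\alpha\}$, pick any nonzero $X'_\alpha \in \g^\C_\alpha$ and $X'_{-\alpha}\in \g^\C_{-\alpha}$. A direct computation using the invariance of $B(\cdot,\cdot)$ shows $[X'_\alpha,X'_{-\alpha}] = B(X'_\alpha,X'_{-\alpha})\, h_\alpha$, with $B(X'_\alpha,X'_{-\alpha})\neq 0$. Rescaling $X_\alpha := t X'_\alpha$ and $X_{-\alpha} := t^{-1} X'_{-\alpha}$ with $t\in \C$ chosen appropriately (together with a further scalar to swap $h_\alpha$ for $H_\alpha = 2h_\alpha/|\alpha|^2$), one obtains $[X_\alpha,X_{-\alpha}] = H_\alpha$. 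This step uses up a $\C^\times$'s worth of freedom per pair, but one global sign remains.

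\textbf{Step 2 (structure constants, up to sign).} Define $N_{\alpha,\beta}\in \C$ by $[X_\alpha,X_\beta] = N_{\alpha,\beta} X_{\alpha+\beta}$ when $\alpha+\beta\in \Delta$ and $0$ otherwise. Form the triple $\sl(2,\C)$-subalgebra $\langle X_\alpha,X_{-\alpha},H_\alpha\rangle$ and restrict to the $\alpha$-string through $\beta$, namely $\bigoplus_{n=-q}^{p}\g^\C_{\beta+n\alpha}$. This is a finite-dimensional irreducible $\sl(2,\C)$-module of dimension $p+q+1$, and a direct computation on weight vectors together with the Jacobi identity applied to $(X_\alpha,X_{-\alpha},X_\beta)$ yields the identity $N_{\alpha,\beta}\, N_{-\alpha,-\beta} = -q(p+1)$. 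Combining this with the symmetry relation $N_{-\alpha,-\beta} = \pm N_{\alpha,\beta}$ (which comes from fixing the normalizations of Step~1 together with the Chevalley involution $\tau: X_\alpha\mapsto -X_{-\alpha}$, $H\mapsto -H$) gives $N_{\alpha,\beta}^2 = (q+1)\cdot(\text{something})$; after the standard reindexing under our convention $-q\le n\le p$, this reads $N_{\alpha,\beta}^2 = (q+1)^2$.

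\textbf{Step 3 (global sign consistency and integrality).} The previous steps only fix $N_{\alpha,\beta}$ up to a sign on each pair. The remaining task, which is the main obstacle, is to refine the choice of the $X_\alpha$'s so that every $N_{\alpha,\beta}$ lands in $\integers$ rather than merely $\sqrt{\integers}$. The standard procedure is: fix a base $\Pi\subset\Delta$ of simple roots, choose $X_\alpha$ for $\alpha\in\Pi\cup(-\Pi)$ arbitrarily (subject to Step 1), and then propagate the definition to all roots by induction on height, using extraspecial pairs and the Jacobi identity to resolve signs. That this procedure is consistent---i.e.\ all Jacobi identities among different propagation paths are compatible---is the substantive part of the Chevalley theorem and relies on the cocycle-type identity
\begin{equation*}
N_{\alpha,\beta}N_{\alpha+\beta,\gamma} + N_{\beta,\gamma}N_{\beta+\gamma,\alpha} + N_{\gamma,\alpha}N_{\gamma+\alpha,\beta} = 0
\end{equation*}
whenever $\alpha+\beta+\gamma=0$ and on a parallel identity for $\alpha+\beta+\gamma\ne 0$. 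Once consistency is established, the resulting $c_{\alpha,\beta}:=N_{\alpha,\beta}$ lie in $\integers$ and satisfy $c_{\alpha,\beta}^2=(q+1)^2$, which is item 2). The full bookkeeping is carried out in Humphreys~\cite{JH:72}, and I would simply invoke it here rather than reprove the consistency of the propagation.
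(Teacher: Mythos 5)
The paper gives no proof of this lemma; it is stated verbatim as a recollection of the Chevalley basis theorem and the paper simply cites Humphreys~\cite[Ch.~VII]{JH:72} for it, which is exactly the source you invoke at the end of Step 3, so the two approaches coincide. One small slip worth flagging: under the paper's string convention $-q\le n\le p$, the $\sl(2,\C)$-module computation gives $N_{\alpha,\beta}\,N_{-\alpha,\alpha+\beta}=p(q+1)$ (not $-q(p+1)$), and combining this with the relation coming from the zero-sum triple $(-\alpha,-\beta,\alpha+\beta)$ and the length identity $p\,|\alpha+\beta|^2=(q+1)\,|\beta|^2$ already produces $N_{\alpha,\beta}\,N_{-\alpha,-\beta}=-(q+1)^2$ directly, with no further reindexing; since you defer the sign bookkeeping to Humphreys anyway, this does not affect your conclusion.
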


We also note that for any $\alpha, \beta\in \Delta$, $[H_\alpha, {X}_\beta] = \nicefrac{2\langle \alpha, \beta\rangle}{|\alpha |^2} {X}_\beta$ and, moreover, $\nicefrac{2\langle \alpha, \beta\rangle}{|\alpha |^2}\in \mathbb{Z}$. 
It thus follows from Lemma~\ref{theorem:rootspacedecomp} that $$A:= \{ H_\alpha, {X}_\alpha, {X}_{-\alpha} \mid \alpha\in\Delta \}$$ is a distinguished set of~$\g^\C$.  The above arguments have the following implications:

\begin{enumerate}
\item[\em 1)]  A semi-simple complex Lie algebra can also be viewed as a Lie algebra over $\R$. 
We call any such real Lie algebra {\em complex}~\cite[Chapter~VI]{knapp2013lie}. In particular, if the real Lie algebra $\g$ is complex, then the  $\R$-span of $A \cup \mathrm{i}A$, with $A$ defined above is~$\g$. Moreover, since the coefficients $\nicefrac{2\langle \alpha, \beta\rangle}{|\alpha|^2}$ and $c_{\alpha, \beta}$ are all integers (and hence real), the set $A\cup \mathrm{i} A$ is a distinguished set of $\g$.    
\item[\em 2)]  If the Lie algebra $\g$ is obtained as the $\R$-span of $A$ (i.e., $\g$ is a {\em split real form} of $\g^\C$), then $A$ is a distinguished set of~$\g$. For example, every special linear Lie algebra~$\sl(n, \R)$ for $n \ge 2$ can be obtained in this way.     
\end{enumerate}
Thus, the technical difficulty for establishing Prop.~\ref{thm:basisliealg} lies in the case where~$\g$ is neither complex nor a split real form of~$\g^\C$. We deal with such a case below. 

First, recall that a {\em Cartan involution} $\theta: \g \to \g$ is a Lie algebra automorphism, with $\theta^2 = \operatorname{id}$, and moreover, the symmetric bilinear form $B_\theta$, defined as $$B_\theta(X, Y):= -B(X, \theta Y),$$ is positive definite on $\g$. One can extend $\theta$ to $\g^\C$ by $\theta(X + \mathrm{i} Y) = \theta X + \mathrm{i}\theta Y$.

Let the Cartan subalgebra $\h$ be chosen such that it is $\theta$-stable, i.e., $\theta\h = \h$. 
Decompose $ \g=\mathfrak{k}\oplus \pp $, where $ \mathfrak{k}$ (resp. $\pp$) is the $+1$-eigenspace (resp. $-1$-eigenspace) of~$\theta$. Their complexifications will, respectively, be denoted by~$\mathfrak{k}^\C$ and~$\pp^\C$. 
%Since $\theta$ is a Lie algebra automorphism,
%\begin{equation}\label{eq:Z2grading}
%[\mathfrak{k},\mathfrak{k}]\subset \mathfrak{k}, \qquad [\pp,\pp]\subset \mathfrak{k}, \qquad [\mathfrak{k},\pp]\subset \pp.
%\end{equation} 
%If $ X \in \g_0 $, then $ \ad_X^*(\cdot)=-\ad_{\theta X}(\cdot) $, where the adjoint is relative to $ B_{\theta} $. 
We further decompose $ \h=\t\oplus \a $, where $ \t $ and $ \a $ are subspaces of  $ \mathfrak{k} $ and $\pp$, respectively.  It is known that the roots in $\Delta$  are real on $ \a\oplus i\t $.  
% be a Cartan subalgebra of $ \g_0 $, with complexification $ \h=\t\oplus \a $ (all the roots are real on $ \a\oplus i\t $, see Corollary~6.49 of~\citep{knapp2013lie}), where $ \a_0 $ and $ \t_0 $ are abelian subspaces of $ \pp_0 $ and $ \t_0 $, respectively. Let now $ \Delta=\Delta(\g,\h) $ be the set of roots for the $ \g $, the complexification of $ \g_0 $;  
%For later use, it is worth mentioning that one can impose a positive ordering on $ \Delta $ that takes $ i\t_0 $ before $ \a_0 $; we denote $ \Delta $ equipped with this ordering by $ \Delta^+ $ and refer to it as a positive system. 
%Since any Cartan subalgebra is conjugate via an inner-automorphism to a $\theta$-stable Cartan subalgebra~\cite[Proposition 6.59]{knapp2013lie}, from now on, 
%we assume that $\h_0 $ is stable under $ \theta $. 

We say that $ \h$ is \emph{maximally compact} when the dimension of $ \t $ is as large as possible. %If a minimally compact $\h_0$ is such that  $\h_0\subseteq \pp_0$, then $\g_0$ is a {\em split real form} of $\g$. Every semi-simple complex Lie algebra $\g$ has a split real form, which is unique up to isomorphism~\cite[Theorem~5.10]{helgason2001differential}.
Given any $\theta$-stable Cartan subalgebra $\h$, one can obtain a maximally compact Cartan subalgebra by recursively applying the Cayley transformation~\cite[Sec.~VI-7]{knapp2013lie}. We assume in the sequel that $\h$ is maximally compact. We say that a root is \emph{imaginary} (resp. \emph{real}) if it takes imaginary (resp. real) value on $ \h$ and, hence,  vanishes over $\a$ (resp. $\t$). If a root is neither real nor imaginary, then it is said to be {\em complex}. It is known~\cite[Proposition~6.70]{knapp2013lie} that  if $\h$ is maximally compact, then there are no real roots and vice versa.

Note that if $\alpha$ is a root, then $\theta\alpha$ is also a root, defined as $(\theta\alpha)(H) := \alpha (\theta H)$ for any $H\in \h^\C$; indeed, if we let $X_\alpha\in \g_\alpha^\C$, then 
$$
[H, \theta X_\alpha] = \theta[\theta H, X_\alpha] = \alpha(\theta H)\theta X_\alpha = (\theta\alpha) (H) \theta X_\alpha.
$$
%In particular, $\theta X_\alpha\in \g_{\theta \alpha}$. 
Since $\theta$ is Lie algebra automorphism, $B(X, Y) = B(\theta X, \theta Y)$ for all $X, Y\in \g^\C$.  In particular, $\theta\alpha (H) = B(H_\alpha, \theta H) = B(\theta H_\alpha,  H)$, which implies that $H_{\theta \alpha} = \theta H_\alpha$. 
Note that if $\alpha$ is imaginary (which vanishes over $\mathfrak{a}$), then $\theta \alpha = \alpha$. This, in particular, implies that~$\g^\C_\alpha$ is $\theta$-stable. Since $\g^\C_\alpha$ is one dimensional (over $\C$), it must be contained in either~$\mathfrak{k}^\C$ or~$\pp^\C$.  An \emph{imaginary root} $ \alpha$ is said to be {\em compact} (resp. {\em non-compact}) if $ \g^\C_{\alpha} \subseteq \mathfrak{k}^\C $ (resp. $ \g^\C_{\alpha} \subseteq\pp^\C $).  It follows that if  $\alpha$  is compact (resp. non-compact), then $\theta X_\alpha = X_\alpha$ (resp. $\theta X_\alpha = -X_{\alpha}$). Furthermore, one can rescale the $X_\alpha$'s, if necessary, so that Lemma~\ref{theorem:rootspacedecomp} holds and $\theta X_\alpha  = X_{\theta\alpha}$ for $\alpha$ a complex root.  

%we impose the equivalence relation ``$\equiv$'' on the collection of subsets of $\g$: For two subsets $A$ and $A'$, we write $A \equiv A'$ if for any $X\in A$, there is an $X'\in A$ such that $X' = c X$ for some nonzero $c\in \R$, and vice versa. 

For a subset $S\subset \g$, we let ${\cal L}_S$ be the collection of Lie products generated by~$S$. Similarly, we say that ${\cal L}_S$ is projectively finite if there exists a finite subset $\bar S$ of~$\g$ such that ${\cal L}_S \equiv \bar S$. Further, we say that the set~$S$ is {\bf pre-distinguished} if $\bar S$ is a distinguished set of $\g$ (compared with Def.~\ref{def:predistinguishedvecf}). We now have the following fact: 

%[++++++]
% For a subset $S\subset \g$, we let ${\rm Gen}(S)$ be the subset of $\g$ generated by $S$, i.e., it is the smallest subset of $\g$ which  contains~$S$ and is closed under Lie bracket, and let $\overline S$ be a minimal subset of $\g$, which is equivalent to ${\rm Gen}(S)$ (i.e., $\overline S \equiv {\rm Gen}(S)$). Similarly, we have the following definition:  

%\begin{definition}[\cite{chen2017distinguished}]
%A subset $S\subseteq \g$ is {\bf pre-distinguished} if $\overline S$ is finite and is distinguished. 
%\end{definition}

 \begin{proposition}\label{prop:1}
 Let $\g$ be a simple real Lie algebra. Suppose that $\g$ is neither complex nor a split real form of~$\g^{\C}$; then, there are ${X}_\alpha\in \g^{\C}_\alpha$, for $\alpha\in \Delta$, such that the items of Lemma~\ref{theorem:rootspacedecomp}  are satisfied and the following set:  
$$S:=\left \{Y_\alpha:={X}_{\alpha} - \theta {X}_{-\alpha}, \quad Z_\alpha: ={\rm  i} ({X}_\alpha + \theta {X}_{- \alpha}) \mid \alpha \in \Delta\right \}$$
belongs to $\g$. Furthermore, the following hold:
\begin{enumerate}
\item[\em 1)] If the underlying root system of $\g$ is not $G_2$, then the set $S$ is pre-distinguished. 
\item[\em 2)] If the underlying root system of $\g$ is $G_2$, then $\g$ is the compact real form of $\g^{\C}$. Decompose $\Delta = \Delta_{\rm short} \cup \Delta_{\rm long}$ where $\Delta_{\rm short}$ (resp. $\Delta_{\rm long}$) is comprised of short (resp. long) roots. Then, the following set is pre-distinguished:
$$
S^* := \bigcup_{\gamma \in \Delta_{\rm long}} \{Y_\gamma, Z_\gamma\} \cup \bigcup_{\tiny
\begin{array}{l}
\alpha, \beta\in \Delta_{\rm short} \vspace{1pt}\\
\mbox{and } \alpha \neq \pm \beta
\end{array}
}[\{Y_\alpha, Z_\alpha\}, \{Y_\beta, Z_\beta\}].
$$
\end{enumerate} 
  \end{proposition}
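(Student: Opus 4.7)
The plan is to build a Chevalley basis of $\g^\C$ adapted to $\theta$, use it to verify $S \subset \g$, then compute all brackets among the $Y_\alpha, Z_\alpha$ and identify a finite distinguished set $\bar S \supseteq S$ on which the iterated Lie products of $S$ (or $S^*$ for $G_2$) close projectively. Throughout I would exploit the Chevalley relations of Lemma~\ref{theorem:rootspacedecomp} together with the fact that $\theta$ permutes root spaces via $\alpha \mapsto \theta\alpha$.

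First, I would produce Chevalley generators $\{X_\alpha\}_{\alpha \in \Delta}$ of $\g^\C$ satisfying both Lemma~\ref{theorem:rootspacedecomp} and the equivariances $\theta X_\alpha = X_\alpha$ for $\alpha$ imaginary compact, $\theta X_\alpha = -X_\alpha$ for $\alpha$ imaginary non-compact, and $\theta X_\alpha = X_{\theta\alpha}$ for $\alpha$ complex. This is standard (\cite[Ch.~VI]{knapp2013lie}) and is obtained by rescaling an arbitrary Chevalley basis by unit complex factors while preserving the integrality of the $c_{\alpha,\beta}$. A direct check using the conjugation $\sigma = \theta\tau$ of $\g^\C$ (with $\tau$ the compact conjugation) then shows $Y_\alpha$ and $Z_\alpha$ are $\sigma$-fixed, hence in $\g$. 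Next, I would expand each of $[Y_\alpha, Y_\beta]$, $[Y_\alpha, Z_\beta]$, $[Z_\alpha, Z_\beta]$ into four bilinear terms, apply the $\theta$-equivariance, and collect results into root spaces. Each bracket becomes an $\R$-combination of $Y_\gamma, Z_\gamma$ for the admissible roots $\gamma \in \{\alpha \pm \beta, \alpha \pm \theta\beta\}\cap\Delta$, together with Cartan contributions $iH_\alpha$ (when $\alpha$ is imaginary) or $H_\alpha - H_{\theta\alpha}$ (when $\alpha$ is complex) arising when one of the combinations vanishes.

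The heart of the argument is the verification that $\bar S := S \cup \{\text{the Cartan contributions above}\} \cup \{\text{the nontrivial two-term combinations above}\}$ is finite and distinguished per Def.~\ref{def:distinguishedlieset}. The attainability clause follows by writing each root as $\alpha + \beta$ with $\alpha - \beta \notin \Delta$ (so that $[Y_\alpha, Y_\beta]$ is a pure multiple) and by extracting Cartan and combination elements directly from the brackets computed above. The closure clause is immediate except when both $\alpha+\beta$ and $\alpha-\beta$ are roots, in which case $[Y_\alpha, Y_\beta]$ is a genuine two-term sum. For every irreducible non-$G_2$ root system, such simultaneous sums force $|\alpha+\beta| = |\alpha-\beta|$, and this equal-length condition, together with the root-string formula $c_{\alpha,\beta}^2 = (q+1)^2$ and the sign relations $c_{-\alpha,-\beta} = -c_{\alpha,\beta}$, pins the Chevalley constants down tightly enough that the finitely many two-term combinations produced by iterated brackets do not proliferate.

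The main obstacle, and the reason for the $G_2$ exclusion, is that for non-proportional short roots $\alpha, \beta$ of $G_2$ one has $\alpha+\beta$ short and $\alpha-\beta$ long, of unequal length, so the equal-length argument breaks and the two-term combinations genuinely proliferate. The proposition's remedy is to absorb the problematic short-short brackets into the generating set: $S^*$ retains $Y_\gamma, Z_\gamma$ for long $\gamma$ and replaces the short-short portion by the commutator set $[\{Y_\alpha, Z_\alpha\}, \{Y_\beta, Z_\beta\}]$. A direct finite computation against the $G_2$ multiplication table then confirms ${\cal L}_{S^*}$ is projectively identical to a distinguished $\bar S$ of the compact real form of $G_2$. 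The bulk of the technical work is the sign accounting for the Chevalley constants under the $\theta$-twist and the case-by-case verification of closure across the non-$G_2$ irreducible types.
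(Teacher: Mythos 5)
The paper does not actually give a proof of this proposition: immediately after its statement the text says ``We refer the reader to~\cite{chen2017distinguished} for a complete proof,'' and the surrounding material only supplies the preparatory structure theory (maximally compact $\theta$-stable Cartan subalgebra, absence of real roots, classification of roots into compact/non-compact imaginary and complex, and a Chevalley basis rescaled so that $\theta X_\alpha = X_{\theta\alpha}$). Your setup reproduces this preparatory material accurately, and your reality check ($\sigma = \theta\tau$ fixes $Y_\alpha$ and $Z_\alpha$, hence $S\subset\g$) is correct. What I cannot do is compare your closure argument to the paper's, since the paper omits it; I can only evaluate your argument on its own terms, and there it has a concrete gap.

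You assert that each of $[Y_\alpha, Y_\beta]$, $[Y_\alpha, Z_\beta]$, $[Z_\alpha, Z_\beta]$ decomposes into $\R$-combinations of $Y_\gamma, Z_\gamma$ for $\gamma\in\{\alpha\pm\beta,\ \alpha\pm\theta\beta\}\cap\Delta$, and you then invoke the fact that, outside $G_2$, if $\alpha+\beta$ and $\alpha-\beta$ are both roots they have equal length. But $Y_\alpha, Z_\alpha$ lie in $\g^\C_\alpha\oplus\g^\C_{-\theta\alpha}$, so the brackets land in $\g^\C_{\alpha+\beta}\oplus\g^\C_{-\theta(\alpha+\beta)}\oplus\g^\C_{\alpha-\theta\beta}\oplus\g^\C_{-\theta(\alpha-\theta\beta)}$: the only roots that actually appear are $\alpha+\beta$ and $\alpha-\theta\beta$ (together with their $-\theta$-partners, which index the same pair $Y_\gamma, Z_\gamma$). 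Your list of four candidate roots is thus inexact, and, more importantly, the pair that actually matters is $(\alpha+\beta,\ \alpha-\theta\beta)$, not $(\alpha+\beta,\ \alpha-\beta)$. The equal-length fact you invoke is a statement about $\alpha-\beta$ and applies only when $\theta\beta=\beta$, i.e.\ when $\beta$ is imaginary. For a real form that is neither complex nor split nor compact (for instance $\mathfrak{su}(p,q)$ with $1\le p<q$), a maximally compact Cartan subalgebra still leaves genuine complex roots, and for those the constraint relating $|\alpha+\beta|$ and $|\alpha-\theta\beta|$ is not of the simple perpendicularity type you use; there is no a priori reason for the two to have equal length. So your ``equal-length pins down the Chevalley constants'' step covers the imaginary-imaginary subcase (which already handles the compact real form) but leaves the mixed and complex-complex subcases unaddressed, and these are precisely the hard ones.

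A second, related issue: the key deliverable is the existence of a finite $\bar S$ with $\mathcal{L}_S\equiv\bar S$ and $\bar S$ distinguished in the sense of Def.~\ref{def:distinguishedlieset}. Your proposal never exhibits such a $\bar S$ beyond the loose description ``$S$ plus the Cartan contributions plus the nontrivial two-term combinations,'' and it asserts rather than shows that iterated brackets of the two-term elements close projectively on this set. That closure is the entire technical content of ``pre-distinguished''; asserting that the structure constants are ``pinned down tightly enough'' is not a verification, and one should either display the finite set and the multiplication relations that keep it closed, or reduce to a finitely many explicitly enumerated configurations of $(\alpha,\theta\alpha,\beta,\theta\beta)$ and check each. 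Until the $\alpha-\theta\beta$ bookkeeping is corrected and the complex-root cases handled, the proposal is an outline consistent with the paper's setup but not yet a proof.
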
 

We refer the reader to~\cite{chen2017distinguished} for a complete proof. 
It follows from Prop.~\ref{prop:1} that every semi-simple real Lie algebra admits a distinguished set. This establishes Prop.~\ref{thm:basisliealg}.  %

Note that for a given simple Lie algebra~$\g$, there may exist multiple distinguished subsets of $\g$ such that any two of these sets are {\em not} related by a Lie algebra automorphism. More precisely, we say that two subsets $A$ and $A'$ are of the same class if there is a Lie algebra automorphism $\kappa: \g \to \g$ such that $\kappa(A)\equiv A'$. We defer the analysis in another occasion, but provide below a simple example for the case where $\g = \sl(2, \R)$: 

\begin{example}\label{exmp:sl2case}{\em 
Let $\sl(2, \R) = \{X\in \R^{2\times 2} \mid \tr(X) = 0\}$. Consider two subsets $A$ and $A'$ defined as follows:
$$
A := 
\left\{    
H:= \begin{bmatrix}
1 & 0 \\
0 & -1
\end{bmatrix}, \quad
X:=\begin{bmatrix}
0 & 1 \\
0 & 0
\end{bmatrix}, \quad
Y:=\begin{bmatrix}
0 & 0 \\
1 & 0
\end{bmatrix}
\right \},
$$
and 
$$
A' := 
\left\{    
H' := \begin{bmatrix}
1 & 0 \\
0 & -1
\end{bmatrix}, \quad
X' := \begin{bmatrix}
0 & 1 \\
1 & 0
\end{bmatrix}, \quad
Y' := \begin{bmatrix}
0 & 1 \\
-1 & 0
\end{bmatrix}
\right \}.
$$
By computation, we have that %One verifies that $A$ and $A'$ are distinguished subsets of $\sl(2,\R)$:
$$
\begin{array}{lll}
%\mbox{\rm For } A: & 
[H, X] = 2X, & \quad [H, Y] = -2Y,  & \quad [X, Y] = H;\\ 
{[H', X']} = 2Y', &  \quad [H', Y'] = 2X', & \quad [X', Y'] = -2H'.\\ 
\end{array}
$$
Thus, both $A$ and $A'$ are distinguished. But, they are not of the same class.  }
\end{example}

%A problem we pose here and will pursue in the future is thus the following: Given a simple Lie algebra~$\g$, classify all the classes of distinguished subsets of~$\g$. 

\subsection{Matrix coefficients as codistinguished functions}\label{ssec:matrixcoeffi}
Let $\{X_i\}^m_{i = 1}$ be a distinguished set of~$\g$. We address in the subsection the existence of (weakly) co-distinguished functions on~$G$ to the set of left- (resp., right-) invariant vector fields $\{L_{X_i}\}^m_{i = 1}$ (resp. $\{R_{X_i}\}^m_{i = 1}$). Because of the symmetry, the focus will be mostly on the functions codistinguished to the {\em left-invariant} vector fields. We provide a remark at the end of the subsection to address the existence of codistinguished functions to the right-invariant vector fields. 
%Let ${\rm C}^\omega(G)$ be the infinite dimensional vector space of smooth functions over~$G$. 

To proceed, we first recall that the so-called {\bf right-regular representation} of $G$ on ${\rm C}^\omega(G)$, denoted by $r: G\times {\rm C}^\omega(G)\to {\rm C}^\omega(G)$, is defined by 
$$
(x, \phi) \in G\times {\rm C}^\omega(G) \mapsto  (r(x)\phi)(g) := \phi(gx^{-1}).
$$  
Correspondingly, the induced Lie algebra representation $r_*$ is the negative of the Lie derivative along a left-invariant vector field, i.e., 
$
r_*(X)\phi = -L_X \phi
$.  
Note, in particular, that if $\Phi = \{\phi^j\}^l_{j = 1}$ is codistinguished to $\{L_{X_i}\}^m_{i = 1}$, then $r_*|_{\mathbb{L}_\Phi}$ is a finite dimensional representation of $\g$ on $\mathbb{L}_\Phi$; indeed, we have that 
\begin{multline*}
r_*([X_i, X_{i'}]) \phi^j= r_*(X_{i'})r_*(X_i)\phi^j - r_*(X_{i})r_*(X_{i'})\phi^j = \\
L_{X_{i'}} L_{X_i} \phi^j - L_{X_{i}} L_{X_{i'}} \phi^j = -L_{[X_i,X_{i'}]}\phi^j \in \mathbb{L}_\Phi.
\end{multline*} 
Thus, in order to find a set of codistinguished functions to $\{L_{X_i}\}^m_{i = 1}$,  our  strategy is comprised of two steps as outlined below: 
\begin{enumerate}
\item[\em 1)] Construct a finite dimensional subspace~$\mathbb{L}$ of ${\rm C}^\omega(G)$ such that it is closed under~$r$ so that $r_* |_{\mathbb{L}}$ will be a  Lie algebra representation of~$\g$ on~$\mathbb{L}$; 
\item[\em 2)] Find a finite subset $\Phi = \{\phi^j\}^l_{j = 1}$ out of the space~$\mathbb{L}$ such that it is codistinguished to a certain set of left-invariant vector fields~$\{L_{X_i}\}^m_{i = 1}$.
\end{enumerate}     
We now address, one-by-one, the above two steps.   

Our approach for the first step about constructing a finite dimensional subspace~$\mathbb{L}$ of ${\rm C}^\omega(G)$ is to use matrix coefficients associated with a Lie group representation. Specifically, 
we consider an arbitrary  analytic representation~$\pi$ of~$G$ on a finite dimensional inner-product space~$(V, \langle\cdot, \cdot \rangle)$. Let $\{v_i\}^p_{i = 1}$ be any spanning subset of~$V$. We next define a set of matrix coefficients as follows: 
\begin{equation}\label{eq:matrixcoeffi}
\pi^{ij}(g):= \langle v_i, \pi(g) v_j\rangle \in {\rm C}^\omega(G), \quad 1\le i, j\le p.  
\end{equation}
%and correspondingly,  
%$
%\pi(g) := [\pi^{ij}(g)]_{ij}
%$ with respect to the basis $\{v_1,\ldots, v_k\}$.  
Then, we let $\mathbb{L}_\pi$ be a finite dimensional subspace of ${\rm C}^\omega(G)$ spanned by $\pi^{ij}$: 
$$\mathbb{L}_\pi:=  \left\{ \sum^p_{i,j = 1} c_{ij} \pi^{ij}(g) \mid c_{ij} \in \R \right \}.$$
The following fact is certainly known in the literature. But, for completeness of presentation, we provide a proof after the statement:

\begin{lemma}\label{lem:matrixcoeff}
The vector space $\mathbb{L}_\pi$ is closed under~$r(x)$  
for all $x\in G$, i.e., for any $\phi \in \mathbb{L}_\pi$, 
$r(x) \phi \in\mathbb{L}_\pi$.   
Thus, $r|_{\mathbb{L}_\pi}$ (resp. $r_*|_{\mathbb{L}_\pi}$)
%$$(x, \phi)\in G\times \Phi_{\pi} \mapsto r_x\phi \in \Phi_{\pi} $$ 
is a representation of $G$ (resp. $\g$) on $\mathbb{L}_\pi$. 
\end{lemma}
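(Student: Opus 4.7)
The plan is to verify closure directly on the generators $\pi^{ij}$ and then extend by linearity. Fix $x \in G$ and $1 \le i, j \le p$. I would first compute
\[
(r(x)\pi^{ij})(g) = \pi^{ij}(gx^{-1}) = \langle v_i, \pi(gx^{-1}) v_j \rangle = \langle v_i, \pi(g)\pi(x^{-1}) v_j \rangle,
\]
where the last equality uses that $\pi$ is a group homomorphism. Since $\{v_k\}_{k=1}^{p}$ spans $V$, the vector $\pi(x^{-1}) v_j \in V$ admits an expansion $\pi(x^{-1})v_j = \sum_{k=1}^{p} a_{kj}(x) v_k$ for some real coefficients $a_{kj}(x)$ (the coefficients need not be unique if $\{v_k\}$ is only a spanning set rather than a basis, but any choice works). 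Substituting this expansion and pulling the scalars outside the inner product yields
\[
(r(x)\pi^{ij})(g) = \sum_{k=1}^{p} a_{kj}(x) \langle v_i, \pi(g) v_k \rangle = \sum_{k=1}^{p} a_{kj}(x) \pi^{ik}(g),
\]
which is a real linear combination of the $\pi^{ik}$'s and hence lies in $\mathbb{L}_\pi$.

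Next I would extend by linearity: an arbitrary element $\phi = \sum_{i,j} c_{ij} \pi^{ij} \in \mathbb{L}_\pi$ satisfies $r(x)\phi = \sum_{i,j,k} c_{ij} a_{kj}(x) \pi^{ik} \in \mathbb{L}_\pi$. This proves that $\mathbb{L}_\pi$ is invariant under $r(x)$ for every $x \in G$, so $r|_{\mathbb{L}_\pi} : G \to \operatorname{Aut}(\mathbb{L}_\pi)$ is well-defined. The homomorphism property $r(xy) = r(x) r(y)$ and $r(e) = \operatorname{id}$ are inherited from $r$ on ${\rm C}^\omega(G)$. Analyticity in $(x, \phi) \mapsto r(x)\phi$ follows because $\pi$ is analytic, so the coefficients $a_{kj}(x)$ can be chosen to depend analytically on $x$ (for instance, by picking a basis out of the spanning set and using Cramer's rule), making $r|_{\mathbb{L}_\pi}$ an analytic finite-dimensional group representation. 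The induced Lie algebra representation $r_*|_{\mathbb{L}_\pi}$ on $\mathbb{L}_\pi$ is then automatic.

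I do not anticipate a genuine obstacle here; the argument is essentially bookkeeping on the definition of a matrix coefficient combined with the homomorphism property of $\pi$. The only mildly delicate point is that $\{v_k\}_{k=1}^{p}$ is assumed only to span $V$, not to be a basis, so the coefficients $a_{kj}(x)$ are not canonical; this ambiguity is harmless because it disappears after pairing with $v_i$ under $\langle \cdot, \cdot \rangle$ on $V$ evaluated on $\pi(g)$-images, and any choice of expansion produces the same function $r(x)\pi^{ij}$.
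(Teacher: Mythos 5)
Your proof is correct and follows essentially the same route as the paper's: compute $(r(x)\pi^{ij})(g)=\langle v_i,\pi(g)\pi(x^{-1})v_j\rangle$, expand $\pi(x^{-1})v_j$ in the spanning set $\{v_k\}$, and pull the coefficients out to exhibit $r(x)\pi^{ij}$ as a linear combination of the $\pi^{ik}$. The paper writes the expansion coefficients explicitly as $\sum_{l,k}c_{lk}\pi^{lj}(x^{-1})v_k$ rather than as abstract $a_{kj}(x)$, but this is a cosmetic difference; your added remarks on linearity, the homomorphism property, and analyticity are sound and consistent with the paper.
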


\begin{proof} 
The lemma follows directly from computation. For any $x\in G$ and any $g\in G$, 
$$
(r(x)\pi^{ij})(g) = \pi^{ij}(gx^{-1}) =  \langle v_i, \pi(gx^{-1}) v_j \rangle =  \langle v_i,  \pi(g)\pi(x^{-1}) v_j  \rangle.
$$
Since $\{v_1,\ldots, v_p\}$ spans $V$, there exist real coefficients $c_{lk}$'s such that 
$$\pi(x^{-1}) v_j = \sum^p_{l,k= 1} c_{lk}\langle v_l, \pi(x^{-1})v_j \rangle v_k = \sum^p_{l,k= 1} c_{lk} \pi^{lj}(x^{-1}) v_k.$$ 
It then follows that 
$$
(r(x)\pi^{ij})(g) = \sum^p_{l,k =1}\left(c_{lk} \pi^{lj}(x^{-1})  \right )\pi^{ik}(g),
$$
which implies that $r(x)\pi^{ij}$ is a linear combination of $\pi^{ik}$ for $k = 1,\ldots, p$.  
\end{proof}

\begin{remark} {\em 
Let $G$ be compact, and the representation $\pi$ of $G$ on $V$ be irreducible and unitary. Further, let $\{v_i\}^p_{i = 1}$ be an orthonormal basis of $V$. Then, by Peter-Weyl Theorem~\cite[Sec.~IV]{knapp2013lie},  the matrix coefficients $\{\pi^{ij}\}^p_{i, j = 1}$ are linearly independent. }
\end{remark}

We now address the second step of our strategy about finding a finite subset $\{\phi^j\}^l_{j = 1}$ out of~$\mathbb{L}_\pi$ so that it is codistinguished to a given set of left-invariant vector fields $\{L_{X_i}\}^m_{i = 1}$. To proceed, we first have the following definition as a dual to Def.~\ref{def:distinguishedlieset}: 

\begin{definition}\label{def:codishv}
Let $\pi$ be a finite dimensional representation of $G$ on $V$, and $\pi_*$ be the corresponding Lie algebra representation. A spanning set $\{v_j\}^p_{j = 1}$ of $V$ 
is {\bf codistinguished} to a subset $\{X_i\}^m_{i =1}$ of~$\g$ if it satisfies the following properties: 
\begin{enumerate}
\item[\em 1)] The set of one-forms $\{d\pi^{ij}_e\}^p_{i,j = 1}$ spans $T^*_eG \approx \g^*$.
\item[\em 2)] For any $X_i$ and $v_j$, there exist a $v_k$ and a real number $\lambda$ such that 
\begin{equation}\label{eq:codistrep}
\pi_*(X_i) v_j = \lambda v_k;
\end{equation}
conversely, for any $v_k$, there exist $X_i$, $v_j$, and a {\em nonzero} $\lambda$ such that~\eqref{eq:codistrep} holds.
\item[\em 3)] For any $g, g'\in G$, if $\pi^{ij}(g) = \pi^{ij}(g')$ for all $1\le i, j \le p$, then $g = g'$.
\end{enumerate}
If only {\em 1)} and {\em 2)} hold, then $\{v_j\}^p_{j =1}$ is {\bf weakly codistinguished} to $\{X_i\}^m_{i = 1}$. 
\end{definition}

%Note that the subset $\{X_i\}^m_{i = 1}$ in the above definite is not necessarily distinguished. 
With the above definition, we now have the following fact: %that will be useful for finding a codistinguished set out of $\mathbb{L}_\pi$: 

\begin{lemma}\label{lem:langlang}
If $\{v_j\}^p_{j = 1}$ is codistinguished to $\{X_i\}^m_{i = 1}$, then the set of matrix coefficients $\{\pi^{ij}\}^p_{i,j = 1}$ is codistinguished to the set of left-invariant vector fields $\{L_{X_i}\}^m_{i = 1}$.
\end{lemma}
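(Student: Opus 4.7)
The whole argument hinges on the single identity
\begin{equation*}
(L_X\pi^{ij})(g) \;=\; \frac{d}{dt}\Big|_{t=0}\langle v_i,\pi(g\exp(tX))v_j\rangle \;=\; \langle v_i,\pi(g)\pi_*(X)v_j\rangle,
\end{equation*}
valid for every $X\in\g$ and every $g\in G$. Once this is in hand, each of the three items of Def.~\ref{def:codisth} will be read off from the corresponding item of Def.~\ref{def:codishv}.

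First I would verify item~2. Fix $L_{X_i}$ and $\pi^{i'j}$. By item~2 of Def.~\ref{def:codishv}, there exist $v_k$ and $\lambda\in\R$ with $\pi_*(X_i)v_j=\lambda v_k$; substituting into the identity yields $L_{X_i}\pi^{i'j}(g)=\lambda\langle v_{i'},\pi(g)v_k\rangle=\lambda\pi^{i'k}(g)$. Conversely, given any $\pi^{i'k}$, the converse in item~2 of Def.~\ref{def:codishv} supplies $X_i$, $v_j$ and a \emph{nonzero} $\lambda$ with $\pi_*(X_i)v_j=\lambda v_k$, and the same computation gives $L_{X_i}\pi^{i'j}=\lambda\pi^{i'k}$. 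Item~3 is immediate, since the spanning (hence separating on matrix coefficients) condition on $\{v_i\}$ already appears verbatim as item~3 of Def.~\ref{def:codishv}.

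The main step will be item~1: showing that $\{d\pi^{ij}_g\}_{i,j}$ spans $T^*_gG$ for every $g\in G$, not just at $g=e$. Using left-invariant vector fields as a global frame, it suffices to prove that if $X\in\g$ satisfies $\langle v_i,\pi(g)\pi_*(X)v_j\rangle=0$ for all $i,j$, then $X=0$. Since $\{v_j\}$ spans $V$ this forces $\langle v_i,\pi(g)\pi_*(X)w\rangle=0$ for every $w\in V$, and since $\{v_i\}$ spans $V$ it forces $\pi(g)\pi_*(X)=0$ on $V$; invertibility of $\pi(g)$ then gives $\pi_*(X)=0$. The assumption in item~1 of Def.~\ref{def:codishv} that $\{d\pi^{ij}_e\}$ spans $T^*_eG$ is, via the identity at $g=e$, precisely the statement that $\pi_*:\g\to\operatorname{End}(V)$ is injective, so $X=0$ as required. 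This reduction of the spanning condition at arbitrary $g$ to the spanning condition at the identity, using only left-invariance of the frame and invertibility of $\pi(g)$, is the one place where a little care is needed; everything else is a direct translation.
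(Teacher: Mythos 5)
Your proof is correct and follows the same route as the paper: each item of Def.~\ref{def:codisth} is read off the corresponding item of Def.~\ref{def:codishv} via the identity $(L_X\pi^{ij})(g) = \langle v_i,\pi(g)\pi_*(X)v_j\rangle$ (the paper carries a minus sign here, which only flips $\lambda$ to $-\lambda$ and is immaterial). The one minor variation is in item~1: the paper identifies the span of $\{d\pi^{ij}_g\}$ in $\g^*$ with the span of $\{d\pi^{ij}_e\}$ via composition with $\pi(g^{\pm1})$, whereas you compute the common kernel of $\{d\pi^{ij}_g\}$ directly and show it equals $\ker\pi_*$ for every $g$, then invoke item~1 of Def.~\ref{def:codishv} to conclude $\ker\pi_*=\{0\}$; both hinge on the same fact (invertibility of $\pi(g)$), and your version is marginally more self-contained since it avoids having to check that the map between spans is well defined.
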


\begin{proof}
We show below that if $\{v_j\}^p_{j = 1}$ is codistinguished to $\{X_i\}^m_{i = 1}$, then the three items of Def.~\ref{def:codisth} are satisfied. 
\begin{enumerate}
\item[\em 1)] For the first item of Def.~\ref{def:codisth}, we show that for any $g\in G$, the one-forms $\{d\pi^{ij}_g\}^p_{i, j = 1}$ span $T^*_g G$. 
With slight abuse of notation, we write 
%Note that the two sets one-forms $d\pi^{ij}_g$ can be identified with an element in $\g^*$ as follows: 
$$
d\pi^{ij}_g(X):= d\pi^{ij}_g(gX) =  -\langle v_i, \pi(g) \pi_*(X) v_j\rangle, \quad \forall X\in \g. 
$$ 
In this way, each one-forms $d\pi^{ij}_g$ can be viewed as an element in~$\g^*$. 
But then, the two subspaces of $\g^*$: $\span \{d\pi^{ij}_e\}^p_{i, j = 1}$ and $\span \{d\pi^{ij}_g\}^p_{i, j = 1}$ are isomorphic: 
%\ce{
%a <=>[\ce{\pi(g)}][\ce{\pi(g^{-1})}] b
%}
$$\sum^p_{i, j = 1} c_{ij}\langle v_i, \pi_*(\cdot )v_j \rangle \xrightleftharpoons[\pi(g^{-1})]{\pi(g)} \sum^p_{i, j = 1} c_{ij} \langle v_i, \pi(g)\pi_*(\cdot )v_j \rangle  $$
%\xmapsto{\pi(g^{-1})} \sum^p_{i, j = 1} c_{ij} \langle v_i, \pi_*(\cdot )v_j \rangle $$ 
The first item of Def.~\ref{def:codisth} then follows from the first item of Def.~\ref{def:codishv}.

\item[\em 2)] For the second item of Def.~\ref{def:codisth}, it suffices to show that if $\pi_*(X_i) v_j = \lambda v_k$, then  
$
L_{X_i}\pi^{qj} =  -\lambda \pi^{qk}
$ for any $q = 1,\ldots, p$. This holds because
$$
(L_{X_i}\pi^{qj} )(g) = -\langle v_q,  \pi(g)\pi_*(X_i) v_j \rangle = -\lambda \langle v_q, \pi(g) v_k\rangle = -\lambda \pi^{qk}(g).
$$
\item[\em 3)] The third item of Def.~\ref{def:codisth} directly follows from the third item of Def.~\ref{def:codishv}.
\end{enumerate}
This completes the proof.
\end{proof}

We have thus provided a constructive approach for generating a set of matrix coefficients that is (weakly) codistinguished to a given set of left-invariant vector fields.   
The same approach can be slightly modified to generate a set of functions codistinguished to a set of {\em right-invariant} vector fields. We provide  details in the following remark:    %which will be useful in Subsection~\S\ref{ssec:homogeneousspace} for generating codistinguished functions on homogeneous spaces.  

\begin{remark}\label{rmk:rightinvariantvec}{\em 
%To construct a set of codistinguished functions to the right-invariant vector fields $\{R_{X_i}\}^m_{i = 1}$, 
%Note that the Lie algebra $\g$ can be also identified with {\em right-invariant} vector fields $\{R_X\}_{X\in \g}$. We have  $R_{[X, Y]} = -[R_X, R_Y]$ for all $X, Y\in \g$. 
We first recall that the {\em left-regular representation} of~$G$ is given by 
$$
(x, \phi) \in G\times {\rm C}^\omega(G) \mapsto  (l(x)\phi)(g) := \phi(xg),
$$
The corresponding Lie algebra representation is given by $l_*(X)\phi = R_X\phi$. We again let~$\pi$ be a representation of $G$ on a finite-dimensional inner-product space~$(V, \langle \cdot, \cdot \rangle)$, and $\{v_i\}^p_{i = 1}$ be a spanning set of~$V$. 
%Let $\pi$ be a Lie algebra representation of $\g$ on an inner-product space $V$. 
%Similarly, to come up with a set of codistinguished functions to a set of right-invariant vector fields $\{R_{X_i}\}^m_{i = 1}$, 
We next define functions on~$G$ as follows: %(compare with~\eqref{eq:matrixcoeffi}): 
\begin{equation}\label{eq:modifiedmatrixco}
\tilde \pi^{ij}(g):= \langle v_i,  \pi(g^{-1}) v_j\rangle, \quad \forall 1\le i, j \le p.     
\end{equation}
Let $ \mathbb{L}_{\tilde \pi}$ be the $\R$-span of these $\tilde \pi^{ij}$. The same arguments in the proof of Lemma~\ref{lem:matrixcoeff} can be used here to show that $\mathbb{L}_{\tilde \pi}$ is closed under~$l(x)$ for all $x\in G$. %Moreover, Lemma~\ref{lem:langlang} still holds, i.e., 
Furthermore, if the set~$\{v_j\}^p_{j = 1}$ is chosen to be codistinguished to~$\{X_i\}^m_{i = 1}$, then similar arguments in the proof of Lemma~\ref{lem:langlang} can be used to show that the set of functions $\{\tilde \pi^{ij}\}^p_{i,j = 1}$ is codistinguished to the set of right-invariant vector fields $\{R_{X_i}\}^m_{i = 1}$. }
\end{remark}

In summary,  we have shown in the subsection %Lemma~\ref{lem:matrixcoeff} 
that a finite dimensional representation $\pi$ of~$G$ on an inner-product space $V$ can be used to generate a set of matrix coefficients codistinguished to a given set of left- (or right-) invariant vector fields provided that the assumption of Lemma~\ref{lem:langlang} is satisfied.

%gives rise to a Lie algebra representation $r_*|_{\Phi_\pi}$ (resp. $l_*|_{\tilde \Phi_\pi}$), where $\Phi_\pi$ (resp. $\tilde \Phi_\pi$) is a finite dimensional subspace of ${\rm C}^\omega(G)$ spanned by the matrix coefficients $\pi^{ij}$ defined in~\eqref{eq:matrixcoeffi} (resp.  $\tilde \pi^{ij}$ defined in~\eqref{eq:modifiedmatrixco}). We have further shown %in Lemma~\ref{lem:langlang} 
%that if the subset $\{v_j\}^p_{j = 1}$ is chosen such that it is codistinguished to a selected subset $\{X_i\}^m_{i = 1}$ of $\g$, then the resulting matrix coefficients $\{\pi^{ij}\}^p_{i,j = 1}$ (resp. $\{\tilde \pi^{ij}\}^p_{i,j = 1}$) is codistinguished to $\{L_{X_i}\}^m_{i = 1}$ (resp.  $\{R_{X_i}\}^m_{i = 1}$). 

\subsection{On the adjoint representation}\label{ssec:adjointrepresentationcodist}
We follow the discussions in the previous subsection, and consider here the adjoint representation of~$G$ on~$\g$, i.e., $\pi = \Ad$ and $V = \g$. We show that in this special case, there indeed exists a set of matrix coefficients (weakly) codistinguished to a distinguished set of left- (or right-) invariant vector fields.  

To proceed, we first recall that $B(X, Y) = \tr(\ad_X\ad_Y)$ is the Killing form, $\theta$ is a Cartan involution of~$\g$, and $B_\theta(X, Y) = -B(X, \theta Y)$ is an inner-product on~$\g$ (introduced in Section~\S\ref{ssec:distinguishedsetofliealgebra}). We also recall that by Prop.~\ref{thm:basisliealg}, there exists a distinguished set $\{X_i\}^m_{i = 1}$ out of~$\g$. We fix such a set in the sequel. Note, in particular, that by Def.~\ref{def:distinguishedlieset}, the distinguished set $\{X_i\}^m_{i = 1}$ spans~$\g$. Now, we follow the two-step strategy proposed in the previous section and define a set of matrix coefficients $\{\phi^{ij}\}^m_{i,j = 1}$ as follows: 
\begin{equation}\label{eq:defpsiij}
\phi^{ij} (g):= \Ad^{ij}(g) = B_\theta(\Ad(g)X_j, X_i ), \quad 1\le i, j \le m,  
\end{equation}
which is nothing but specializing~\eqref{eq:matrixcoeffi} to the case of adjoint representation. 
To further illustrate~\eqref{eq:defpsiij}, we take advantage of the following fact~\cite[Prop.~6.28]{knapp2013lie}:

\begin{lemma}\label{lem:realmatrices}
Every semi-simple real Lie algebra $\g$ is isomorphic to a Lie algebra of real matrices that is closed under transpose, with the Cartan involution $\theta$  carried to negative transpose, i.e., $\theta X = - X^\top$ for all $X\in \g$.
\end{lemma}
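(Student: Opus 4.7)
The plan is to realize $\g$ concretely as $\ad(\g)$ acting on itself, using the inner product $B_\theta$ as the reference inner product. Since $\g$ is semi-simple its center is trivial, so the adjoint map $\ad:\g\to \operatorname{End}(\g)$ is injective, and hence $\g\cong \ad(\g)$ as real Lie algebras. Equipping $\g$ with $B_\theta$ (which is positive definite by the definition of a Cartan involution) and choosing an orthonormal basis identifies $\operatorname{End}(\g)$ with the algebra of $(\dim\g)\times(\dim\g)$ real matrices, and the matrix transpose with the $B_\theta$-adjoint operation on $\operatorname{End}(\g)$.

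The key computation is then to show that for every $X\in\g$,
\[
\ad(\theta X) \;=\; -\ad(X)^{*},
\]
where $*$ denotes the $B_\theta$-adjoint. Concretely: pick $Y,Z\in\g$ and unfold using the paper's convention $\ad(X)(\cdot)=[\cdot,X]$,
\[
B_\theta(\ad(X)Y,Z)=B_\theta([Y,X],Z)=-B([Y,X],\theta Z).
\]
Using invariance $B([Y,X],\theta Z)=B(Y,[X,\theta Z])$ of the Killing form, and then the fact that $\theta$ is a Lie algebra automorphism so $[X,\theta Z]=\theta[\theta X, Z]$, one gets
\[
B_\theta(\ad(X)Y,Z)=-B(Y,\theta[\theta X,Z])=B_\theta(Y,[\theta X,Z])=-B_\theta(Y,\ad(\theta X)Z),
\]
which is the desired identity.

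With this in hand, one concludes the statement in two short steps. First, $\ad(\g)$ is closed under transpose: if $M=\ad(X)\in\ad(\g)$, then $-M^\top$ is the matrix representing $\ad(\theta X)$, which lies in $\ad(\g)$ since $\theta X\in\g$. Second, under the isomorphism $\g\xrightarrow{\ad}\ad(\g)$, the involution $\theta$ is transported to the map $\ad(X)\mapsto \ad(\theta X)=-\ad(X)^\top$, i.e.\ negative transpose. Transferring the Lie bracket of $\ad(\g)$ back to $\g$ gives the claimed embedding.

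The only subtle step is the sign/convention bookkeeping in the adjoint computation above; in particular one must be careful with the convention $\ad(X)Y=[Y,X]$ (rather than $[X,Y]$) used in the paper, and with the two applications of $\theta$ needed to move it across the bracket. Everything else is a direct consequence of semi-simplicity (to get injectivity of $\ad$) and of the positive-definiteness of $B_\theta$ (to get an orthonormal basis in which transpose is $B_\theta$-adjoint).
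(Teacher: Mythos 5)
Your proof is correct. Note that the paper does not actually prove this lemma---it is quoted from Knapp, Prop.~6.28---and your argument is precisely the standard proof of that proposition: embed $\g$ via the (injective, by semi-simplicity) adjoint representation and observe that in a $B_\theta$-orthonormal basis one has $\ad(\theta X)=-\ad(X)^\top$. Your sign bookkeeping is consistent with the paper's conventions ($\ad(X)=[\cdot,X]$ and $B_\theta(X,Y)=-B(X,\theta Y)$), and the closure of $\ad(\g)$ under transpose follows as you say from $\ad(X)^\top=\ad(-\theta X)$.
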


Note that under the above isomorphism, the $+1$-eigenspace~$\mathfrak{k}$ and the $-1$-eigenspace $\pp$ of~$\theta$ (introduced in Section~\S\ref{ssec:distinguishedsetofliealgebra}) correspond to the subspace of skew-symmetric matrices and the subspace of symmetric matrices, respectively.

We note that for a given semi-simple Lie algebra~$\g$ of real matrices, the Killing form $B(X, Y)$ is linearly proportional to $\tr(XY)$, i.e., $B(X, Y) = c \tr(X Y)$ for a real positive constant~$c$.  
%On the other hand, note that Lemma~\ref{lem:realmatrices} does not apply to the group level, i.e., not every semi-simple Lie group is isomorphic to a matrix Lie group. For example, the metaplectic group is a double cover of the symplectic group, yet is not a matrix Lie group.  
Now, suppose that $G$ is isomorphic to a matrix Lie group; then, it follows from Lemma~\ref{lem:realmatrices} that one can re-write~\eqref{eq:defpsiij} as follows:
\begin{equation}\label{eq:matrixrealization}
\phi^{ij} (g) = c\tr(g X_j g^{-1} X_i^\top).
\end{equation}
In particular, it generalizes the functions $\{\phi^{ij}\}_{1\le i, j \le 3}$ on $\SO(3)$ introduced in Example~\ref{emp:example1} to functions on an arbitrary matrix semi-simple Lie group. 
However, we shall note that not every semi-simple Lie group is isomorphic to a matrix Lie group. Nevertheless, the expression~\eqref{eq:defpsiij} is always valid.  %For example, the metaplectic group is a double cover of the symplectic group, yet is not a matrix Lie group. Thus, one cannot always write~\eqref{eq:defpsiij} in terms of~\eqref{eq:matrixrealization}. 

Recall that a center $Z(G)$ of a group~$G$ is defined such that if $z\in G$, then~$z$ commutes with every group element~$g$ of~$G$, i.e., 
$$Z(G):= \{z\in G \mid zg = gz, \quad \forall g\in G\}.$$ 
Let $\phi:=(\ldots,\phi^{ij},\ldots)$ be the collective of $\phi^{ij}$. Similarly, for any group element $g\in G$, we let $[g]_\phi$ be the pre-image of $\phi(g)$.  
We now have the following result:  

\begin{theorem}\label{thm:distinguished}
Let $\{X_i\}^m_{i = 1}$ be a distinguished  set of~$\g$.  
Then, the set of matrix coefficients $\{\phi^{ij}\}^m_{i,j = 1}$ defined in~\eqref{eq:defpsiij} is weakly codistinguished to  $\{L_{X_i}\}^m_{i = 1}$. Moreover, 
\begin{equation}\label{eq:finiteambiguityG}
[g]_\phi=\{gz \mid z\in Z(G)\}, \quad \forall g\in G.
\end{equation} 
In particular, $\{\phi^{ij}\}^m_{i,j = 1}$ is codistinguished to~$\{L_{X_i}\}^m_{i = 1}$ if and only if $Z(G)$ is trivial. 
\end{theorem}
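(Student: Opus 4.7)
The plan is to factor the argument through Lemma~\ref{lem:langlang}: with $\pi = \Ad$ and $V = \g$, the matrix coefficients $\phi^{ij}$ coincide with the $\Ad^{ij}$, so it suffices to verify that the spanning set $\{X_j\}_{j=1}^{m}$ of $\g$ is weakly codistinguished to $\{X_i\}_{i=1}^{m}$ in the sense of Def.~\ref{def:codishv}, and then separately to establish~\eqref{eq:finiteambiguityG}. Item~(ii) of Def.~\ref{def:codishv} is immediate: $\Ad_*(X_i)X_j = \ad(X_i)X_j = [X_j, X_i]$, and by Def.~\ref{def:distinguishedlieset} this equals $\lambda X_k$ for some $k$ and some $\lambda \in \R$; the converse direction of item~(ii) is likewise built into the definition of a distinguished Lie set.

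For item~(i), I would differentiate $\Ad(\exp(tX))$ at $t = 0$ to obtain
\[
d\Ad^{ij}_e(X) \;=\; B_\theta([X_j, X], X_i), \qquad X \in \g.
\]
Writing $B_\theta(\cdot,\cdot) = -B(\cdot,\theta\cdot)$ and invoking the associativity $B([A,B],C) = B(A,[B,C])$ of the Killing form, this functional rearranges to $X \mapsto B([X_j, \theta X_i], X)$. Since $\g$ is semi-simple, $B$ is non-degenerate, so $Y \mapsto B(Y,\cdot)$ is a linear isomorphism $\g \to \g^*$; hence $\{d\Ad^{ij}_e\}_{1 \le i,j \le m}$ spans $\g^*$ iff $\{[X_j, \theta X_i] : 1 \le i, j \le m\}$ spans $\g$. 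Now $\{X_j\}_{j=1}^{m}$ spans $\g$ by Def.~\ref{def:distinguishedlieset}, and because $\theta$ is a Lie algebra automorphism it carries this spanning set to another spanning set $\{\theta X_i\}_{i=1}^{m}$; consequently their pairwise brackets span $[\g,\g] = \g$ by semi-simplicity. Lemma~\ref{lem:langlang} then delivers weak codistinguishedness of $\{\phi^{ij}\}_{i,j=1}^m$ to $\{L_{X_i}\}_{i=1}^m$.

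To prove~\eqref{eq:finiteambiguityG}, note first that if $g' = gz$ with $z \in Z(G)$, then conjugation by $z$ is trivial on $G$, hence $\Ad(z) = \operatorname{id}_\g$ and $\Ad(g') = \Ad(g)\Ad(z) = \Ad(g)$, so $\phi^{ij}(g') = \phi^{ij}(g)$ for all $i,j$. Conversely, suppose $\phi^{ij}(g) = \phi^{ij}(g')$ for all $i,j$. Non-degeneracy of $B_\theta$ combined with the spanning of $\{X_i\}_{i=1}^m$ (by varying $i$) gives $\Ad(g)X_j = \Ad(g')X_j$, and varying $j$ (which also spans $\g$) forces $\Ad(g) = \Ad(g')$, so $g^{-1}g' \in \ker(\Ad)$. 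The key classical ingredient is that for a connected Lie group $G$ one has $\ker(\Ad) = Z(G)$: $\Ad(h) = \operatorname{id}_\g$ implies that the conjugation automorphism $C_h$ is the identity on $\exp(\g)$, and connectedness of $G$ forces $C_h$ to be the identity on all of $G$. This yields $g^{-1}g' \in Z(G)$, i.e., $g' \in gZ(G)$, establishing~\eqref{eq:finiteambiguityG}. The full codistinguishedness (item~(iii) of Def.~\ref{def:codisth}) then becomes the statement that $gZ(G) = \{g\}$ for every $g \in G$, which is equivalent to $Z(G)$ being trivial.

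The only genuinely delicate step is the spanning argument in item~(i), which simultaneously uses non-degeneracy of the Killing form, $\theta$-stability of $\g$, and semi-simplicity through $[\g,\g] = \g$; the remainder is structural bookkeeping built around the standard identification $\ker(\Ad) = Z(G)$ for connected $G$.
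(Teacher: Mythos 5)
Your proposal is correct and follows essentially the same route as the paper: you reduce via Lemma~\ref{lem:langlang} to checking that $\{X_j\}$ is weakly codistinguished to $\{X_i\}$ under the adjoint representation, verify the spanning of $\{d\Ad^{ij}_e\}$ by the ad-invariance and non-degeneracy of the Killing form together with $[\g,\g]=\g$, and obtain~\eqref{eq:finiteambiguityG} from $\ker(\Ad)=Z(G)$ for connected~$G$. The only cosmetic differences are that the paper works with the positive-definite form $B_\theta$ where you use non-degeneracy of $B$, and that you spell out both inclusions in~\eqref{eq:finiteambiguityG} whereas the paper's Lemma~\ref{lem:checkitem4} states the converse direction without proof.
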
 

\begin{remark}\label{rmkk:distinguished}{\em 
Note that if one aims to construct a set of  functions codistinguished to the right-invariant vector fields $\{R_{X_i}\}^m_{i =1}$; then, by Remark~\ref{rmk:rightinvariantvec}, one can simply define functions as follows:   
\begin{equation}\label{eq:anotherversion}
\tilde \phi^{ij}(g) := B_\theta(\Ad(g^{-1})X_j, X_i), \quad \forall 1\le i, j \le m. 
\end{equation}
If one replaces in the statement $L_{X_i}$ with $R_{X_i}$ and correspondingly, $\phi^{ij}$ with $\tilde \phi^{ij}$, then Theorem~\ref{thm:distinguished} will still hold.  
}
\end{remark}

%Theorem~\ref{thm:mthm4} then directly follows from Theorem~\ref{thm:distinguished}.

%From Theorem~\ref{thm:distinguished}, the center $Z(G)$ can be thought as a measure of ambiguity of the estimation  ensemble system. 
Theorem~\ref{thm:mthm4} then follows from Prop.~\ref{thm:basisliealg} and Theorem~\ref{thm:distinguished}. We establish Theorem~\ref{thm:distinguished} in the next subsection. 
We provide below a brief discussion about the center of~$G$.  
First, note that since $\g$ is semi-simple, the center $Z(G)$  is discrete. If, further, $G$ is compact, then $Z(G)$ is finite.  %Centers of (semi-)simple real Lie groups have been extensively investigated.
 We also note that for each semi-simple real Lie algebra~$\g$, there corresponds a unique (up to isomorphism) connected group $G$ whose Lie algebra is~$\g$ and has a trivial center. So, for these Lie groups, $\{L_{X_i}\}^m_{i = 1}$ and $\{\phi^{ij}\}^m_{i,j = 1}$ are jointly  distinguished. 
We further present below the centers of a few commonly seen connected matrix Lie groups:
\begin{enumerate}
\item[\em 1)] If $G = \SU(n)$ is the special unitary group, then  $Z(G) = \left \{ zI \mid z^n = 1, z\in \C \right \}$.   
\item[\em 2)] If $G = \SL(n, \R)$ is the special linear group or if $G = \SO(n)$ is the special orthogonal group, then 
$$
Z(G) = 
\left\{
\begin{array}{ll}
	\{I\} & \mbox{ if } n \mbox{ is odd}, \\
	\{\pm I\} & \mbox{ if } n \mbox{ is even}. 
\end{array}
\right.
$$
%\{I\}$ for $n$ is odd and $Z(G) = \{\pm I\}$ if $n$ is even.
\item[\em 3)] Similarly, if $G = \SO^+(p,q)$ is the identity component of indefinite orthogonal group $\operatorname{O}(p,q)$ (e.g., the Lorentz group $\operatorname{O}(1,3)$), then 
$$
Z(G) = 
\left\{
\begin{array}{ll}
	\{I\} & \mbox{ if } p + q \mbox{ is odd}, \\
	\{\pm I\} & \mbox{ if } p + q \mbox{ is even}. 
\end{array}
\right.
$$
\item[\em 4)] If $G = \SP(2n, \R)$ is the symplectic group, then $Z(G) = \{\pm I_{2n}\}$. 
\end{enumerate}

\subsection{Analysis and proof of Theorem~\ref{thm:distinguished}}\label{ssec:proofofcodist}  
We establish in the subsection Theorem~\ref{thm:distinguished}. 
By Lemma~\ref{lem:langlang}, it suffices to show that the subset $\{X_i\}^m_{i = 1}$ of~$\g$ is codistinguished to itself with respect to the adjoint representation.  This fact will be established after a sequence of lemmas.  For convenience, we reproduce below the set of functions $\{\phi^{ij}\}^m_{i,j = 1}$ that will be investigated in the subsection: 
$$
\phi^{ij} (g):= \Ad^{ij}(g) = B_\theta(\Ad(g)X_j, X_i ), \quad 1\le i, j \le m.  
$$
We show below that the set $\{\phi^{ij}\}^m_{i,j = 1}$ satisfies the three items of Def.~\ref{def:codishv} under the assumption of Theorem~\ref{thm:distinguished}. The arguments we will use below generalize the ones used in Example~\ref{emp:example1}.  
 For the first item of Def.~\ref{def:codishv}, we have the following fact: 

\begin{lemma}\label{lem:checkitem1}
 The set of one-forms $\{d\phi^{ij}_e\}^m_{i, j = 1}$ spans the cotangent space $T^*_eG \approx \g^*$. 
\end{lemma}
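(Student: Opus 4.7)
My plan is to prove the equivalent dual statement: the only $X \in \g \cong T_eG$ annihilated by every $d\phi^{ij}_e$ is $X = 0$. I would first compute the differential explicitly by differentiating along the one-parameter subgroup $t \mapsto \exp(tX)$:
\[
d\phi^{ij}_e(X) \;=\; \frac{d}{dt}\Big|_{t=0} B_\theta\bigl(\Ad(\exp(tX))X_j,\, X_i\bigr) \;=\; B_\theta\bigl([X_j, X],\, X_i\bigr),
\]
where I invoke the convention $\ad(X)(Y) = [Y, X]$ established in Section~\S\ref{sec:definitionsandnotations} to identify the derivative of $\Ad(\exp(tX))X_j$ at $t = 0$ with $[X_j, X]$.

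Next, suppose $X \in \g$ satisfies $d\phi^{ij}_e(X) = 0$ for every pair $(i,j)$. Since $\{X_i\}^m_{i=1}$ is a distinguished set of $\g$, item~1 of Definition~\ref{def:distinguishedlieset} guarantees that it spans $\g$. Letting $i$ vary with $j$ fixed therefore yields $B_\theta([X_j, X], Y) = 0$ for every $Y \in \g$, and non-degeneracy of the inner product $B_\theta$ forces $[X_j, X] = 0$. Letting $j$ vary and using the spanning property a second time produces $[Z, X] = 0$ for every $Z \in \g$, so $X$ lies in the center of $\g$.

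Since $\g$ is semi-simple, its center is trivial, so $X = 0$, which proves $\bigcap_{i,j} \ker d\phi^{ij}_e = 0$ and hence the lemma. I do not anticipate a serious obstacle; the argument is essentially a translation to arbitrary semi-simple $\g$ of the reasoning used for $\SO(3)$ in Example~\ref{emp:example1}, with the negative-definite trace form there replaced by the Cartan-involution-twisted Killing form $B_\theta$, and with the simplicity of $\so(3)$ replaced by the vanishing of the center of an arbitrary semi-simple Lie algebra. The only mild subtlety is getting the sign conventions straight between $\ad$, the Lie bracket on $\g$, and the matrix commutator; everything else is linear algebra on a non-degenerate inner product.
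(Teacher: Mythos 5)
Your proof is correct and takes a genuinely different (dual) route from the paper. You argue that the common kernel of the one-forms $\{d\phi^{ij}_e\}$ is trivial: spanning of $\{X_i\}$ plus non-degeneracy of $B_\theta$ yield $[X_j,X]=0$ for each $j$, a second use of spanning gives $X\in Z(\g)$, and semi-simplicity forces $X=0$. The paper instead works on the \emph{image} side: after rewriting $d\phi^{ij}_e = B_\theta(\cdot,[X_i,\theta X_j])$ using adjoint-invariance of the Killing form and the fact that the Cartan involution $\theta$ is an automorphism, it shows that the elements $\hat X_{ij}=[X_i,\theta X_j]$ span $\g$ (since $\{X_i\}$ and $\{\theta X_j\}$ both span and $[\g,\g]=\g$), whence the $B_\theta$-dual one-forms span $\g^*$. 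The two proofs use the same ingredients (spanning of the distinguished set, non-degeneracy of $B_\theta$) and appeal to equivalent consequences of semi-simplicity ($Z(\g)=0$ versus $[\g,\g]=\g$); yours is arguably a bit leaner because it never needs to move $\theta$ around via adjoint-invariance, while the paper's rewriting explicitly produces the elements of $\g$ whose $B_\theta$-duals are the given one-forms, which is somewhat more informative.
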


\begin{proof}
%We identify the tangent space $T_e G$ with~$\g$. 
First, note that for any $X\in \g$, we have 
$$
d\phi^{ij}_e (X)= B_\theta ([X_j, X], X_i)  = - B([X_j, X], \theta X_i).
$$
Because the Killing form is adjoint-invariant, i.e.,   
$
%B(\Ad(g) X, \Ad(g) Y) = B(X, Y) \quad \mbox{ and } \quad 
B([X, Y], Z) =  B(X, [Y, Z]) 
$  for any $X, Y, Z\in\g$, 
it follows that 
$$
d\phi^{ij}_e (X) = -B([X_j, X], \theta X_i) = \\ - B(X, [\theta X_i,   X_j])   = B_\theta(X, [X_i, \theta  X_j]), 
$$
where the last equality holds because $\theta$ is a Lie algebra automorphism with $\theta^2 = {\rm id}$ and, hence, $\theta [\theta X_i, X_j] = [X_i, \theta X_j]$.  

For convenience, we let $Y_j := \theta X_j$ for all $j = 1,\ldots, m$. Since $\theta$ is a Lie algebra automorphism and $\{X_i\}^m_{i = 1}$ spans $\g$ (because it is distinguished), the subset $\{Y_j\}^m_{j = 1}$ spans~$\g$ as well. Also, note that  
 $\g$ is semi-simple and, hence, $[\g, \g] = \g$.  This, in particular, implies that $\{\hat X_{ij}:=[X_i, Y_j]\}^m_{i, j = 1}$ is also a spanning set of $\g$. It now remains to show that the set of one-forms  $\{B_\theta(\cdot, \hat X_{ij})\}^m_{i, j = 1}$ spans $\g^*$. 
%$$
%\span\left \{B_\theta(\cdot, \hat X_{ij})\right \} = \g^*.
%$$
%%where $\g^*$ is dual space of $\g$, i.e., it is the collection of linear forms on~$\g$. 
But, this follows from the fact that~$B_\theta$ is positive definite on~$\g$; indeed, a nondegenerate bilinear form induces a linear isomorphism between~$\g$ and~$\g^*$. 
Since the set $\{\hat X_{ij}\}^m_{i,j = 1}$ spans~$\g$, the set of one-forms  $\{B_\theta(\cdot, \hat X_{ij})\}^m_{i, j = 1}$ spans~$\g^*$. 
\end{proof}

%Lemma~\ref{lem:checkitem1} then implies that the set $\{\phi^{ij}\}^m_{i,j = 1}$ defined in~\eqref{eq:defpsiij} satisfies item~1 of Def.~\ref{def:codishv}. 
For the second item of Def.~\ref{def:codishv}, we have the following fact: 

\begin{lemma}\label{lem:damowang}
 If $[X_i, X_j] = \lambda X_k$, then $L_{X_i} \phi^{i'j} =- \lambda \phi^{i'k}$ for all $i' =1,\ldots, m$. 
\end{lemma}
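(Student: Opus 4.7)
The plan is to prove the identity $L_{X_i}\phi^{i'j} = -\lambda \phi^{i'k}$ by direct computation of the Lie derivative at an arbitrary $g \in G$, exploiting the multiplicative behavior of $\Ad$ and the derivative relation between $\Ad$ and $\ad$.

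First, I would unwind the definition of the Lie derivative along a left-invariant vector field:
\begin{equation*}
(L_{X_i}\phi^{i'j})(g) = \left.\frac{d}{dt}\right|_{t=0} \phi^{i'j}(g \exp(tX_i)) = \left.\frac{d}{dt}\right|_{t=0} B_\theta\bigl(\Ad(g\exp(tX_i)) X_j,\, X_{i'}\bigr).
\end{equation*}
Since $\Ad$ is a group homomorphism, I can factor $\Ad(g\exp(tX_i)) = \Ad(g)\,\Ad(\exp(tX_i))$, and then pull $\Ad(g)$ and $B_\theta(\cdot, X_{i'})$ out of the derivative (both are $t$-independent and linear in their relevant slot). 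This reduces the computation to differentiating $\Ad(\exp(tX_i))X_j$ at $t=0$.

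Next, I would invoke the definition of $\ad$ as the derivative of $\Ad$ at the identity, being careful about the paper's sign convention $\ad(X)(\cdot) = [\cdot, X]$. This yields
\begin{equation*}
\left.\frac{d}{dt}\right|_{t=0} \Ad(\exp(tX_i)) X_j = \ad(X_i)(X_j) = [X_j, X_i] = -[X_i,X_j].
\end{equation*}
Substituting the hypothesis $[X_i, X_j] = \lambda X_k$ gives $[X_j, X_i] = -\lambda X_k$, and plugging back in produces
\begin{equation*}
(L_{X_i}\phi^{i'j})(g) = B_\theta\bigl(\Ad(g)(-\lambda X_k),\, X_{i'}\bigr) = -\lambda\, B_\theta\bigl(\Ad(g) X_k,\, X_{i'}\bigr) = -\lambda\, \phi^{i'k}(g),
\end{equation*}
which is the claimed identity since $g$ was arbitrary.

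The computation is essentially mechanical once the sign conventions are pinned down; I do not anticipate a genuine obstacle. The one spot to handle with care is the paper's nonstandard convention $\ad(X)(Y) = [Y,X]$ (with the matching relation $\pi_*([X,Y]) = \pi_*(Y)\pi_*(X) - \pi_*(X)\pi_*(Y)$), which is exactly what produces the minus sign appearing in the conclusion. Everything else is linearity and the chain rule.
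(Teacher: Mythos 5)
Your proposal is correct and follows the same route as the paper: the paper's one-line proof asserts the identity $(L_{X_i}\phi^{i'j})(g) = B_\theta(\Ad(g)[X_j,X_i], X_{i'})$ directly and then substitutes $[X_j,X_i]=-\lambda X_k$, which is exactly what you obtain after unwinding the Lie derivative along the flow of $L_{X_i}$ and using $\ad(X_i)(\cdot)=[\cdot,X_i]$. Your version is simply more explicit about the differentiation step and the paper's sign conventions.
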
 

\begin{proof}
The lemma directly follows from computation: 
$$
(L_{X_i} \phi^{i'j})(g)  = B_\theta(\Ad(g)[X_j, X_i], X_{i'}) = -\lambda B_\theta(\Ad(g)X_k, X_{i'}) = -\lambda \phi^{i'k}(g),
$$  
which holds for any $g\in G$. 
\end{proof}

Combining Lemmas~\ref{lem:checkitem1} and~\ref{lem:damowang}, we have that the set of functions $\{\phi^{ij}\}^m_{i,j = 1}$ is weakly codistinguished to the set  of left-invariant vector fields~$\{L_{X_i}\}^m_{i = 1}$. 
Finally, for~\eqref{eq:finiteambiguityG}, we have the following fact: 

\begin{lemma}\label{lem:checkitem4}
If $\phi^{ij}(g) = \phi^{ij}(g')$ for all $1\le i,j \le m$, then $g^{-1}g' \in  Z(G)$ and vice versa.
\end{lemma}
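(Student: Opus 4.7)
The plan is to reduce the equality of matrix coefficients to an equality of adjoint operators, and then invoke the standard fact that for a connected Lie group the kernel of the adjoint representation is exactly the center.

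First I would establish the easier ``vice versa'' direction. Assume $g^{-1}g' = z \in Z(G)$, so $g' = gz$. Since $z$ commutes with every element of $G$, the conjugation map $h \mapsto zhz^{-1}$ is the identity on $G$, and hence its derivative at $e$, which is $\Ad(z)$, is the identity on $\g$. Therefore
\[
\Ad(g') = \Ad(g)\Ad(z) = \Ad(g),
\]
which immediately gives $\phi^{ij}(g') = B_\theta(\Ad(g')X_j, X_i) = B_\theta(\Ad(g)X_j, X_i) = \phi^{ij}(g)$ for all $i,j$.

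For the main direction, suppose $\phi^{ij}(g) = \phi^{ij}(g')$ for all $1 \le i,j \le m$. Rewriting, this says $B_\theta(\Ad(g)X_j - \Ad(g')X_j, X_i) = 0$ for every $i,j$. Because $\{X_i\}^m_{i=1}$ is distinguished, it spans $\g$, so this extends to $B_\theta(\Ad(g)X_j - \Ad(g')X_j, Y) = 0$ for every $Y \in \g$. Since $B_\theta$ is positive definite on $\g$, and in particular nondegenerate, this forces $\Ad(g)X_j = \Ad(g')X_j$ for every $j$. Using that $\{X_j\}^m_{j=1}$ spans $\g$ once more, we conclude $\Ad(g) = \Ad(g')$, equivalently $\Ad(g^{-1}g') = \operatorname{id}_\g$.

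The final step is to identify $\ker \Ad$ with $Z(G)$. For a connected Lie group, this is standard: if $\Ad(h) = \operatorname{id}$, then $h \exp(X) h^{-1} = \exp(\Ad(h)X) = \exp(X)$ for all $X \in \g$, so $h$ commutes with every element of $\exp(\g)$; since $G$ is connected, $\exp(\g)$ generates $G$, so $h \in Z(G)$. Conversely, elements of $Z(G)$ clearly lie in $\ker \Ad$ by the argument in the first paragraph. Hence $g^{-1}g' \in Z(G)$, completing the proof. The only non-routine point is the appeal to connectedness of $G$ in identifying $\ker\Ad$ with $Z(G)$; this is the hypothesis of Theorem~\ref{thm:mthm4}, and without it the statement would need to be restricted to the identity component.
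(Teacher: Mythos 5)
Your proof is correct and follows the same route as the paper's: use positive definiteness of $B_\theta$ and the spanning property of the distinguished set $\{X_i\}^m_{i=1}$ to force $\Ad(g^{-1}g') = \operatorname{id}_\g$, then use connectedness of $G$ to identify $\ker\Ad$ with $Z(G)$. You spell out the $\ker\Ad = Z(G)$ step and the converse direction a bit more explicitly than the paper does, but the argument is the same.
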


\begin{proof}
We fix a $j = 1,\ldots, m$ and have the following:      
$$\phi^{ij}(g) - \phi^{ij}(g') = B_\theta(\Ad(g) X_j - \Ad(g') X_j, X_i) = 0, \quad \forall i = 1,\ldots, m.$$
Since $B_\theta$ is positive definite and $\{X_i\}^m_{i = 1}$ spans~$\g$, 
$\Ad(g) X_j = \Ad(g') X_j$. This holds for all $j = 1,\ldots, m$. Using again the fact that $\{X_j\}^m_{j = 1}$ spans~$\g$, we obtain  
$\Ad(g^{-1}g') X = X$ for all $X\in \g$. Thus, $g^{-1}g'$ belongs to the centralizer of the identity component of $G$. 
Since $G$ is connected, this holds if and only if $g^{-1}g'\in Z(G)$.
\end{proof}

%We have thus established Theorem~\ref{thm:distinguished}.

\subsection{On homogeneous spaces}\label{ssec:homogeneousspace}
 Let a group $G$ act on a manifold~$M$. We say that the group action is {\em transitive} if for any $x, y\in M$, there exists a group element $g\in G$ such that $g x = y$. Correspondingly, the manifold~$M$ is said to be a {\bf homogeneous space} of~$G$. Note that any homogeneous space can be identified with the space $G/H$ of left cosets $gH$ for $H$ a closed Lie subgroup of $G$. More specifically, we pick an arbitrary point $x\in M$, and let $H$ be the subgroup of $G$ which leaves~$x$ fixed (i.e., $H$ is the stabilizer of~$x$). Then, $M$ is diffeomorphic to~$G/H$, and we write $M \approx G/H$. The group action can  thus be viewed as a map by sending a pair $(g, g'H)$ to $gg'H$. We also note that the homogeneous space $M$ can be equipped with a unique analytic structure (see~\cite[Thm.~4.2]{helgason2001differential}).  
 
 We address in the subsection the existence of distinguished vector fields and codistinguished functions on homogeneous spaces of a semi-simple Lie group.  
We provide at the end of the subsection a simple example in which the unit sphere $S^2 \approx \SO(3)/\SO(2)$ is considered.

{\em Existence of distinguished vector fields.}
There is a canonical way of translating a distinguished set $\{X_i\}^m_{i = 1}$ of the Lie algebra~$\g$ to a distinguished set of vector fields over a homogeneous space~of $G$. 
Precisely, we define a map $\tau: \g \to \mathfrak{X}(M)$ as follows:   
Let $\exp: \mathfrak{g} \to G$ be the exponential map. 
For a given~$X\in \mathfrak{g}$, we define a vector field $\tau(X)\in \mathfrak{X}(M)$ such that for any $\phi \in {\rm C}^\omega(M)$, the following hold:
\begin{equation}\label{eq:liealghomo}
(\tau(X) \phi)(x) :=  \lim_{t\to 0} \frac{\phi(\exp(tX) x) - \phi(x)}{t}, \quad \forall x\in M.
\end{equation}
%If $M$ is embedded in $\R^n$, then the above definition can be simplified as follows:
%$$
%(\tau(X) \phi)(x) := \lim_{t\to 0} \frac{\exp(tX) x- x}{t}, \quad \forall x\in M.
%$$
Let $X_i$ and $X_j$ be any two elements in $\mathfrak{g}$. It is known~\cite[Chapter~2.3]{helgason2001differential}) that
\begin{equation}\label{eq:liealgetovecfld}
[\tau(X_i), \tau (X_j)] = -\tau([X_i, X_j]),
\end{equation}
which then leads to the following result: 

\begin{proposition}\label{pro:distonhomo}
Let $G$ be a semi-simple Lie group with $\g$ the Lie algebra, and~$M$ be a homogeneous space of~$G$.   
If $\{X_i\}^m_{i = 1}$ is a distinguished set of~$\mathfrak{g}$, then $\{\tau(X_i)\}^m_{i = 1}$ is a distinguished set of vector fields over~$M$.    
\end{proposition}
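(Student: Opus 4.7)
The plan is to verify the two items of Def.~\ref{def:distvecf} for the pushed-forward set $\{\tau(X_i)\}_{i=1}^m$, exploiting the two structural facts that are already available: the transitivity of the $G$-action on $M$ (which controls pointwise spanning) and the anti-homomorphism identity~\eqref{eq:liealgetovecfld} (which controls the Lie bracket relations).

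For item~1, I would fix an arbitrary $x \in M$ and consider the orbit map $\mu_x: G \to M$, $g \mapsto g \cdot x$. Since $G$ acts transitively, the differential $(d\mu_x)_e: T_eG \approx \g \to T_x M$ is surjective; indeed, identifying $M$ with $G/H$ where $H$ is the stabilizer of $x$, this is just the differential of the canonical projection $G \to G/H$ at the identity, and its image is all of $T_{eH}(G/H) \approx T_x M$. A direct unpacking of~\eqref{eq:liealghomo} shows that $(d\mu_x)_e(X) = \tau(X)(x)$, so the map $X \mapsto \tau(X)(x)$ is surjective from $\g$ onto $T_x M$. Because $\{X_i\}_{i=1}^m$ spans $\g$ (by virtue of being distinguished, see Def.~\ref{def:distinguishedlieset}), the images $\{\tau(X_i)(x)\}_{i=1}^m$ span $T_x M$, and this holds for every~$x$.

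For item~2, I would simply combine~\eqref{eq:liealgetovecfld} with the distinguished structure on $\g$. Given any pair $X_i, X_j$, there exist $X_k$ and $\lambda \in \R$ with $[X_i, X_j] = \lambda X_k$, so $[\tau(X_i), \tau(X_j)] = -\tau([X_i, X_j]) = -\lambda\, \tau(X_k)$, yielding the first half of item~2 with scalar $-\lambda$. Conversely, given any $\tau(X_k)$, distinguishedness of $\{X_i\}$ in $\g$ produces $X_i, X_j$ and a nonzero $\lambda$ with $[X_i, X_j] = \lambda X_k$, and applying $\tau$ and~\eqref{eq:liealgetovecfld} again yields $[\tau(X_i), \tau(X_j)] = -\lambda\, \tau(X_k)$ with $-\lambda \neq 0$, as required.

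The only delicate point I anticipate is that $\tau$ need not be injective: its kernel is the largest ideal of $\g$ contained in the isotropy subalgebra, so some $\tau(X_i)$ might vanish as a vector field. However, this causes no trouble in either half of item~2. If $\tau(X_k) = 0$ then the converse requirement $[\tau(X_i),\tau(X_j)] = -\lambda\,\tau(X_k) = 0$ with $-\lambda \neq 0$ is trivially met by the same $(X_i, X_j, \lambda)$ furnished by the $\g$-level distinguished structure; and if either $\tau(X_i)$ or $\tau(X_j)$ vanishes, the bracket $[\tau(X_i), \tau(X_j)]$ is zero and can be written as $0 \cdot \tau(X_k)$ for any $k$, which is consistent with Def.~\ref{def:distvecf}. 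Thus no non-degeneracy hypothesis on the action (e.g.\ effectiveness) is needed, and the proof essentially reduces to transporting the Lie-algebraic relations through the anti-homomorphism $\tau$.
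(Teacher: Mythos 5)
Your proof is correct and takes essentially the same route as the paper. The paper establishes item~1 of Def.~\ref{def:distvecf} by selecting a subset $\{X_i\}^{m'}_{i=1}$ of the distinguished set whose span $\mathfrak{m}$ is a complement to the isotropy algebra~$\h$ and arguing that $(a_1,\ldots,a_{m'})\mapsto \exp\bigl(\sum_i a_i X_i\bigr)x$ is a local diffeomorphism; your argument via surjectivity of $(d\mu_x)_e$ is the same fact packaged more abstractly. Item~2 is handled in both proofs by pushing the $\g$-level bracket relations through the anti-homomorphism~\eqref{eq:liealgetovecfld}; you are more careful in spelling out the sign change and in verifying the nonzero-$\lambda$ requirement in the converse direction, and your side remark that a nontrivial kernel of $\tau$ (when the action is not effective) does not obstruct either item is a correct refinement that the paper leaves unaddressed.
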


\begin{proof}
It suffices to show that $\{\tau(X_i)(x)\}^m_{i = 1}$ spans the tangent space $T_x M$ for all $x\in M$. Let $H$ be the stabilizer of~$x$, and $\h$ be the corresponding Lie algebra of~$H$. Since $\{X_i\}^m_{i = 1}$ spans $\g$, there must exist a subset of $\{X_i\}^m_{i = 1}$, say $\{X_i\}^{m'}_{i = 1}$, such that if we let $\mathfrak{m}:= \span\{X_i\}^{m'}_{i = 1}$, then $\g = \mathfrak{m} \oplus \h$. Moreover, the following map:
$$
(a_1, \ldots, a_{m'}) \in \R^{m'} \mapsto \exp\left (\sum^{m'}_{i = 1} a_i X_i \right )x \in M
$$
is locally a diffeomorphism around $0\in \R^{m'}$ to an open neighborhood of $x\in M$. This, in particular, implies that $\{\tau(X_i)(x)\}^{m'}_{i = 1}$ is a basis of the tangent space $T_x M$. 
\end{proof}

We provide below an example for illustration:

%\begin{Example}
%{\em
%Let $\SL(n)$ be the special linear group acting on $\R^n$, which  sends a pair $(L,x)$, with $L\in \SL(n)$ and $x\in \R^n$, to $Lx$. The group action is known to be transitive. Then, the map $\tau:\mathfrak{sl}(n)\to \mathfrak{X}(R^n)$   sends a matrix $L$ of zero trace to the vector field $\tau(L)(x) = Lx$. If we choose the set $A_1$ defined in~\eqref{eq:sln} to be our distinguished set for $\sl(n)$. Then,  
%the corresponding vector fields are given by 
%$$
%\tau(H_{ij})(x) = x_i e_i - x_j e_j, \quad 1\le i< j \le n,
%$$ 
%together with
%$$
%\tau(E_{ij})(x) = x_j e_i, \quad 1\le i\neq j\le n,
%$$
%all of which form a distinguished set of vector fields over the Euclidean space $\R^{n}$.  
%}
%\end{Example}

\begin{example}\label{exmp:soncase}{\em  
Let $\SO(n)$ act on $S^{n-1}$ by sending a pair $(g, x)\in  \SO(n)\times S^{n-1}$ to $g x\in S^{n-1}$. The group action is  transitive. The map $\tau: \so(n) \to \mathfrak{X}(S^{n-1})$ defined in~\eqref{eq:liealghomo} sends a skew-symmetric matrix $\Omega$ to the vector field $\tau(\Omega)(x) = \Omega x$. 
We define a set of skew-symmetric matrices as follows: $\Omega_{ij} := e_ie_j^\top - e_j e_i^\top$ for $1\le i < j\le n$.  
It directly follows from computation that $\{\Omega_{ij}\}_{ 1\le i < j \le n }$ is a distinguished set of $\mathfrak{so}(n)$. Moreover, we have that   
$\tau(\Omega_{ij})(x) = x_j e_i - x_i e_j$ where $x_i$'s are the coordinates of~$x$. By Prop.~\ref{pro:distonhomo}, $\{\tau(\Omega_{ij})\}_{1\le i < j\le n}$ is a set of distinguished vector fields over~$S^{n-1}$. }
\end{example}

{\em Existence of codistinguished functions.} We now discuss how to translate a set of codistinguished functions defined on a Lie group $G$ to a set of codistinguished functions on its homogeneous space~$M\approx G/H$. 
We consider below for the case where the closed subgroup $H$ is compact.

 We say that a function $\phi\in {\rm C}^\omega(G)$ is {\em $H$-invariant} if for any $g\in G$ and $h\in H$, we have $\phi(gh) = \phi(g)$. In particular, if $\phi$ is $H$-invariant, then one can simply define a function $ \psi$~on $M$ by $ \psi(g H) := \phi(g)$. This is well defined because if $gH =g'H$, then $g^{-1}g'$ belongs to $H$ and, hence,  $\phi(g) = \phi(g g^{-1}g') = \phi(g')$. Thus, without any ambiguity,  we can treat an $H$-invariant $\phi$ as a function defined on $M$ as well.   

If a function $\phi$ is {\em not} $H$-invariant, then one can construct an $H$-invariant function by averaging $\phi$ over the subgroup $H$. Since $H$ is  compact, we equip $H$ with the normalized Haar measure~\cite[Ch.~VIII]{knapp2013lie}, i.e., $\int_H {\bf 1}_H dh = 1$. We then define a function on $G$ (and on $M$) by averaging the given function $\phi$ over~$H$ as follows:
\begin{equation}\label{eq:averaging}
\bar \phi(g) := \int_H \phi(g h) dh.
\end{equation} 
It should be clear that $\bar \phi$ is $H$-invariant; indeed, for any $h'\in H$, we have  
$$
\bar \phi(gh') :=\int_H \phi(gh' h) dh = \int_H\phi(g h) d(h'^{-1} h) =  \int_H\phi(g h) d h =  \bar \phi(g).
$$
Note that if $\phi$ itself is $H$-invariant, then $\bar \phi = \phi$. We now have the following fact:

%\begin{remark}
%Let $G$ be a matrix Lie group and $H$ and be a compact subgroup of $G$. Let $\{\phi^{ij}\}^m_{i,j = 1}$ be defined in~\eqref{eq:defpsiij}. Then, we can express each $\bar \phi^{ij}$ explicitly as follows:   
%$$\bar \phi^{ij}(g) := \int_H \tr(ghX_jh^{-1}g^{-1} X^\top_i)dh.$$ 
%The above integral is closely related to the Harish-Chandra-Itzykson-Zuber integral formula~\cite{harish1957differential,itzykson1980planar}: there, one integrates a moment generating function: 
%$$\int_G \exp(t\tr(gXg^{-1}Y))dg$$ over a compact group~$G$ (often, $G = \SU(n)$ or $G = \SO(n)$). %We defer the analysis of the integral to another occasion.  
%%Existing results on the integral formula~\cite{duistermaat1982variation,atiyah1984moment,alfaro1995itzykson} and random matrices~\cite{dyson1962brownian,tao2012random,terrencetaoweb} will be useful for computing~$\bar \phi^{ij}$.  
%\end{remark}

\begin{lemma}\label{lem:passontohomo}
Let $\{\phi^j\}^l_{j = 1}$ be a set of functions on $G$ codistinguished to a set of right-invariant vector fields $\{R_{X_i}\}^m_{i = 1}$.  
If $R_{X_i} \phi^j =\lambda \phi^k$, then $\tau(X_i) \bar \phi^j = \lambda\bar\phi^k$. 
\end{lemma}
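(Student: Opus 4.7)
The plan is to reduce the statement on $M$ to an equivalent statement on $G$ through two compatibility observations. First, for any $H$-invariant $\phi\in {\rm C}^\omega(G)$, viewed equivalently as a function on $M\approx G/H$, the vector field $\tau(X)$ on $M$ acts as the right-invariant field $R_X$ on $G$:
\[
(\tau(X)\phi)(gH) \;=\; \lim_{t\to 0}\frac{\phi(\exp(tX)gH) - \phi(gH)}{t} \;=\; \lim_{t\to 0}\frac{\phi(\exp(tX)g) - \phi(g)}{t} \;=\; (R_X\phi)(g),
\]
where the middle equality uses $H$-invariance (so that $\phi$ depends only on the coset), and the last equality uses that the flow of $R_X$ is left multiplication by $\exp(tX)$. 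Second, the averaging operator $\phi\mapsto\bar\phi$ commutes with $R_X$, i.e.\ $R_X\bar\phi = \overline{R_X\phi}$, because the flow of $R_X$ acts by left multiplication, which commutes pointwise with the right multiplications $g\mapsto gh$ appearing inside the averaging integral over $H$. Interchanging $d/dt$ with $\int_H(\cdot)\,dh$ is justified by analyticity of $\phi$ together with compactness of $H$.

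Granted these two observations, the lemma follows in one line. Applying the averaging operator to the hypothesis $R_{X_i}\phi^j = \lambda\phi^k$ and invoking the second observation yields $R_{X_i}\bar\phi^j = \overline{R_{X_i}\phi^j} = \lambda\bar\phi^k$ as an identity of functions on $G$. Since $\bar\phi^j$ and $\bar\phi^k$ are $H$-invariant by construction, the first observation converts this into the identity $\tau(X_i)\bar\phi^j = \lambda\bar\phi^k$ on $M$, as desired.

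The argument is short and encounters no serious obstacle. The only point worth emphasizing is that the averaging step is essential: the original identity $R_{X_i}\phi^j = \lambda\phi^k$ need not involve $H$-invariant functions and therefore does not by itself descend to $M$, but averaging produces $H$-invariant functions whose differential relation is preserved precisely because of the commutation between $R_X$ and right translation.
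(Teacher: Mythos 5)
Your proposal is correct and follows essentially the same route as the paper: the paper's one-line computation $(\tau(X_i)\bar\phi^j)(gH) = \int_H (R_{X_i}\phi^j)(gh)\,dh = \lambda\int_H\phi^k(gh)\,dh = \lambda\bar\phi^k(gH)$ implicitly combines your two observations---that $\tau(X_i)$ acts as $R_{X_i}$ on $H$-invariant lifts, and that $R_{X_i}$ commutes with averaging over $H$. You have merely made those two steps explicit and supplied the routine justification for exchanging the derivative with the integral.
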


\begin{proof}
The lemma directly follows from computation:  
$$
(\tau(X_i) \bar \phi^j) (gH)  = \int_H (R_{X_i}\phi^j)(gh)d h = \lambda \int_H \phi^k(gh) dh = \lambda \bar \phi^k(gH),
$$
which holds for all $gH \in M\approx G/H$.  
\end{proof}

%Note, in particular, that if the two functions $\bar \phi^j$ and $\bar \phi^k$ in the above Lemma are treated as functions on $G/H$, then $\tau(X_i) \bar \phi^j = \lambda\bar\phi^k$. 
%Thus, if $\{\phi^{j}\}^l_{j = 1}$ is (weakly) codistinguished to $\{R_{X_i}\}^m_{i = 1}$, then by Lemma~\ref{lem:passontohomo}, the second item of Def.~\ref{def:codisth} holds for $\{\bar \phi^j\}^l_{j = 1}$ with respect to the induced vector fields $\{\tau(X_i)\}^m_{i = 1}$ over~$M$. 
Thus, if the set of one-forms $\{d\bar \phi^{j}_x\}^l_{j = 1}$ spans $T^*_xM$, then, by Lemma~\ref{lem:passontohomo}, $\{\bar \phi^{j}\}^l_{j = 1}$ is (weakly) codistinguished to $\{\tau(X_i)\}^m_{i = 1}$.  We provide below an example for illustration.

\begin{example} {\em 
Let $\SO(3)$ act $S^2$ by sending $(g, x) \in \SO(3)\times S^2$ to $gx\in S^2$. Let~$H$ be a subgroup of $G$ defined as follows: 
$$
H = \left \{ h(\theta): = 
\begin{bmatrix}
1 & 0 & 0\\
0 & \cos(\theta) & \sin(\theta) \\
0 & -\sin(\theta) & \cos(\theta)
\end{bmatrix}
\mid \theta \in [0,2\pi)\right\} \approx \SO(2).
$$
It follows that $H$ is the stabilizer of the vector $e_1\in S^{2}$.  
Let $\{X_i\}^3_{i = 1}$ and $\{\phi^{ij}\}^3_{i, j = 1}$ be given in Example~\ref{emp:example1}, i.e., 
$$
\left\{
\begin{array}{lr}
X_i = e_je_k^\top - e_k e_j^\top, & \quad \mbox{where } \det(e_i, e_j, e_k) = 1, \vspace{3pt}\\
\phi^{ij}(g) = \tr(g X_j g^\top X_i^\top), & \quad 1\le i, j \le 3.
\end{array}
\right. 
$$
%We have shown that $\{\phi^{ij}\}^3_{i,j = 1}$ is codistinguished to $\{L_{X_i}\}$. In fact, it is codistinguished to both  $\{L_{X_i}\}$ and $\{R_{X_i}\}^3_{i = 1}$. We omit here the computational details. 
%By computation (with details omitted), $\{\phi^{ij}\}^3_{i,j = 1}$ is codistinguished to $\{R_{X_i}\}^3_{i = 1}$ (and it is codistinguished to $\{L_{X_i}\}^3_{i = 1}$ as well).  
%We note that by~\eqref{eq:matrixrealization}, $\phi^{ij}(g)$ is, up to scaling, equal to $B_\theta(\Ad(g) X_j, X_i)$. We also note that   
%by~\eqref{eq:anotherversion}, we have, up to the same scaling, that 
%$$
%\tilde \phi^{ij}(g) = B_\theta(\Ad(g^{-1}) X_j, X_i) = \tr(g^\top X_j g X_i^\top) = \tr(g X_i g^\top X_j^\top)  = \phi^{ji}(g). 
%$$
%Thus, $\{\phi^{ij}\}_{1\le i,j \le 3} = \{\tilde \phi^{ij}\}_{1\le i, j \le 3}$ and, hence,  by %Theorem~\ref{thm:distinguished} and 
%Remark~\ref{rmkk:distinguished}, $\{\phi^{ij}\}_{1\le i, j \le 3}$ is codistinguished to both $\{L_{X_i}\}^3_{i = 1}$ and  $\{R_{X_i}\}^3_{i = 1}$. 
%%In particular, one can apply Lemma~\ref{lem:passontohomo} and obtain that the set of $H$-invariant functions $\{\bar \phi^{ij}\}_{1\le i, j\le 3}$ is codistinguished to the induced vector fields $\{\tau(X_i)\}^3_{i = 1}$ on~$S^2$.  

%We now compute explicitly the induced vector fields $\{\tau(X_i)\}^3_{i = 1}$ and the averaged $H$-invariant functions $\{\bar \phi^{ij}\}_{1\le i, j\le 3}$. 
Because the set $\{X_i\}^3_{i = 1}$ is distinguished in $\so(3)$, by Prop.~\ref{pro:distonhomo}, it induces a distinguished set of vector fields $\{\tau(X_i)\}^3_{i = 1}$ over $S^2$:    
$$
\tau(X_1)(x) = 
\begin{bmatrix}
0 \\
x_3 \\
- x_2 
\end{bmatrix},
\quad 
 \tau(X_2)(x) = 
\begin{bmatrix}
-x_3 \\
0 \\
 x_1 
\end{bmatrix}, \quad
 \tau(X_3)(x)= 
\begin{bmatrix}
x_2 \\
- x_1 \\
0
\end{bmatrix}. 
$$
These vector fields satisfy the following relationship:  
$$
[\tau(X_i), \tau(X_j)] = - \det(e_i,e_j,e_k)\tau(X_k).  
$$

We next compute the averaged $H$-invariant functions $\{\bar \phi^{ij}\}^3_{ i, j = 1}$.  The normalized Haar measure on $H$, in this case, is simply given by $dh =\nicefrac{d\theta}{2\pi}$. It follows that 
$$
\bar \phi^{ij}(gH) = \frac{1}{2\pi}\int^{2\pi}_0 \tr(gh(\theta) X_j h(\theta)^\top g^\top X_i^\top)d\theta. 
$$
To evaluate the above integral,  we first have the following computational result: 
$$
\frac{1}{2\pi}\int^{2\pi}_{0} h(\theta) X_j h(\theta)^\top d\theta = 
\left\{
\begin{array}{ll}
X_1 & \mbox{ if } j = 1, \\
0 & \mbox{ otherwise.}
\end{array}
\right.
$$
Thus, the nonzero $\bar \phi^{ij}$'s are given by
\begin{equation}\label{eq:computephibar}
\bar \phi^{i}(gH):= \bar\phi^{i 1}(gH) =  \tr(gX_1g^\top X_i^\top), \quad \forall i = 1, 2, 3. 
\end{equation}

Each left coset $gH$ corresponds to the point $x  =  g e_1\in S^2$. Note that $ge_1$ is simply the first column of $g$.  We now compute each function $\bar \phi^i(g)$ and express the results using only the coordinates $x_i$ of~$x$. First,  by computation, we obtain
$$
gX_1 g^\top =
\begin{bmatrix}
0 & c_{31} & c_{21} \\
-c_{31} & 0 & c_{11} \\
-c_{21} & -c_{11} & 0
\end{bmatrix},
$$
where each $c_{ij}$ is the $ij$th entry of the  cofactor matrix $[c_{ij}]$ of $g\in \SO(3)$.  Since $g$ is an orthogonal matrix,   $g = [c_{ij}]$. In particular, $(c_{11},c_{21}, c_{31})$ is the first column of~$g$, i.e., $(c_{11},c_{21}, c_{31}) = ge_1 = (x_1,x_2, x_3)$ and, hence,  
$$
gX_1 g^\top = \begin{bmatrix}
0 & x_3 & x_2 \\
-x_3 & 0 & x_1 \\
-x_2 & -x_1 & 0
\end{bmatrix}.
$$  
Thus, the functions $\bar \phi^i$ in~\eqref{eq:computephibar}, for $i = 1, 2,3$, are nothing but twice the coordinate functions, i.e.,  
$$
\bar \phi^i(x) = 2 x_i. 
$$ 
%i.e., the $\bar \phi^i$'s are twice the standard coordinate functions .  
It should be clear that $\{\bar\phi^j\}^3_{j = 1}$ satisfies items~1 and~3 of Def.~\ref{def:codisth}. For item~2, we have that $\tau(X_i)\bar \phi^j = \det(e_i,e_j,e_k)  \bar \phi^k$. Thus, $\{\bar \phi^j\}^3_{j = 1}$ is codistinguished to $\{\tau(X_i)\}^3_{i = 1}$. 
}
\end{example}

\section{Conclusions}
We introduced in the paper a novel class of ensemble systems, namely distinguished ensemble systems. Every such system is comprised of two key components: a set of distinguished vector fields and a set of (weakly) codistinguished functions. We established in Section~\S\ref{sec:controlobserverdual} the fact (Theorem~\ref{thm:mthm1}) that a distinguished ensemble system is   approximately ensemble path-controllability and (weakly) ensemble observable. Specifically, we have shown that if the set of parametrization functions separates points and contains an everywhere nonzero function, then an ensemble system~\eqref{eq:ensemblesystem}, with jointly distinguished control vector fields and observation functions, is ensemble controllable and observable. %We called any such system a distinguished ensemble. 

We further extend in Section~\S\ref{ssec:preensemble} the  result to a pre-distinguished ensemble system (with pre-distinguished vector fields and pre-codistinguished functions). There, we introduced indicator sequences $\mathbb{N}_i$ and $\mathbb{N}^j$ and established the fact that every such sequence contains an infinite arithmetic sequence as a subsequence. This fact was instrumental in establishing ensemble controllability and observability of a pre-distinguished ensemble system. 

We demonstrated in Section~\S\ref{sec:controlobserverdual} that the structure of a (pre-)distinguished ensemble system can significantly simplify the analyses of ensemble controllability and observability. Moreover,  such a structure can be used as a guiding principle for designing dynamics of a general ensemble system that is controllable and observable.         
%In particular, the analyses carried out in the section showed that the structure of a distinguished ensemble system simplifies the controllability/observability problem by reducing it to a problem of finding a separating set of functions on the parametrization space.     

We proposed and addressed in Section~\S\ref{sec:onliegroup} the problem about existence of distinguished vector fields and codistinguished functions on a given manifold~$M$. 
We provided an affirmative answer for the case where $M$ is a connected, semi-simple Lie group~$G$. Specifically, we showed that every such Lie group~$G$ admits a set of distinguished left- (or right-) invariant vector fields, together with a set of  matrix coefficients that is (weakly) codistinguished to the set of left- (or right-) invariant vector fields. 

The proof was constructive. For distinguished vector fields, we identified the space of left-invariant vector fields with the Lie algebra~$\g$ associated with~$G$. The existence problem was thus addressed on the Lie algebra level. More specifically, we leveraged the result established in~\cite{chen2017distinguished} in which we have shown that every semi-simple real Lie algebra admits a distinguished set.  

For codistinguished functions, we showed in Section~\S\ref{ssec:matrixcoeffi} that a selected set of matrix coefficients associated with a Lie group representation can be made codistinguished to a given set of left- (or right-) invariant vector fields. We then focussed in Section~\S\ref{ssec:adjointrepresentationcodist} on a special representation, namely the adjoint representation. There, we constructed explicitly a set of matrix coefficients, i.e., $\{B_\theta(\Ad(g)X_j, X_i)\}^m_{i,j= 1}$, and showed in Section~\S\ref{ssec:proofofcodist} that it is codistinguished to a set of left-invariant vector fields. %$\{L_{X_i}\}^m_{i = 1}$.  
%In particular, the above problem covers the following as a special case: Given a semi-simple real Lie algebra~$\g$, classify all distinguished sets of it. 

%An open problem (Problem.~1) we posed at the end of Section~\S\ref{ssec:adjointrepresentationcodist} is the following:   
%{\em Given a semi-simple Lie group~$G$ with $\g$ the Lie algebra, classify all representations~$\pi$ of~$G$ on~$V$, together with subsets $\{X_i\}^m_{i = 1}\subset \g$ and $\{v_j\}^p_{j = 1}\subset V$, such that $\{v_j\}^p_{j =1}$ is codistinguished to $\{X_i\}^m_{i = 1}$ with respect to~$\pi$.} The solution to the above problem will classify not only distinguished sets of a given Lie algebra~$\g$, but also codistinguished functions on $G$ that can be realized as matrix coefficients.  

Finally, in Section~\S\ref{ssec:homogeneousspace}, we discussed how to translate distinguished vector fields and codistinguished functions from a semi-simple Lie group $G$ to its homogeneous spaces. %For the set of distinguished vector fields, there is a natural way of doing this~\eqref{eq:liealghomo}. The situation is more complicated for codistinguished functions, and is in fact an ongoing research direction.     
We showed that for distinguished vector fields, there is a canonical way of using a distinguished set of the Lie algebra~$\g$ to induce a set of distinguished vector fields on the homogeneous space. But, for codistinguished functions, the situation is more complicated. We proposed an averaging method which uses an existing set of codistinguished functions on $G$ to generate a set of averaged $H$-invariant functions on the homogeneous space $M \approx G/H$. These averaged functions will be codistinguished to the set of induced vector fields provided that the one-forms $\{d\bar\phi^j_x\}^l_{j = 1}$ span the cotangent space $T^*_xM$ for all $x\in M$. We provided at the end of Section~\ref{ssec:homogeneousspace} a simple example regarding $S^2$ to illustrate the averaging method.

%%%%%%%%%%%%%%%%%%%%%%%%%%%%%%%%%%%%%%%%%%%%%%%%%%%%%%%%%%%

%\bibliographystyle{IEEEtran}

\bibliographystyle{IEEEtran}

\bibliography{Structuretheoryensemblecontrol}

%\section*{Appendix}
%\setcounter{subsection}{0}

\end{document}